\newtheorem{thm}{Theorem}
\crefname{thm}{Thm.}{Thms.}
\newtheorem{prop}[thm]{Proposition}
\crefname{prop}{Prop.}{Props.}
\newtheorem{lem}[thm]{Lemma}
\crefname{lem}{Lem.}{Lems.}
\newtheorem{cor}[thm]{Corollary}
\crefname{cor}{Cor.}{Cors.}
\newtheorem{claim}[thm]{Claim}
\crefname{claim}{claim.}{claims.}
\newtheorem{defi}[thm]{Definition}
\crefname{defi}{Def.}{Defs.}
\newtheorem{ex}[thm]{Example}
\newtheorem{rem}[thm]{Remark}
\crefname{figure}{Fig.}{Figs.}
\crefname{section}{Sect.}{Sects.}
\crefname{appendix}{Appendix}{}
\definecolor[named]{ACMBlue}{cmyk}{1,0.1,0,0.1}
\definecolor[named]{ACMYellow}{cmyk}{0,0.16,1,0}
\definecolor[named]{ACMOrange}{cmyk}{0,0.42,1,0.01}
\definecolor[named]{ACMRed}{cmyk}{0,0.90,0.86,0}
\definecolor[named]{ACMLightBlue}{cmyk}{0.49,0.01,0,0}
\definecolor[named]{ACMGreen}{cmyk}{0.20,0,1,0.19}
\definecolor[named]{ACMPurple}{cmyk}{0.55,1,0,0.15}
\definecolor[named]{ACMDarkBlue}{cmyk}{1,0.58,0,0.21}
\let\orgdescriptionlabel\descriptionlabel
\renewcommand*{\descriptionlabel}[1]{%
    \let\orglabel\label
    \let\label\@gobble
    \phantomsection
    \edef\@currentlabel{#1\unskip}%
    \let\label\orglabel
    \orgdescriptionlabel{#1}%
}
\patchcmd{\footnotemark}{\stepcounter{footnote}}{\refstepcounter{footnote}}{}{}
\newcommand{\set}[1]{\{ #1 \}}
\newcommand{\tuple}[1]{\langle #1 \rangle}
\newcommand{\const}[1]{\mathsf{#1}}
\newcommand{\bl}{\_}
\newcommand{\defeq}{\mathrel{\ensurestackMath{\stackon[1pt]{=}{\scriptscriptstyle\Delta}}}}
\newcommand{\defiff}{\mathrel{\ensurestackMath{\stackon[1pt]{\Leftrightarrow}{\scriptscriptstyle\Delta}}}}
\newcommand{\nat}{\mathbb{N}}
\newcommand{\card}{\mathop{\#}}
\newcommand{\range}[1]{[#1]}
\NewDocumentCommand\word{O{1}}{%
    \ifcase#1
        undefined
    \or w
    \or v
    \or u
    \else undefined

    \fi
}
\newcommand{\len}[1]{\|#1\|}
\NewDocumentCommand\la{O{1}}{%
    \ifcase#1
        undefined
    \or L
    \or K
    \else undefined

    \fi
}
\newcommand{\vsig}{\mathbf{V}}
\NewDocumentCommand\term{O{1}}{%
    \ifcase#1
        undefined
    \or t
    \or s
    \or u
    \else undefined

    \fi
}
\NewDocumentCommand\alg{O{1}}{%
    \ifcase#1
        undefined
    \or \mathcal{A}
    \or \mathcal{B}
    \else undefined

    \fi
}
\NewDocumentCommand\algclass{O{1}}{%
    \ifcase#1
        undefined
    \or \mathcal{C}
    \else undefined

    \fi
}
\newcommand{\sig}{S}
\newcommand{\union}{\mathbin{+}}
\newcommand{\compo}{\mathbin{;}}
\newcommand{\kstar}{*}
\newcommand{\emp}{\const{0}}
\newcommand{\id}{\const{1}}
\newcommand{\eps}{\varepsilon}
\newcommand{\compl}{-}
\newcommand{\domain}[1]{|#1|}
\newcommand{\val}{\mathfrak{v}}
\newcommand{\ljump}[1]{[ #1 ]}
\newcommand{\lang}{\mathsf{lang}}
\newcommand{\LANG}{\mathsf{LANG}}
\newcommand{\KAtermclass}{\mathrm{KA}}
\newcommand{\com}[1]{\overline{#1}}
\newcommand{\langlen}{\mathop{\mathrm{l}}} %
\begin{document}

\markboth{Authors' Names}
{Instructions for Typing Manuscripts $($Paper's Title\/$)$}

\catchline{}{}{}{}{}
\title{Words-to-Letters Valuations for Language Kleene Algebras with Variable and Constant Complements}
\author{Yoshiki Nakamura}
\address{Institute of Science Tokyo, Japan\\
    \email{nakamura.yoshiki.ny@gmail.com}}
\author{Ryoma Sin'ya}
\address{Akita University, Japan\\
    \email{ryoma@math.akita-u.ac.jp}}

\maketitle

\begin{history}
    \received{(Day Month Year)}
    \accepted{(Day Month Year)}
    \comby{(xxxxxxxxxx)}
\end{history}

\begin{abstract}
We investigate the equational theory for Kleene algebra terms with \emph{variable complements} and \emph{constant complements}---(language) complement where it applies only to variables or constants---w.r.t.\ languages.
While the equational theory w.r.t.\ languages coincides with the language equivalence (under the standard language valuation) for Kleene algebra terms, this coincidence is broken if we extend the terms with complements.
In this paper, we prove the decidability of some fragments of the equational theory: the universality problem is coNP-complete, and the inequational theory $t \le s$ is coNP-complete when $t$ does not contain Kleene-star.
To this end, we introduce \emph{words-to-letters valuations};
they are sufficient valuations for the equational theory and ease us in investigating the equational theory w.r.t.\ languages.
Additionally, we show a completeness theorem of the equational theory for words with variable complements and the non-empty constant.
 \end{abstract}

\keywords{Kleene algebra; Language algebra; Equational theory; Complement.}

\section{Introduction}
Kleene algebra (KA) \cite{kleeneRepresentationEventsNerve1951,conwayRegularAlgebraFinite1971} is an algebraic system for regular expressions consisting of union ($\union$), composition ($\compo$), Kleene-star ($\bl^{\kstar}$), empty ($\emp$), and identity ($\id$).
In this paper, we consider KAs \emph{w.r.t.\ languages} (a.k.a., \kl{language models} of KAs, language KAs).
Interestingly, the \kl[equational theory w.r.t.\ languages]{equational theory of KAs w.r.t.\ languages} coincides with the language equivalence under the standard language valuation (see also, e.g., \cite{andrekaEquationalTheoryKleene2011,pousCompletenessTheoremsKleene2022}):
for all KA \kl{terms} (i.e., regular expressions) $\term[1], \term[2]$, we have
\begin{align*}
  \label{equation: LANG = lang}\LANG \models \term[1] = \term[2] \quad \Leftrightarrow \quad \ljump{\term[1]} = \ljump{\term[2]} \tag{$\dagger$}.
\end{align*}
Here, we write $\LANG \models \term[1] = \term[2]$ if the equation $\term[1] = \term[2]$ holds for all \kl{language models} (i.e., each \kl{variable} $x$ maps to not only the singleton language $\set{x}$ but also any \kl{languages});
we write $\ljump{\term[3]}$ for the \kl{language} of a regular expression $\term[3]$ (i.e., each variable $x$ maps to the singleton \kl{language} $\set{x}$).
Since the valuation $\ljump{\bl}$ is an instance of valuations in $\LANG$, the direction $\Rightarrow$ is trivial (this direction always holds even if we extend KA \kl{terms} with some extra operators).
The direction $\Leftarrow$ is a consequence of the completeness of KAs (see \Cref{theorem: LANG and lang} for an alternative proof not relying on the completeness of KAs).
However, the direction $\Leftarrow$ fails when we extend KA \kl{terms} with some extra operators;
thus, the \kl{equational theory w.r.t.\ languages} does not coincide with the language equivalence (see below and \Cref{remark: LANG and lang} for complements).
The \kl{equational theory w.r.t.\ languages} of KAs with some operators was studied,
e.g., with reverse \cite{bloomNotesEquationalTheories1995},
with tests \cite{kozenKleeneAlgebraTests1996} (where languages are of guarded strings, not words),
with intersection ($\cap$) \cite{andrekaEquationalTheoryKleene2011},
with universality ($\top$) \cite{zhangIncorrectnessLogicKleene2022,pousCompletenessTheoremsKleene2022},
and combinations of some of them \cite{brunetReversibleKleeneLattices2017,brunetCompleteAxiomatisationFragment2020}.

Nevertheless, to the best of authors' knowledge, \kl{variable complements} (and even complements) w.r.t. languages has not yet been investigated,
while those w.r.t.\ binary relations were studied, e.g.,\ in \cite{ngRelationAlgebrasTransitive1984} (for complements, cf.\ Tarski's calculus of relations \cite{tarskiCalculusRelations1941}) and \cite{nakamuraExistentialCalculiRelations2023} (for \kl{variable complements}).

In this paper, we investigate the \kl{equational theory} for KA \kl{terms} with \intro*\kl{variable complements} ($\com{x}$) ($x$ denotes a \kl{variable}) and \intro*\kl{constant complements} ($\com{\id}$)---(language) complement, where it applies only to \kl{variables} or \kl{constants}---w.r.t.\ languages; we denote by $\KAtermclass_{\set{\com{x}, \com{\id}}}$ the \kl{terms}.
For $\KAtermclass_{\set{\com{x}, \com{\id}}}$ \kl{terms}, (\ref{equation: LANG = lang}) fails.
The following is a counter-example:
\begin{align*}
  \LANG & \not\models \com{x} = \com{x} \compo \com{x}, & \ljump{\com{x}} & = \ljump{\com{x} \compo \com{x}}.
\end{align*}
($\LANG \not\models \com{x} = \com{x} \compo \com{x}$ is shown by a \kl{valuation} such that $\com{x}$ maps to the language $\set{x}$.
On the other hand, when $\vsig$ denotes the alphabet, $\ljump{\com{x}} = \vsig^* \setminus \set{x} = \ljump{\com{x} \compo \com{x}}$.)
As the example above (see also \Cref{remark: LANG and lang}, for more examples) shows, for $\KAtermclass_{\set{\com{x}, \com{\id}}}$ \kl{terms},
the \kl{equational theory w.r.t.\ languages} significantly differs from the language equivalence under the standard language valuation.
While the language equivalence problem for $\KAtermclass_{\set{\com{x}, \com{\id}}}$ is decidable in PSPACE by a standard automata construction \cite{mcnaughtonRegularExpressionsState1960,thompsonProgrammingTechniquesRegular1968} (and hence, PSPACE-complete \cite{meyerEquivalenceProblemRegular1972, Meyer1973, Hunt1976}),
it remains whether the \kl{equational theory w.r.t.\ languages} is decidable.\footnote{The PSPACE decidability for $\KAtermclass_{\set{\com{x}, \com{\id}}}$ \kl{terms} are recently presented by the first author \cite{nakamuraFiniteRelationalSemantics2025}, by combining the idea of \kl{words-to-letters valuations} and the techniques for relational models in \cite{nakamuraExistentialCalculiRelations2023}.}

We prove the decidability and complexity of some fragments of the \kl{equational theory w.r.t.\ languages} for $\KAtermclass_{\set{\com{x}, \com{\id}}}$ \kl{terms}:
the universality problem is coNP-complete (\Cref{corollary: universality decidable}), and the inequational theory $\term[1] \le \term[2]$ is coNP-complete when $\term[1]$ does not contain Kleene-star (\Cref{corollary: star-free}).
To this end, we introduce \emph{\kl{words-to-letters valuations}}.
\kl{Words-to-letters valuations} are sufficient for the \kl{equational theory w.r.t.\ languages} for $\KAtermclass_{\set{\com{x}, \com{\id}}}$ \kl{terms} (\Cref{corollary: word witness}):
Given \kl{terms} $\term[1], \term[2]$, if some \kl{valuation} refutes $\term[1] = \term[2]$, then some \kl{words-to-letters valuation} refutes $\term[1] = \term[2]$.
This property eases us in investigating the \kl{equational theory w.r.t.\ languages}.

Additionally, we show a completeness theorem of the \kl{equational theory} of $\LANG_{\alpha}$ for the word fragment of $\KAtermclass_{\set{\com{x}, \com{\id}}}$ \kl{terms}
where $\LANG_{\alpha}$ denotes language models over sets of cardinality at most $\alpha$.
A limitation of \kl{words-to-letters valuations} is that the number of \kl{letters} is not bounded, so they may not be compatible with $\LANG_{n}$ where $n$ is a natural number.
For that reason, we give other \kl{valuations} for separating \kl{words} with complement.

\subsection*{Difference with the conference version}
This paper is an extended and revised version of the paper presented at the 16th International Conference on Automata and Formal Languages (AFL 2023) \cite{nakamuraWordstoLettersValuationsLanguage2023}.
The three main differences from the conference version are as follows.
\begin{enumerate}
  \item \label{contribution: 1} We extend \kl{terms} with the complement of the identity constant ($\com{\id}$).\footnote{The universal constant (the complement of the empty constant) $\top$ can be expressed by using the complement of the identity constant (or variables) as $\top = \id \cup \com{\id}$. Thus, we omit $\top$.}
  We can naturally extend the complexity results in \cite{nakamuraWordstoLettersValuationsLanguage2023} while we should carefully treat the empty \kl{word} and non-empty \kl{words} (e.g., \Cref{section: variable inclusion}).
  \item \label{contribution: 2} We strengthen the results of \cite[Thm.\ 35 and 36]{nakamuraWordstoLettersValuationsLanguage2023} from one variable \kl{words} with variable complements to many variables \kl{words} with variable complements and the constant $\com{\id}$ (\Cref{theorem: completeness word LANG0,theorem: completeness word LANG1,theorem: completeness word LANG2,corollary: completeness word LANG2}).
  We had left this problem (more precisely, \Cref{corollary: completeness word LANG2}) open in the conference version \cite{nakamuraWordstoLettersValuationsLanguage2023}.
  While the \kl{equational theory} for \kl{words} with variable complements coincides with the \kl{word} equivalence \cite[Thm.\ 36]{nakamuraWordstoLettersValuationsLanguage2023},
  that for \kl{words} with variable complements and $\com{\id}$ contains non-trivial \kl{equations}, e.g., $\com{\id} x \com{x} \com{\id} = \com{\id} \com{x} x \com{\id}$ (\Cref{example: LANG2}).
  \item \label{contribution: 3} \Cref{section: hierarchy} is new.
  We show that for $\KAtermclass$ with full complement, the \kl{equational theory} of $\LANG_{n}$ does not coincide with \kl[equational theory]{that} of $\LANG_{n+1}$ for each $n \in \nat$.
  For $\KAtermclass$, they are the same \kl{equational theory} for $n \ge 2$.
  We leave open for $\KAtermclass$ with \kl{variable complements} and \kl{constant complements}.
\end{enumerate}
Additionally, some proofs (\Cref{lemma: identity val,lemma: ell abstraction,lemma: ell abstraction gen}) are simplified without induction, based on the alternative semantics using \kl{word} \kl{languages} (\Cref{lemma: lang val}).

\subsection*{Outline}
In \Cref{section: preliminaries}, we briefly give basic definitions, including the syntax and semantics of $\KAtermclass_{\set{\com{x}, \com{\id}}}$ \kl{terms}.
In \Cref{section: identity,section: words_to_letters}, we consider fragments of the \kl{equational theory w.r.t.\ languages} for $\KAtermclass_{\set{\com{x}, \com{\id}}}$ \kl{terms}, step-by-step.
In \Cref{section: identity}, we consider the identity inclusion problem ($\LANG \models \id \le \term$?).
This problem is relatively easy but contains the coNP-hardness result (\Cref{corollary: identity decidable}).
In \Cref{section: words_to_letters}, we consider the variable inclusion problem ($\LANG \models x \le \term$?) and the word inclusion problem ($\LANG \models \word \le \term$?).
For them, we introduce \kl{words-to-letters valuations} (\Cref{definition: val for words to letters}).
Consequently, the \kl[equational theory]{inequational theory} $\term[1] \le \term[2]$ is coNP-complete when $\term[1]$ does not contain Kleene-star (\Cref{corollary: star-free}), including the universality problem ($\LANG \models \top \le \term$?).
Additionally, we show the words-to-letters valuation property (\Cref{corollary: word witness}) for the \kl{equational theory w.r.t.\ languages} for $\KAtermclass_{\set{\com{x}, \com{\id}}}$ \kl{terms}.
In \Cref{section: hierarchy}, we consider the hierarchy of $\LANG_{n}$.
We show that the hierarchy is infinite for $\KAtermclass$ \kl{terms} with full complement, while the hierarchy is collapsed for $\KAtermclass$ \kl{terms}.
In \Cref{section: completeness word}, we consider the \kl{equational theory} for \kl{words} with \kl{variable complements} and the constant $\com{\id}$ and show a completeness theorem (\Cref{theorem: completeness word LANG2}).
\Cref{section: conclusion} concludes this paper.

\section{Preliminaries}\label{section: preliminaries}
We write $\nat$ for the set of non-negative integers.
For $\ell, r \in \nat$, we write $\range{\ell, r}$ for the set $\set{i \in \nat \mid \ell \le i \le r}$.
For a set $X$, we write $\card X$ for the cardinality of $X$ and $\wp(X)$ for the power set of $X$.

For a set $X$ (of \intro*\kl{letters}) and $n \in \nat$, 
we write $X^{\kstar}$ for the set of \intro*\kl{words} over $X$ (finite sequences of elements of $X$).
We write $\len{\word}$ for the \intro*\kl{length} of a \kl{word} $\word$.
We write $X^{n}$ for the set $\set{\word \in X^* \mid \len{\word} = n}$ and write $X^{+}$ for the set $\set{\word \in X^* \mid 1 \le \len{\word}}$.
We write $\eps$ for the empty word.
We write $\word[1] \word[2]$ for the concatenation of \kl{words} $\word[1]$ and $\word[2]$.
A \intro*\kl{language} over $X$ is a subset of $X^{\kstar}$.
We use $\word[1], \word[2]$ to denote \kl{words}
and use $\la[1], \la[2]$ to denote \kl{languages}, respectively.
For \kl{languages} $\la[1], \la[2] \subseteq X^{\kstar}$, the \intro*\kl{composition} $\la[1] \compo \la[2]$ and the \intro*\kl{Kleene star} $\la[1]^{\kstar}$ is defined by:
\begin{align*}
    \la[1] \compo \la[2] & \ \defeq\  \set{\word[1] \word[2] \mid \word[1] \in \la[1] \ \land \ \word[1] \in \la[2]},                        \\
    \la[1]^{\kstar}      & \ \defeq\  \set{\word[1]_0 \dots \word[1]_{n-1} \mid \exists n \in \nat, \forall i < n,\  \word[1]_i \in \la[1]}.
\end{align*}

\subsection{Syntax: terms of KA with complement}
We consider \kl{terms} over the signature $\sig \defeq \set{\id_{(0)}, \emp_{(0)}, \compo_{(2)}, \union_{(2)}, {\bl^{\kstar}}_{(1)}, {\bl^{\compl}}_{(1)}}$ (where complement only applies to variables or constants in the most part).
Let $\vsig$ be a countably finite set of \intro*\kl{variables}.
For a \kl{term} $\term$ over $\sig$,
let $\com{\term}$ be $\term[2]$ if $\term = \term[2]^{-}$ for some $\term[2]$ and be $\term^{-}$ otherwise.
We use the following abbreviations:
\begin{align*}
    \top &\ \defeq\  \emp^{-}, &\term[1] \cap \term[2] &\ \defeq\  (\term[1]^{-} \union \term[2]^{-})^{-}.
\end{align*}
For $X \subseteq \set{\com{x}, \com{\id}, -}$,
let $\KAtermclass_{X}$ be the minimal set $A$ of \kl{terms} over $\sig$ satisfying the following:
\begin{align*}
    \begin{prooftree}
        \hypo{y \in \vsig}
        \infer1{y \in A}
    \end{prooftree}
    \qquad
    \begin{prooftree}
        \hypo{\mathstrut}
        \infer1{\id \in A}
    \end{prooftree}
    \qquad
    \begin{prooftree}
        \hypo{\mathstrut}
        \infer1{\emp \in A}
    \end{prooftree}
    \qquad
    \begin{prooftree}
        \hypo{\term[1] \in A}
        \hypo{\term[2] \in A}
        \infer2{\term[1] \compo \term[2] \in A}
    \end{prooftree}
    \qquad
    \begin{prooftree}
        \hypo{\term[1] \in A}
        \hypo{\term[2] \in A}
        \infer2{\term[1] \union \term[2] \in A}
    \end{prooftree}
    \\
    \begin{prooftree}
        \hypo{\term[1] \in A}
        \infer1{\term[1]^{\kstar} \in A}
    \end{prooftree}
    \qquad
    \begin{prooftree}
        \hypo{\com{x} \in X}
        \hypo{y \in \vsig}
        \infer2{\com{y} \in A}
    \end{prooftree}
    \qquad
    \begin{prooftree}
        \hypo{\com{\id} \in X}
        \infer1{\com{\id} \in A}
    \end{prooftree}
    \qquad
    \begin{prooftree}
        \hypo{- \in X}
        \hypo{\term[1] \in A}
        \infer2{\term[1]^{-} \in A}
    \end{prooftree}.
\end{align*}
We use parentheses in ambiguous situations.
We often abbreviate $\term[1] \compo \term[2]$ to $\term[1] \term[2]$.
We write $\sum_{i = 1}^{n} \term[1]_i$ for the \kl{term} $\emp \union \term[1]_1 \union \dots \union \term[1]_n$.
In the sequel, we mainly consider about $\KAtermclass_{\set{\com{x}, \com{\id}}}$.

An \intro*\kl{equation} $\term[1] = \term[2]$ is a pair of \kl{terms}.
An \intro*\kl{inequation} $\term[1] \le \term[2]$ abbreviates the \kl{equation} $\term[1] \union \term[2] = \term[2]$.

\subsection{Semantics: language models}
An \intro*\kl{$\sig$-algebra} $\alg$ is a tuple $\tuple{\domain{\alg}, \set{f^{\alg}}_{f_{(k)} \in \sig}}$, where $\domain{\alg}$ is a non-empty set and $f^{\alg} \colon \domain{\alg}^{k} \to \domain{\alg}$ is a $k$-ary map for each $f_{(k)} \in \sig$.
A \intro*\kl{valuation} $\val$ of an \kl{$\sig$-algebra} $\alg$ is a map $\val \colon \vsig \to \domain{\alg}$.
For a \kl{valuation} $\val$, we write $\hat{\val} \colon \KAtermclass_{\set{-}} \to \domain{\alg}$ for the unique homomorphism extending $\val$.
We use $\algclass$ to denote a class of \kl{valuations}.
For a \kl{valuation} $\val$ and a class $\algclass$ of \kl{valuations},
we write:
\begin{align*}
    \val \models \term[1] = \term[2] &\ \defiff\  \hat{\val}(\term[1]) = \hat{\val}(\term[2]), &
    \algclass \models \term[1] = \term[2] &\ \defiff\  \forall \val \in \algclass,  \val \models \term[1] = \term[2].
\end{align*}
The \intro*\kl{equational theory of $\algclass$} is the set of all \kl{equations} $\term[1] = \term[2]$ such that $\algclass \models \term[1] = \term[2]$.

The \intro*\kl{language model} $\alg$ over a set $X$, written $\lang_{X}$, is an \kl{$\sig$-algebra} such that $\domain{\alg} = \wp(X^{\kstar})$ and
for all $\la[1], \la[2] \subseteq X^{\kstar}$,
\begin{flalign*}
    \phantom{x^{\alg}} & \phantom{= \set{x}} & \id^{\alg}                 & = \set{\eps},                   &
                    &                     & \la[1] \compo^{\alg} \la[2] & = \la[1] \compo \la[2],         &
                    &                     & \la[1]^{\kstar^{\alg}}      & = \la[1]^{\kstar},                \\
                    &                     & \emp^{\alg}                 & = \emptyset,                   &
                    &                     & \la[1] \union^{\alg} \la[2] & = \la[1] \cup \la[2],         &
                    &                     & \la[1]^{\compl^{\alg}}      & = X^{\kstar} \setminus \la[1].
\end{flalign*}
We write $\LANG_{X}$ for the class of all \kl{valuations} of $\lang_{X}$ and
we write $\LANG$ for $\bigcup_{X} \LANG_{X}$ and write $\LANG_{\alpha}$ for $\bigcup_{X; \# X \le \alpha} \LANG_{X}$.
The \intro*\kl{equational theory w.r.t.\ languages} denotes \kl[equational theory]{that} of $\LANG$.

The \kl{language} $\ljump{\term} \subseteq \vsig^{\kstar}$ of a $\KAtermclass_{\set{-}}$ \kl{term} $\term$ is the \kl{language} $\hat{\val}_{\mathrm{st}}(\term)$ where $\val_{\mathrm{st}}$ is the \kl{valuation} on the \kl{language model} over the set $\vsig$ defined by $\val_{\mathrm{st}}(x) = \set{x}$ for $x \in \vsig$.
Since $\val_{\mathrm{st}} \in \LANG$, we have that for all $\term[1], \term[2]$,
\begin{align*}
    \label{equation: LANG to lang} \LANG \models \term[1] = \term[2] \quad \Rightarrow \quad \ljump{\term[1]} = \ljump{\term[2]}. \tag{$\ddagger$}
\end{align*}
\begin{rem}\label{remark: LANG and lang}
    The converse direction of (\ref{equation: LANG to lang}) fails.
    The following are examples where $x, y \in \vsig$ are distinct \kl{variables} and $\word$ is a \kl{word} over $\vsig$ s.t.\ $\word \neq x$:
    \begin{align}
        \LANG &\not\models y \le \com{x}, & \ljump{y} &\subseteq \ljump{\com{x}},\\
        \LANG &\not\models \word \le \com{x}, & \ljump{\word} &\subseteq \ljump{\com{x}},\\
        \LANG &\not\models y \le \com{\id}, & \ljump{y} &\subseteq \ljump{\com{\id}}, \\
        \LANG &\not\models \com{x} = \com{x} \compo \com{x}, & \ljump{\com{x}}  &= \ljump{\com{x} \compo \com{x}},\\
        \LANG &\not\models \top = \com{x} \compo \com{y},       & \ljump{\top}  &= \ljump{\com{x} \compo \com{y}},\\
        \LANG &\not\models \top = \com{x} \union \com{y},       & \ljump{\top}  &= \ljump{\com{x} \union \com{y}}.
    \end{align}
    (Note that $\term[1] \le \term[2]$ denotes the \kl{equation} $\term[1] \union \term[2] = \term[2]$.)
    For example, for $\LANG \not\models y \le \com{x}$, consider a \kl{valuation} $\val \in \LANG_{\vsig}$ s.t.\
    $\val(x) = \vsig^* \setminus \set{x}$ and $\val(y) = \set{y}$; then we have $y \in \hat{\val}(y) \setminus \hat{\val}(\com{x})$.
    Similarly to the other ``$\LANG \not\models$'', they are shown by considering \kl{valuations} mapping complemented variable to a singleton \kl{language}.
\end{rem}
As the examples above show, for $\KAtermclass$ \kl{terms} with \kl{variable complements} or \kl{constant complements}, the \kl{equational theory w.r.t.\ languages} ($\LANG \models \term[1] = \term[2]$?) significantly differs from the language equivalence problem ($\ljump{\term[1]} = \ljump{\term[2]}$?).
In the sequel, we focus on the \kl{equational theory w.r.t.\ languages} and investigate its fragments.

\subsection{Alternative semantics using (extended) word languages}\label{section: term to lang}
For $\KAtermclass_{\set{\com{x}, \com{\id}}}$ \kl{terms}, we can give an alternative semantics of $\LANG$ using (extended) word \kl{languages}.
The semantics (\Cref{lemma: lang val}) is useful as we can decompose $\KAtermclass_{\set{\com{x}, \com{\id}}}$ \kl{terms} into sets of \kl{words}.

Let $\tilde{\vsig} \defeq \set{x, \com{x} \mid x \in \vsig}$ and let $\tilde{\vsig}_{\com{\id}} \defeq \tilde{\vsig} \cup \set{\com{\id}}$.
For a $\KAtermclass_{\set{\com{x}, \com{\id}}}$ \kl{term} $\term$, we write $\ljump{\term}_{\tilde{\vsig}_{\com{\id}}}$ for the \kl{language} of $\term$ where $\term$ is viewed as the regular expression over $\tilde{\vsig}_{\com{\id}}$.
Each \kl{word} over $\tilde{\vsig}_{\com{\id}}$ is viewed as a $\KAtermclass_{\set{\com{x}, \com{\id}}}$ \kl{term} consisting of composition ($\compo$), variables ($x$), complemented variables ($\com{x}$), and the non-empty constant ($\com{\id}$).
Note that $\ljump{\com{x}}_{\tilde{\vsig}_{\com{\id}}} = \set{\com{x}}$, cf.\ $\ljump{\com{x}} = \vsig^* \setminus \set{x}$.
For a \kl{valuation} $\val \in \LANG$ and a \kl{language} $\la$ over $\tilde{\vsig}_{\com{\id}}$, we define:
\[\hat{\val}(\la) \ \defeq\  \bigcup_{\word \in \la} \hat{\val}(\word). \]
By the distributive law of $\compo$ w.r.t.\ $\union$, for all \kl{valuations} $\val \in \LANG$, we have:
\begin{align*}
    \hat{\val}(\la[1] \union \la[2]) & = \hat{\val}(\la[1]) \cup \hat{\val}(\la[2]), &
    \hat{\val}(\la[1] \compo \la[2]) & = \hat{\val}(\la[1]) \compo \hat{\val}(\la[2]), &
    \hat{\val}(\la[1]^{\kstar})      & = \hat{\val}(\la[1])^{\kstar}.
\end{align*}
Thus, we can decompose each $\KAtermclass_{\set{\com{x}, \com{\id}}}$ \kl{term} $\term$ to the \kl{language} $\ljump{\term}_{\tilde{\vsig}_{\com{\id}}}$ as follows.
\begin{lem}\label{lemma: lang val}
    Let $\val \in \LANG$.
    For all $\KAtermclass_{\set{\com{x}, \com{\id}}}$ \kl{terms} $\term$, we have: $\hat{\val}(\term) = \hat{\val}(\ljump{\term}_{\tilde{\vsig}_{\com{\id}}})$.
\end{lem}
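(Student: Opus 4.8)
The plan is to argue by structural induction on the $\KAtermclass_{\set{\com{x}, \com{\id}}}$ \kl{term} $\term$. The structural feature I would exploit is that in $\KAtermclass_{\set{\com{x}, \com{\id}}}$ the complement is permitted only on \kl{variables} and on $\id$, so every occurrence of $\bl^{\compl}$ sits at a leaf, and the only term-forming operations at internal nodes are $\union$, $\compo$, and $\bl^{\kstar}$. Hence the induction has to handle just the five leaves ($x$, $\com{x}$, $\id$, $\emp$, $\com{\id}$) together with the three remaining operations, and there is \emph{no} inductive case for complement itself. This is exactly what makes the statement plausible: the complement is ``frozen'' into the extended alphabet $\tilde{\vsig}_{\com{\id}}$, while the three surviving operations are precisely the ones for which the homomorphism laws displayed just before the lemma are available.

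For the base cases I would unfold both sides directly. Reading $\term$ as a regular expression over $\tilde{\vsig}_{\com{\id}}$ gives $\ljump{x}_{\tilde{\vsig}_{\com{\id}}} = \set{x}$, $\ljump{\com{x}}_{\tilde{\vsig}_{\com{\id}}} = \set{\com{x}}$, $\ljump{\com{\id}}_{\tilde{\vsig}_{\com{\id}}} = \set{\com{\id}}$, $\ljump{\id}_{\tilde{\vsig}_{\com{\id}}} = \set{\eps}$, and $\ljump{\emp}_{\tilde{\vsig}_{\com{\id}}} = \emptyset$. Applying $\hat{\val}(\la) = \bigcup_{\word \in \la} \hat{\val}(\word)$, each singleton collapses to the value of its single word-as-term; e.g.\ $\hat{\val}(\set{\com{x}}) = \hat{\val}(\com{x})$ and $\hat{\val}(\set{\eps}) = \hat{\val}(\id) = \set{\eps}$, while $\hat{\val}(\emptyset) = \emptyset$. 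In every case this matches $\hat{\val}(\term)$ computed directly in the \kl{language model}, so the base cases hold.

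For the inductive step I would use that the regular-expression semantics $\ljump{\bl}_{\tilde{\vsig}_{\com{\id}}}$ is itself a homomorphism for the three operations, i.e.\ $\ljump{\term[1] \union \term[2]}_{\tilde{\vsig}_{\com{\id}}} = \ljump{\term[1]}_{\tilde{\vsig}_{\com{\id}}} \cup \ljump{\term[2]}_{\tilde{\vsig}_{\com{\id}}}$ and analogously for $\compo$ and $\bl^{\kstar}$. Composing this with the three distributive laws for $\hat{\val}$ on \kl{languages} stated just above the lemma and the induction hypothesis closes each case in one line; for instance, $\hat{\val}(\ljump{\term[1] \compo \term[2]}_{\tilde{\vsig}_{\com{\id}}}) = \hat{\val}(\ljump{\term[1]}_{\tilde{\vsig}_{\com{\id}}}) \compo \hat{\val}(\ljump{\term[2]}_{\tilde{\vsig}_{\com{\id}}}) = \hat{\val}(\term[1]) \compo \hat{\val}(\term[2]) = \hat{\val}(\term[1] \compo \term[2])$, and symmetrically for $\union$ and $\bl^{\kstar}$.

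I do not expect a genuine obstacle, since the three homomorphism laws already carry the weight of the argument. The one point that needs care is the double reading of a complemented atom---$\com{x}$ (and $\com{\id}$) is simultaneously a \kl{letter} of $\tilde{\vsig}_{\com{\id}}$ and a \kl{term}---and I would make explicit that the definition of $\hat{\val}$ on \kl{languages} is exactly what forces these two readings to agree, so that the base cases for $\com{x}$ and $\com{\id}$ are immediate rather than requiring any computation with $X^{\kstar} \setminus \hat{\val}(x)$. A secondary subtlety is the empty \kl{word}: the word $\eps \in \ljump{\id}_{\tilde{\vsig}_{\com{\id}}}$ must be read as the empty composition $\id$, giving $\set{\eps}$ rather than $\emptyset$; flagging this convention removes the only place where the two sides could superficially appear to diverge.
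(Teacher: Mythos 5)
Your proof is correct and takes essentially the same route as the paper's: structural induction on $\term$, with the leaves $x$, $\com{x}$, $\id$, $\com{\id}$ handled by unfolding $\ljump{\term}_{\tilde{\vsig}_{\com{\id}}} = \set{\term}$ (and $\emp$ by $\hat{\val}(\emptyset) = \emptyset$), and the $\union$, $\compo$, $\bl^{\kstar}$ cases closed by combining the induction hypothesis with the homomorphism laws for $\hat{\val}$ on languages stated just before the lemma. The only difference is presentational: you spell out the points the paper leaves implicit, namely that complemented atoms are read both as letters of $\tilde{\vsig}_{\com{\id}}$ and as terms, and that $\eps$ is read as the empty composition $\id$.
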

\begin{proof}
    By easy induction on $\term$ using the equations above.
    Case $\term = x, \com{x}, \id, \com{\id}$:
    Clear, by $\ljump{\term}_{\tilde{\vsig}_{\com{\id}}} = \set{\term}$.
    Case $\term = \emp$:
    By $\hat{\val}(\emp) = \emptyset = \hat{\val}(\ljump{\emp}_{\tilde{\vsig}_{\com{\id}}})$.
    Case $\term = \term[2] \union \term[3]$, Case $\term = \term[2] \compo \term[3]$, Case $\term = \term[2]^{\kstar}$:
    By IH with the equations above.
    For example, when $\term = \term[2] \compo \term[3]$, we have:
    \begin{align*}
        \hat{\val}(\term[2] \compo \term[3])  = \hat{\val}(\term[2]) \compo \hat{\val}(\term[3]) & = \hat{\val}(\ljump{\term[2]}_{\tilde{\vsig}_{\com{\id}}}) \compo \hat{\val}(\ljump{\term[3]}_{\tilde{\vsig}_{\com{\id}}}) \tag{IH}                                    \\
                                                                                                 & = \hat{\val}(\ljump{\term[2]}_{\tilde{\vsig}_{\com{\id}}} \compo \ljump{\term[3]}_{\tilde{\vsig}_{\com{\id}}}) = \hat{\val}(\ljump{\term[2] \compo \term[3]}_{\tilde{\vsig}_{\com{\id}}}).
    \end{align*}
\end{proof}
Particularly, for $\KAtermclass$ \kl{terms}, we have the following.
\begin{lem}[cf.\ \Cref{lemma: lang val}]\label{lemma: lang val KA}
    Let $\val \in \LANG$.
    For all $\KAtermclass$ \kl{terms} $\term$, we have: $\hat{\val}(\term) = \hat{\val}(\ljump{\term})$.
\end{lem}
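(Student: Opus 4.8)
The plan is to derive this as an immediate specialization of \Cref{lemma: lang val}. The only point to settle is that the two notions of ``language of a term'' appearing in the two statements coincide for complement-free terms: namely that $\ljump{\term}_{\tilde{\vsig}_{\com{\id}}} = \ljump{\term}$ for every $\KAtermclass$ term $\term$. Granting this, \Cref{lemma: lang val} applied to $\term$ (viewed as a $\KAtermclass_{\set{\com{x}, \com{\id}}}$ term) gives $\hat{\val}(\term) = \hat{\val}(\ljump{\term}_{\tilde{\vsig}_{\com{\id}}}) = \hat{\val}(\ljump{\term})$, which is exactly the claim.

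First I would record why the two languages agree. A $\KAtermclass$ term is built without any of the complement operations $\com{x}$, $\com{\id}$, $\bl^{-}$, so the only letters occurring in it, when it is read as a regular expression, are plain variables from $\vsig$. Hence enlarging the alphabet from $\vsig$ to $\tilde{\vsig}_{\com{\id}} = \vsig \cup \set{\com{x} \mid x \in \vsig} \cup \set{\com{\id}}$ introduces no new words: the regular-expression language of $\term$ over $\tilde{\vsig}_{\com{\id}}$ is still a subset of $\vsig^{\kstar}$ and coincides with the regular-expression language of $\term$ over $\vsig$. The latter is precisely $\ljump{\term} = \hat{\val}_{\mathrm{st}}(\term)$, since for complement-free $\term$ the standard valuation $\val_{\mathrm{st}}$ (with $\val_{\mathrm{st}}(x) = \set{x}$) computes exactly the regular-expression language. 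This identity $\ljump{\term}_{\tilde{\vsig}_{\com{\id}}} = \ljump{\term}$ is checked by a routine induction on $\term$: the base cases $x$, $\id$, $\emp$ are immediate, and the cases $\union$, $\compo$, $\bl^{\kstar}$ follow because each of these regular-expression operations commutes with enlarging the ambient alphabet.

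There is essentially no obstacle here; the statement is a corollary, as the ``cf.'' annotation signals. If one preferred a self-contained argument avoiding the detour through $\ljump{\cdot}_{\tilde{\vsig}_{\com{\id}}}$, one could instead repeat the induction in the proof of \Cref{lemma: lang val} verbatim: the base cases $x$, $\id$, $\emp$ use $\ljump{\term} = \set{\term}$ (resp.\ $\emptyset$), and the cases $\union$, $\compo$, $\bl^{\kstar}$ use the three displayed homomorphism equations $\hat{\val}(\la[1] \union \la[2]) = \hat{\val}(\la[1]) \cup \hat{\val}(\la[2])$, $\hat{\val}(\la[1] \compo \la[2]) = \hat{\val}(\la[1]) \compo \hat{\val}(\la[2])$, and $\hat{\val}(\la[1]^{\kstar}) = \hat{\val}(\la[1])^{\kstar}$, together with the compatibility of $\ljump{\cdot}$ with these operations. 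Since $\KAtermclass$ has no complement case, no additional inductive step is required and the argument goes through unchanged.
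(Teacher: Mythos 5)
Your proposal is correct and takes essentially the same route as the paper: the paper's proof likewise observes that $\ljump{\term} = \ljump{\term}_{\tilde{\vsig}_{\com{\id}}}$ because $\KAtermclass$ \kl{terms} contain no complement, and then invokes \Cref{lemma: lang val}. Your additional inductive justification of that identity (and the alternative self-contained induction) is fine but not needed; the paper treats the identity as immediate.
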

\begin{proof}
    We have $\ljump{\term} = \ljump{\term}_{\tilde{\vsig}}$ since $\KAtermclass$ \kl{terms} do not contain complement.
    Hence, by \Cref{lemma: lang val}, this completes the proof.
\end{proof}

Additionally, by \Cref{lemma: lang val KA}, the converse direction of (\ref{equation: LANG to lang}) holds for $\KAtermclass$ \kl{terms} (cf.\ \Cref{remark: LANG and lang}).
The following is an explicit proof not relying on the completeness of KAs.
\begin{prop}\label{theorem: LANG and lang}
    For all $\KAtermclass$ \kl{terms} $\term[1], \term[2]$, we have:
    \[\LANG \models \term[1] = \term[2] \quad \Leftrightarrow \quad \ljump{\term[1]} = \ljump{\term[2]}.\]
\end{prop}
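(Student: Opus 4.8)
The plan is to split the biconditional into its two directions and observe that the right-to-left direction is where all the content lies, while the left-to-right direction is already in hand. For the direction $\LANG \models \term[1] = \term[2] \Rightarrow \ljump{\term[1]} = \ljump{\term[2]}$, I would simply invoke (\ref{equation: LANG to lang}): since the standard valuation $\val_{\mathrm{st}}$ is an element of $\LANG$, any equation holding throughout $\LANG$ holds in particular under $\val_{\mathrm{st}}$, which by definition says $\ljump{\term[1]} = \ljump{\term[2]}$. So nothing new is needed here.

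For the converse direction $\ljump{\term[1]} = \ljump{\term[2]} \Rightarrow \LANG \models \term[1] = \term[2]$, the key idea is that \Cref{lemma: lang val KA} lets me factor the interpretation of a $\KAtermclass$ \kl{term} through its \kl{language} $\ljump{\term}$. Concretely, I would fix an arbitrary \kl{valuation} $\val \in \LANG$ and compute
\[
    \hat{\val}(\term[1]) = \hat{\val}(\ljump{\term[1]}) = \hat{\val}(\ljump{\term[2]}) = \hat{\val}(\term[2]),
\]
where the outer two equalities are exactly \Cref{lemma: lang val KA} applied to $\term[1]$ and $\term[2]$, and the middle equality is the hypothesis $\ljump{\term[1]} = \ljump{\term[2]}$ together with the fact that $\hat{\val}(\bl)$ has been extended to act on \kl{languages} over $\tilde{\vsig}$ purely as a function of the \kl{language} (via $\hat{\val}(\la) = \bigcup_{\word \in \la} \hat{\val}(\word)$), so that equal \kl{languages} yield equal values. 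Since $\val \in \LANG$ was arbitrary, this gives $\LANG \models \term[1] = \term[2]$.

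The only point requiring a moment of care is ensuring that $\hat{\val}(\ljump{\term[1]})$ and $\hat{\val}(\ljump{\term[2]})$ genuinely depend only on the \kl{languages} $\ljump{\term[1]}, \ljump{\term[2]}$ and not on any residual \kl{term} structure; this is immediate from the definition of the extension of $\hat{\val}$ to \kl{languages}, which is a union indexed by the \kl{words} of the \kl{language}. I do not anticipate a genuine obstacle, because the substantive work---translating a \kl{term} into its \kl{word} \kl{language} and showing the two interpretations coincide---has already been discharged in \Cref{lemma: lang val} and \Cref{lemma: lang val KA}. In effect this \kl{proposition} is a clean corollary of \Cref{lemma: lang val KA}, and its role is to record that, for complement-free \kl{terms}, the \kl{equational theory w.r.t.\ languages} collapses onto ordinary language equivalence, in contrast to the $\KAtermclass_{\set{\com{x}, \com{\id}}}$ case highlighted in \Cref{remark: LANG and lang}.
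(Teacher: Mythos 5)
Your proposal is correct and follows essentially the same route as the paper: the forward direction via $\val_{\mathrm{st}} \in \LANG$, and the converse by applying \Cref{lemma: lang val KA} to factor $\hat{\val}(\term[i])$ through $\hat{\val}(\ljump{\term[i]})$, which depends only on the \kl{language}. The paper merely packages the same steps as a single cycle of implications rather than two separate directions.
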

\begin{proof}
    We have:
    \begin{align*}
        \LANG \models \term[1] = \term[2] & \quad\Rightarrow\quad \ljump{\term[1]} = \ljump{\term[2]} \tag{$\val_{\mathrm{st}} \in \LANG$}                                 \\
                                            & \quad\Rightarrow\quad \forall \val \in \LANG, \hat{\val}(\ljump{\term[1]}) = \hat{\val}(\ljump{\term[2]})                       \\
                                            & \quad\Leftrightarrow\quad \forall \val \in \LANG, \hat{\val}(\term[1]) = \hat{\val}(\term[2]) \tag{\Cref{lemma: lang val KA}} \\
                                            & \quad\Leftrightarrow\quad  \LANG \models \term[1] = \term[2]. \tag{By definition}
    \end{align*}
    Hence, this completes the proof.
\end{proof}
\section{The identity inclusion problem}\label{section: identity}
We first consider the \emph{identity inclusion problem} w.r.t.\ languages:
\begin{center}
    Given a $\KAtermclass_{\set{\com{x}, \com{\id}}}$ \kl{term} $\term$, does $\LANG \models \id \le \term$?
\end{center}
This problem is relatively easily solvable.
Since $\LANG \models \id \le \term$ iff $\id \in \hat{\val}(\term)$ for all \kl{valuations} $\val \in \LANG$,
it suffices to consider the membership of the empty word $\eps$.
Thus, we have:
\begin{lem}\label{lemma: identity val}
    Let $\val, \val' \in \LANG$ be such that for all \kl{variables} $x$, $\eps \in \val(x)$ iff $\eps \in \val'(x)$.
    For all $\KAtermclass_{\set{\com{x}, \com{\id}}}$ \kl{terms} $\term$, we have: $\eps \in \hat{\val}(\term)$ iff $\eps \in \hat{\val}'(\term)$.
\end{lem}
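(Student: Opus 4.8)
The plan is to exploit the alternative word-language semantics of \Cref{lemma: lang val} so that the statement reduces to a purely combinatorial observation about the empty word, avoiding an induction on $\term$. By \Cref{lemma: lang val} we have $\hat{\val}(\term) = \hat{\val}(\ljump{\term}_{\tilde{\vsig}_{\com{\id}}})$, and likewise for $\val'$; since $\hat{\val}(\la) = \bigcup_{\word \in \la} \hat{\val}(\word)$, it follows that $\eps \in \hat{\val}(\term)$ iff $\eps \in \hat{\val}(\word)$ for some $\word \in \ljump{\term}_{\tilde{\vsig}_{\com{\id}}}$. Crucially, the word language $\ljump{\term}_{\tilde{\vsig}_{\com{\id}}}$ is determined by $\term$ alone and does not depend on the valuation, so $\val$ and $\val'$ range over the very same set of candidate witnessing words; it therefore suffices to show, word by word, that $\eps \in \hat{\val}(\word)$ iff $\eps \in \hat{\val}'(\word)$.

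The first key step is that membership of $\eps$ in a composition is conjunctive: for languages $\la[1], \la[2]$ we have $\eps \in \la[1] \compo \la[2]$ iff $\eps \in \la[1]$ and $\eps \in \la[2]$, because $\word[1]\word[2] = \eps$ forces $\word[1] = \word[2] = \eps$. Applying this to a word $\word = a_1 \cdots a_n$ over $\tilde{\vsig}_{\com{\id}}$, viewed as the term $a_1 \compo \dots \compo a_n$, gives $\eps \in \hat{\val}(\word)$ iff $\eps \in \hat{\val}(a_i)$ for every letter $a_i$ (and $\eps \in \hat{\val}(\eps)$ vacuously).

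The second step is to check, letter by letter, that $\eps$-membership of each $a \in \tilde{\vsig}_{\com{\id}}$ is governed exactly by the hypothesis. If $a = x$ is a variable, then $\eps \in \hat{\val}(x)$ iff $\eps \in \val(x)$; if $a = \com{x}$, then $\eps \in \hat{\val}(\com{x})$ iff $\eps \notin \val(x)$; and if $a = \com{\id}$, then $\eps \notin \hat{\val}(\com{\id})$ always, since $\hat{\val}(\com{\id}) = X^{\kstar} \setminus \set{\eps}$. In each case the truth value of ``$\eps \in \hat{\val}(a)$'' is a fixed Boolean function of the predicates $\eps \in \val(x)$, and the constant ``false'' for $\com{\id}$. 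Hence, using the assumption that $\eps \in \val(x)$ iff $\eps \in \val'(x)$ for all $x$, we obtain $\eps \in \hat{\val}(a)$ iff $\eps \in \hat{\val}'(a)$ for every $a \in \tilde{\vsig}_{\com{\id}}$. Combining this with the conjunctive characterization yields $\eps \in \hat{\val}(\word)$ iff $\eps \in \hat{\val}'(\word)$ for each word, and the existential quantification over the common word language $\ljump{\term}_{\tilde{\vsig}_{\com{\id}}}$ then gives $\eps \in \hat{\val}(\term)$ iff $\eps \in \hat{\val}'(\term)$, as required.

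I do not expect a serious obstacle. The only points needing care are that $\val$ and $\val'$ may live over different alphabets $X$ and $X'$, so one should note that $\eps \in X^{\kstar}$ for every $X$, making the predicate ``$\eps \in \hat{\val}(a)$'' meaningful and alphabet-independent; and that $\com{\id}$ contributes a constant value rather than a genuine dependence on the variables. Packaging the three letter cases into the single slogan ``$\eps$-membership depends only on the family of predicates $\eps \in \val(x)$'' is precisely what lets this argument bypass the induction on $\term$, as announced in the revision notes.
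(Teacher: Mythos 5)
Your proposal is correct and follows essentially the same route as the paper's own proof: reduce via the word-language semantics (\Cref{lemma: lang val}) to words over $\tilde{\vsig}_{\com{\id}}$, use the fact that $\eps$-membership of a composition is the conjunction of $\eps$-membership of its letters, and then handle the letter cases $x$, $\com{x}$, $\com{\id}$ using the hypothesis. Your write-up merely makes explicit the letter-by-letter case analysis that the paper leaves implicit, so there is nothing to fix.
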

\begin{proof}
    By \Cref{lemma: lang val}, it suffices to show when $\term$ is a \kl{word} over $\tilde{\vsig}_{\com{\id}}$.
    (If \Cref{lemma: identity val} is shown for \kl{words} over $\tilde{\vsig}_{\com{\id}}$, then by using \Cref{lemma: lang val}, for all $\KAtermclass_{\set{\com{x}, \com{\id}}}$ \kl{terms} $\term$, we have: $\eps \in \hat{\val}(\term)$ iff $(\exists \word \in \ljump{\term}_{\tilde{\vsig}}, \eps \in \hat{\val}(\word))$ iff $(\exists \word \in \ljump{\term}_{\tilde{\vsig}}, \eps \in \hat{\val}'(\word))$ iff $\eps \in \hat{\val}'(\term)$.)
    Let $\term = x_0 \dots x_{m-1}$ where $m \ge 0$ and $x_0, \dots, x_{m-1} \in \tilde{\vsig}_{\com{\id}}$.
    Then we have:
    \begin{align*}
        \eps \in \hat{\val}(\term) &\; \Leftrightarrow\;  (\forall k \in \range{0, m-1}, \eps \in \hat{\val}(x_k))\\
        &\; \Leftrightarrow\;  (\forall k \in \range{0, m-1}, \eps \in \hat{\val}'(x_k)) 
        \; \Leftrightarrow\;  \eps \in \hat{\val}'(\term).
    \end{align*}
    Hence, this completes the proof.
\end{proof}
By \Cref{lemma: identity val}, it suffices to consider a finite number of \kl{valuations}, as follows.
\begin{thm}\label{theorem: identity val}
    For all $\KAtermclass_{\set{\com{x}, \com{\id}}}$ \kl{terms} $\term$, we have:
    \[\LANG \models \id \le \term \quad\Leftrightarrow\quad \LANG_{0} \models \id \le \term.\] 
\end{thm}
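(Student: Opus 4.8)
The plan is to establish the nontrivial direction $\Leftarrow$ by collapsing an arbitrary valuation to one over the empty alphabet; the direction $\Rightarrow$ is immediate since $\LANG_{0} \subseteq \LANG$, so anything holding throughout $\LANG$ holds on the subclass $\LANG_{0}$. For $\Leftarrow$, recall that $\LANG \models \id \le \term$ unfolds to ``$\eps \in \hat{\val}(\term)$ for every $\val \in \LANG$'', so it suffices to fix an arbitrary $\val \in \LANG_{X}$ and produce a valuation in $\LANG_{0}$ that agrees with it on the $\eps$-membership of $\term$.

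First I would define the companion valuation. Since the only set of cardinality at most $0$ is $\emptyset$ and $\emptyset^{\kstar} = \set{\eps}$, the language model $\lang_{\emptyset}$ has domain $\wp(\set{\eps}) = \set{\emptyset, \set{\eps}}$; thus a valuation in $\LANG_{0}$ is exactly a choice, for each \kl{variable}, between the two ``constant'' \kl{languages} $\emptyset$ and $\set{\eps}$. Given $\val \in \LANG_{X}$, I set $\val'(x) \defeq \set{\eps}$ if $\eps \in \val(x)$ and $\val'(x) \defeq \emptyset$ otherwise, so that $\val' \in \LANG_{0}$ and, by construction, $\eps \in \val(x)$ iff $\eps \in \val'(x)$ for every \kl{variable} $x$. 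Then \Cref{lemma: identity val} applies verbatim and yields $\eps \in \hat{\val}(\term)$ iff $\eps \in \hat{\val}'(\term)$. Since $\val' \in \LANG_{0}$ and we assumed $\LANG_{0} \models \id \le \term$, we have $\eps \in \hat{\val}'(\term)$, hence $\eps \in \hat{\val}(\term)$; as $\val$ was arbitrary, $\LANG \models \id \le \term$.

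The real content here is already packaged in \Cref{lemma: identity val}: whether $\eps$ belongs to the value of a \kl{term} depends only on the finitely many bits recording whether $\eps$ lies in each \kl{variable}'s value. Consequently there is no genuine obstacle in the present statement—the only point requiring a moment's care is the identification of $\LANG_{0}$ with \kl{valuations} over the empty alphabet, where the available \kl{languages} are precisely $\emptyset$ and $\set{\eps}$, matching exactly the two cases of the $\eps$-membership dichotomy exploited above. The complements cause no trouble either: over any $X$ one has $\eps \notin \hat{\val}(\com{\id})$ and $\eps \in \hat{\val}(\com{x})$ iff $\eps \notin \val(x)$, and both facts are preserved by the collapse, so nothing beyond invoking the lemma is needed.
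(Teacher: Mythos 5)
Your proof is correct and follows essentially the same route as the paper: both directions are handled identically, and your collapsed valuation $\val'$ is exactly the paper's $\val^{\tuple{}}(x) = \set{\eps \mid \eps \in \val(x)}$, with \Cref{lemma: identity val} doing the work in both cases. The only cosmetic difference is that you argue the $\Leftarrow$ direction directly while the paper phrases it as a contraposition.
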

\begin{proof}
    ($\Rightarrow$):
    By $\LANG_{0} \subseteq \LANG$.
    ($\Leftarrow$):
    We prove the contraposition.
    By $\LANG \not\models \id \le \term$,
    let $\val \in \LANG$ be s.t.\ $\hat{\val}(\id) \not\subseteq \hat{\val}(\term)$ (i.e., $\eps \not\in \hat{\val}(\term)$).
    Let $\val^{\tuple{}} \in \LANG_{0}$ be the \kl{valuation} defined by:
    \[\val^{\tuple{}}(x) \ \defeq\  \set{\eps \mid \eps \in \val(x)}.\]
    By \Cref{lemma: identity val}, we have $\eps \not\in \hat{\val}^{\tuple{}}(\term)$.
    Hence, $\hat{\val}^{\tuple{}}(\id) \not\subseteq \hat{\val}^{\tuple{}}(\term)$.
\end{proof}

Note that the \kl{equational theory} of $\LANG_{0}$ can be reduced to the \kl{equational theory} of Boolean algebra by the following fact.
\begin{prop}\label{prop: LANG0 and Boolean algebra}
    The $(\sig \setminus \set{\bl^{*}})$-reduct of the \kl{$\sig$-algebra} $\lang_{\emptyset}$ is isomorphic to the $2$-valued Boolean algebra,
    where $\id$ maps to the true constant, $\emp$ to the false constant, $\compo$ to the conjunction, $\union$ to the disjunction, and $\bl^{-}$ to the complement.
\end{prop}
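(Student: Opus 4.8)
The plan is to compute the carrier of $\lang_{\emptyset}$ explicitly and then exhibit the isomorphism by a direct finite verification. First I would note that over the empty alphabet the only \kl{word} is the empty \kl{word}: a \kl{word} over $X$ is a finite sequence of elements of $X$, and when $X = \emptyset$ the only such sequence is $\eps$, so $\emptyset^{\kstar} = \set{\eps}$. Hence $\domain{\lang_{\emptyset}} = \wp(\emptyset^{\kstar}) = \wp(\set{\eps}) = \set{\emptyset, \set{\eps}}$ is a two-element set. This already matches the cardinality of the $2$-valued Boolean algebra and pins down the constants, since $\emp^{\alg} = \emptyset$ and $\id^{\alg} = \set{\eps}$ are the two distinct elements.

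Next I would define $\phi \colon \domain{\lang_{\emptyset}} \to \set{0,1}$ by $\phi(\emptyset) = 0$ and $\phi(\set{\eps}) = 1$, where $0$ and $1$ denote the false and true constants; this is evidently a bijection. It remains to check that $\phi$ commutes with each operation of the reduct. The constants are immediate from the displayed definition of $\lang_{\emptyset}$: $\emp^{\alg} = \emptyset \mapsto 0$ and $\id^{\alg} = \set{\eps} \mapsto 1$. For the binary operations, since the carrier has only two elements, each is a finite table check using the equations defining $\lang_{\emptyset}$: for $\compo$, the three products involving $\emptyset$ all equal $\emptyset$ while $\set{\eps} \compo \set{\eps} = \set{\eps}$, which is exactly the conjunction table; for $\union$, the operation $\cup$ on $\set{\emptyset, \set{\eps}}$ is the disjunction table; and for $\bl^{-}$, the map $\la \mapsto \set{\eps} \setminus \la$ interchanges $\emptyset$ and $\set{\eps}$, matching Boolean negation.

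I do not expect any real obstacle: once the carrier is identified as $\set{\emptyset, \set{\eps}}$, the claim reduces to a handful of evaluations on a two-element set. The single point deserving care is the opening observation that $\emptyset^{\kstar} = \set{\eps}$ and not $\emptyset$---i.e.\ that the empty alphabet still admits the empty \kl{word}---since the entire argument relies on the carrier being genuinely two-element. Finally, I would explain why the Kleene star must be dropped from the reduct: on this carrier $\bl^{\kstar}$ sends both $\emptyset$ and $\set{\eps}$ to $\set{\eps}$, so it is not a Boolean operation, which is exactly why the isomorphism is asserted only for the $(\sig \setminus \set{\bl^{\kstar}})$-reduct.
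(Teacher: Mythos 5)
Your proposal is correct and follows the same route as the paper: the paper's proof simply observes that $\domain{\lang_{\emptyset}} = \set{\emptyset, \set{\eps}}$ is a two-element set and declares the rest easy, which is exactly the finite table check you carry out in detail. Your closing remark on why $\bl^{\kstar}$ must be excluded from the reduct is a nice touch that matches the paper's subsequent comment that star can be eliminated via $\LANG_{0} \models \term^{\kstar} = \id$.
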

\begin{proof}
    Easy, because the universe $\domain{\lang_{\emptyset}}$ is the two elements set $\set{\emptyset, \set{\eps}}$.
\end{proof}
Additionally, we can eliminate $\bl^{*}$ by using the \kl{equation} $\LANG_{0} \models \term^* = \id$.
We then have the following complexity result.
\begin{cor}\label{corollary: identity decidable}
    The identity inclusion problem---given a $\KAtermclass_{\set{\com{x}, \com{\id}}}$ \kl{term} $\term$, does $\LANG \models \id \le \term$?---is decidable and coNP-complete.
\end{cor}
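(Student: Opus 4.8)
The plan is to identify the identity inclusion problem, up to a polynomial-time translation, with the validity (tautology) problem for Boolean formulas, using the three preceding results: \Cref{theorem: identity val} reduces $\LANG$ to $\LANG_{0}$, \Cref{prop: LANG0 and Boolean algebra} reads evaluation in $\lang_{\emptyset}$ as Boolean evaluation, and the equation $\LANG_{0} \models \term^{\kstar} = \id$ disposes of the only operator ($\bl^{\kstar}$) not covered by that isomorphism. Since validity of Boolean formulas is coNP-complete, both bounds will follow.

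For decidability and membership in coNP, I would argue as follows. By \Cref{theorem: identity val}, $\LANG \models \id \le \term$ iff $\LANG_{0} \models \id \le \term$, i.e.\ iff $\eps \in \hat{\val}(\term)$ for every $\val \in \LANG_{0}$. Over $\lang_{\emptyset}$ the universe is the two-element set $\set{\emptyset, \set{\eps}}$, and first eliminating $\bl^{\kstar}$ by $\LANG_{0} \models \term^{\kstar} = \id$ (valid since $a^{\kstar} = \set{\eps}$ for both $a \in \set{\emptyset, \set{\eps}}$), \Cref{prop: LANG0 and Boolean algebra} turns $\term$ into a Boolean formula under the dictionary: a variable $x$ is a Boolean variable, $\com{x}$ its negation (as $\com{x}$ sends $\set{\eps}$ to $\emptyset$ and $\emptyset$ to $\set{\eps}$ in $\lang_{\emptyset}$), $\id$ (and $\top$) the constant true, $\emp$ (and $\com{\id}$) the constant false, $\compo$ conjunction, and $\union$ disjunction. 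Since $\eps \in \hat{\val}(\term)$ iff $\hat{\val}(\term) = \set{\eps}$ is the true value, $\LANG \models \id \le \term$ holds iff the associated Boolean formula is a tautology. Its negation, non-validity, is witnessed by a single falsifying assignment of the variables, which has size linear in $\term$ and is checkable in polynomial time; hence the problem is in coNP and, in particular, decidable.

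For coNP-hardness I would reduce from the validity problem for Boolean formulas. Given a formula $\phi$, first convert it to negation normal form $\phi'$ in polynomial time via De Morgan's laws, so that negations occur only at variables; then read $\phi'$ back through the same dictionary as a $\KAtermclass_{\set{\com{x}, \com{\id}}}$ \kl{term} $\term[1]_{\phi}$ (each literal $x$ or $\neg x$ becomes $x$ or $\com{x}$, each $\wedge$ becomes $\compo$, each $\vee$ becomes $\union$, and the constants become $\id$ or $\emp$). Because every Boolean assignment is exactly a \kl{valuation} in $\LANG_{0}$ (sending each variable to $\set{\eps}$ or $\emptyset$) and the dictionary is the isomorphism of \Cref{prop: LANG0 and Boolean algebra}, we obtain that $\phi$ is a tautology iff $\LANG_{0} \models \id \le \term[1]_{\phi}$ iff $\LANG \models \id \le \term[1]_{\phi}$. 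As the validity problem is coNP-complete, so is the identity inclusion problem, completing the proof.

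The main points requiring care—rather than a genuine obstacle—are exactly the two ingredients separating our \kl{terms} from Boolean formulas: the Kleene star, which must be collapsed to $\id$ over $\LANG_{0}$ before the Boolean reading applies, and the polynomial-size negation-normal-form conversion, which is what keeps the hardness reduction from blowing up. Once both are in place, correctness is immediate from \Cref{theorem: identity val} and \Cref{prop: LANG0 and Boolean algebra}.
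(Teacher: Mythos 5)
Your proposal is correct and takes essentially the same route as the paper: both arguments reduce to $\LANG_{0}$ via \Cref{theorem: identity val}, eliminate $\bl^{\kstar}$ using $\LANG_{0} \models \term^{\kstar} = \id$, and then use the Boolean-algebra isomorphism of \Cref{prop: LANG0 and Boolean algebra} to identify the problem with Boolean validity, yielding the coNP upper bound (guess a falsifying $\LANG_{0}$ valuation, verify in polynomial time) and the hardness side alike. The only cosmetic difference is in the hardness source: the paper reduces directly from validity of formulas in disjunctive normal form, where negations already sit only on literals, whereas you reduce from general Boolean validity and add a polynomial-time negation-normal-form conversion---both are standard and equally valid.
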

\begin{proof}
    By \Cref{theorem: identity val,prop: LANG0 and Boolean algebra}, this problem is almost equivalent to the validity problem of propositional formulas in disjunctive normal form, which is a well-known coNP-complete problem \cite{cookComplexityTheoremprovingProcedures1971}.\footnote{From this, the \kl{equational theory} of $\LANG_{0}$ is decidable in coNP, even for $\KAtermclass_{\set{-}}$ \kl{terms}.}
    (in coNP):
    For the complement of this problem, \Cref{theorem: identity val} induces the following non-deterministic polynomial algorithm:
    \begin{enumerate}
        \item Pick up some $\val \in \LANG_{0}$ s.t.\ $\val(x) \subseteq \set{\eps}$ for each $x$, non-deterministically.
        \item If $\hat{\val}(\id) \not\subseteq \hat{\val}(\term)$, then return $\const{True}$;
              otherwise return $\const{False}$.
    \end{enumerate}
    Then we have $\begin{cases}
        \LANG \not\models \id \le \term & (\mbox{some execution returns $\const{True}$})\\
        \LANG \models \id \le \term & (\mbox{otherwise})
    \end{cases}$.
    Hence, the identity inclusion problem is decidable in coNP, as its complemented problem is in NP.

    (coNP-hard):
    Given a propositional formula $\varphi$ in disjunctive normal form,
    let $\term$ be the $\KAtermclass_{\set{\com{x}, \com{\id}}}$ \kl{term} obtained from $\varphi$ according to the map of \Cref{prop: LANG0 and Boolean algebra} (so, conjunction $\land$ maps to $\compo$, disjunction $\lor$ to $\union$, positive literal $x$ to the variable $x$, and negative literal $\com{x}$ to the complemented variable $\com{x}$);
    for example, if $\varphi = (x \land \com{y}) \lor (y \lor \com{x})$, then $\term = (x \compo \com{y}) \union (y \union \com{x})$.
    By \Cref{prop: LANG0 and Boolean algebra} and \Cref{theorem: identity val}, $\varphi$ is valid in propositional logic iff $\LANG_{0} \models \id \le \term$ iff $\LANG \models \id \le \term$.
    Hence, the identity inclusion problem is coNP-hard.
\end{proof}

\begin{rem}\label{remark: identity inclusion}
    Under the standard language valuation, the identity inclusion problem---given a $\KAtermclass_{\set{\com{x}, \com{\id}}}$ \kl{term} $\term$, does $\ljump{\id} \subseteq \ljump{\term}$ (i.e., $\eps \in \ljump{\term}$)?---is decidable in P, because we can compute ``$\eps \in \ljump{\term}$?'' by induction on $\term$, as $\eps \not\in \ljump{x}$ and $\eps \in \ljump{\com{x}}$ for every variable $x$.
    Hence, for $\KAtermclass_{\set{\com{x}, \com{\id}}}$ \kl{terms}, the identity inclusion problem w.r.t.\ \kl{languages} is strictly harder than that under the standard language valuation, unless P = NP.
    (This situation is the same for $\KAtermclass_{\set{\com{x}}}$ \kl{terms}.)
\end{rem}
\section{Words-to-letters valuations for the variable/word inclusion problem}\label{section: words_to_letters}
Next, we consider the \emph{variable inclusion problem}:
\begin{center}
    Given a \kl{variable} $x$ and a $\KAtermclass_{\set{\com{x}, \com{\id}}}$ \kl{term} $\term$, does $\LANG \models x \le \term$?
\end{center}
In the identity inclusion problem, if $\word \in \hat{\val}(\id) \setminus \hat{\val}(\term)$, then $\word = \eps$ should hold; so it suffices to consider the membership of the empty \kl{word} $\eps$.
However, in the variable inclusion problem, this situation changes; if $\word \in \hat{\val}(x) \setminus \hat{\val}(\term)$, then $\word$ is possibly any \kl{word}.
To overcome this problem, we introduce \kl{words-to-letters valuations} (\Cref{definition: val for word to letter,definition: val for words to letters}).

In \Cref{section: variable inclusion}, we consider the \emph{variable inclusion problem}.
In \Cref{section: word inclusion}, we consider the \emph{word inclusion problem}, which is a generalization of the \emph{variable inclusion problem} from \kl{variables} to \kl{words}.

\subsection{The variable inclusion problem}\label{section: variable inclusion}
Let $\word \in \hat{\val}(x) \setminus \hat{\val}(\term)$ be a non-empty \kl{word} $\word$.
Then we can construct a \kl{valuation} $\val'$ s.t. $\ell \in \hat{\val}'(x) \setminus \hat{\val}'(\term)$ for some \emph{\kl{letter}} $\ell$.
If such $\val'$ can be constructed from $\val$, then it suffices to consider the membership of \kl{letters}.
Such $\val'$ can be defined as follows:
\begin{defi}\label{definition: val for word to letter}
    For a \kl{valuation} $\val \in \LANG_{X}$ and a \kl{word} $\word$ over $X$,
    the \kl{valuation} $\val^{\word} \in \LANG_{\set{\ell}}$ (where $\ell$ is a \kl{letter}) is defined as follows:
    \[\val^{\word}(x) \ \defeq\  \set{\eps \mid \eps \in \val(x)} \cup \set{\ell \mid \word \in \val(x)}.\]
\end{defi}
In the following, when $\word$ is a non-empty \kl{word},
we prove that $\val^{\word}$ satisfies the condition of $\val'$ above, i.e., the following conditions:
\begin{itemize}
    \item $\word \in \hat{\val}(x) \quad\Rightarrow\quad \ell \in \hat{\val}^{\word}(x)$,
    \item $\word \not\in \hat{\val}(\term) \quad\Rightarrow\quad \ell \not\in \hat{\val}^{\word}(\term)$.
\end{itemize}
The first condition is clear by the definition of $\val^{\word}$.
The second condition is shown as follows.
\begin{lem}\label{lemma: ell abstraction}
    Let $\val \in \LANG$ and $\word$ be a non-empty \kl{word}.
    For all $\KAtermclass_{\set{\com{x}, \com{\id}}}$ \kl{terms} $\term$, we have:
    \[\ell \in \hat{\val}^{\word}(\term) \quad \Rightarrow \quad \word \in \hat{\val}(\term).\]
\end{lem}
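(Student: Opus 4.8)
The plan is to reduce, as in the proof of \Cref{lemma: identity val}, to the case where $\term$ is a single \kl{word} over $\tilde{\vsig}_{\com{\id}}$, and then to exploit that $\val^{\word}$ lives over the one-letter alphabet $\set{\ell}$. Applying \Cref{lemma: lang val} to both $\val^{\word}$ and $\val$, together with the definition $\hat{\val}(\la) = \bigcup_{\word[2] \in \la} \hat{\val}(\word[2])$, gives $\hat{\val}^{\word}(\term) = \bigcup_{\term[2] \in \ljump{\term}_{\tilde{\vsig}_{\com{\id}}}} \hat{\val}^{\word}(\term[2])$ and the analogous identity for $\hat{\val}$. Hence $\ell \in \hat{\val}^{\word}(\term)$ yields some \kl{word} $\term[2] \in \ljump{\term}_{\tilde{\vsig}_{\com{\id}}}$ with $\ell \in \hat{\val}^{\word}(\term[2])$, and it suffices to establish $\word \in \hat{\val}(\term[2])$, since then $\word \in \hat{\val}(\term)$ follows. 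So the argument reduces to the case $\term = y_0 \dots y_{m-1}$ with each $y_k \in \tilde{\vsig}_{\com{\id}}$.

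For this base case, the key observation is that each $\hat{\val}^{\word}(y_k)$ is a \kl{language} over $\set{\ell}$, so from $\ell \in \hat{\val}^{\word}(y_0) \compo \dots \compo \hat{\val}^{\word}(y_{m-1})$ and $\len{\ell} = 1$ there must be an index $j$ such that $\ell \in \hat{\val}^{\word}(y_j)$ while $\eps \in \hat{\val}^{\word}(y_k)$ for every $k \neq j$: one factor supplies the single \kl{letter}, and all others must supply $\eps$. I would then reproduce this splitting for $\word$, placing $\word$ at position $j$ and $\eps$ elsewhere. To justify this I would unfold \Cref{definition: val for word to letter} and verify, for each kind of letter $y_k \in \set{x, \com{x}, \com{\id}}$, the two implications $\eps \in \hat{\val}^{\word}(y_k) \Rightarrow \eps \in \hat{\val}(y_k)$ and $\ell \in \hat{\val}^{\word}(y_k) \Rightarrow \word \in \hat{\val}(y_k)$. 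For $y_k = x$ both are immediate, since $\eps \in \val^{\word}(x) \Leftrightarrow \eps \in \val(x)$ and $\ell \in \val^{\word}(x) \Leftrightarrow \word \in \val(x)$; for $y_k = \com{x}$ they follow by complementation. Combining these, $\word = \eps \dots \eps\, \word\, \eps \dots \eps$ witnesses $\word \in \hat{\val}(y_0) \compo \dots \compo \hat{\val}(y_{m-1}) = \hat{\val}(\term)$.

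The delicate case, and the one that genuinely needs the hypothesis that $\word$ is non-empty, is the constant $\com{\id}$. Since the complement is taken in the one-letter model, $\hat{\val}^{\word}(\com{\id}) = \set{\ell}^{\kstar} \setminus \set{\eps} = \set{\ell}^{+}$, which contains $\ell$ but not $\eps$; in particular it is not a subset of $\set{\eps, \ell}$, unlike the variable case. Thus $\com{\id}$ can never occupy an $\eps$-position $k \neq j$ (consistent with the forced splitting), while at position $j$ the premise $\ell \in \hat{\val}^{\word}(\com{\id})$ holds automatically, and the conclusion $\word \in \hat{\val}(\com{\id})$ holds exactly because $\hat{\val}(\com{\id})$ is the complement of $\set{\eps}$ and $\word \neq \eps$. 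I expect this $\com{\id}$ case to be the main obstacle, as it is the only place where the single-letter complement departs from a plain subset of $\set{\eps, \ell}$ and where the non-emptiness of $\word$ is indispensable---the implication simply fails for $\word = \eps$.
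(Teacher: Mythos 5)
Your proof is correct and follows essentially the same route as the paper's: reduce via \Cref{lemma: lang val} to the case where $\term$ is a word over $\tilde{\vsig}_{\com{\id}}$, observe that the single letter $\ell$ forces exactly one position to receive $\ell$ and all others $\eps$, then check the letter-level implications, with the $\com{\id}$ case being exactly where the non-emptiness of $\word$ is used. The only cosmetic difference is that the paper dispatches the $\eps$-positions by citing \Cref{lemma: identity val}, whereas you verify them by unfolding \Cref{definition: val for word to letter} directly.
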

\begin{proof}
    As with \Cref{lemma: identity val}, by \Cref{lemma: lang val}, it suffices to show when $\term$ is a \kl{word} over $\tilde{\vsig}_{\com{\id}}$.
    Let $\term = x_0 \dots x_{m-1}$ where $m \ge 0$ and $x_0, \dots, x_{m-1} \in \tilde{\vsig}_{\com{\id}}$.
    Then there is $i \in \range{0, m-1}$ s.t.\ 
    \begin{itemize}
        \item $\ell \in \hat{\val}^{\word}(x_i)$,
        \item $\eps \in \hat{\val}^{\word}(x_j)$ for $j \in \range{0, m-1} \setminus \set{i}$.
    \end{itemize}
    For $x_i$, we distinguish the following cases:
    \begin{itemize}
        \item Case $x_i = z, \com{z}$ where $z \in \vsig$:
        By the construction of $\val^{\word}$,
        we have that $\ell \in \hat{\val}^{\word}(z)$ iff $\word \in \hat{\val}(z)$.
        Similarly, we also have that $\ell \in \hat{\val}^{\word}(\com{z})$ iff $\word \in \hat{\val}(\com{z})$.
        \item Case $x_i = \com{\id}$:
        Because $\word$ is a non-empty \kl{word}, we have $\word \in \hat{\val}(\com{\id})$.
    \end{itemize}
    Hence, $\word \in \hat{\val}(x_i)$.
    For $x_j$, by \Cref{lemma: identity val} and $\eps \in \hat{\val}^{\word}(x_j)$, we have $\eps \in \hat{\val}(x_j)$.
    Thus, $\word \in \hat{\val}(\term)$.
\end{proof}
Thus $\val^{\word}$ satisfies the following:
\begin{cor}\label{corollary: ell abstraction variable}
    Let $\val \in \LANG$.
    For all \kl{variables} $x$ and $\KAtermclass_{\set{\com{x}, \com{\id}}}$ \kl{terms} $\term[1]$, we have:
    \begin{itemize}
        \item For a non-empty \kl{word} $\word$, if $\word \in \hat{\val}(x) \setminus \hat{\val}(\term[1])$, then $\ell \in \hat{\val}^{\word}(x) \setminus \hat{\val}^{\word}(\term[1])$.
        \item For a \kl{word} $\word$, if $\eps \in \hat{\val}(x) \setminus \hat{\val}(\term[1])$, then $\eps \in \hat{\val}^{\word}(x) \setminus \hat{\val}^{\word}(\term[1])$.%
    \end{itemize}
\end{cor}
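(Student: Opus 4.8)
The plan is to obtain both bullets directly from the definition of $\val^{\word}$ (\Cref{definition: val for word to letter}) together with the two preceding results, \Cref{lemma: ell abstraction} and \Cref{lemma: identity val}. In each bullet the claim is a conjunction of a \emph{positive} statement about the \kl{variable} $x$ and a \emph{negative} statement about the \kl{term} $\term[1]$, and I would treat these two halves separately: the positive halves I expect to be immediate unfoldings of the definition of $\val^{\word}$, while the negative halves should follow by contraposition from the appropriate lemma.

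For the first bullet, assume $\word$ is non-empty and $\word \in \hat{\val}(x) \setminus \hat{\val}(\term[1])$. Since $x$ is a \kl{variable}, $\hat{\val}^{\word}(x) = \val^{\word}(x)$, and by \Cref{definition: val for word to letter} the summand $\set{\ell \mid \word \in \val(x)}$ contributes $\ell$ exactly because $\word \in \val(x) = \hat{\val}(x)$; hence $\ell \in \hat{\val}^{\word}(x)$. For the negative half, I would read \Cref{lemma: ell abstraction} contrapositively: from $\ell \in \hat{\val}^{\word}(\term) \Rightarrow \word \in \hat{\val}(\term)$ and $\word \notin \hat{\val}(\term[1])$ we get $\ell \notin \hat{\val}^{\word}(\term[1])$. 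This is where the hypothesis that $\word$ is non-empty is consumed, since \Cref{lemma: ell abstraction} requires it.

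For the second bullet, assume $\eps \in \hat{\val}(x) \setminus \hat{\val}(\term[1])$ for an arbitrary \kl{word} $\word$. Again $\eps \in \hat{\val}^{\word}(x)$ is immediate, this time from the first summand $\set{\eps \mid \eps \in \val(x)}$ of $\val^{\word}(x)$. For the negative half $\eps \notin \hat{\val}^{\word}(\term[1])$, \Cref{lemma: ell abstraction} is no longer available (it may fail when $\word = \eps$), so I would instead invoke \Cref{lemma: identity val} applied to the pair $\val, \val^{\word}$, whose hypothesis is that $\eps \in \val(x)$ iff $\eps \in \val^{\word}(x)$ for every \kl{variable} $x$.

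The only step needing care, and hence the main obstacle, is verifying this hypothesis of \Cref{lemma: identity val} uniformly, in particular when $\word = \eps$: in that case the second summand $\set{\ell \mid \word \in \val(x)}$ can be non-empty and so $\val^{\word}(x)$ may acquire the \kl{letter} $\ell$. The point to observe is that $\ell$ is a fresh \kl{letter} distinct from the empty \kl{word} $\eps$, so this extra element never affects $\eps$-membership; thus $\eps \in \val^{\word}(x)$ iff $\eps \in \val(x)$ holds regardless of whether $\word$ is empty. With this, \Cref{lemma: identity val} yields $\eps \in \hat{\val}(\term[1])$ iff $\eps \in \hat{\val}^{\word}(\term[1])$, and the assumption $\eps \notin \hat{\val}(\term[1])$ closes the case.
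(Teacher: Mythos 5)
Your proposal is correct and follows essentially the same route as the paper: the positive halves from the construction of $\val^{\word}$, the negative half of the first bullet from the contrapositive of \Cref{lemma: ell abstraction}, and the second bullet from \Cref{lemma: identity val}. Your extra verification that $\eps \in \val(y)$ iff $\eps \in \val^{\word}(y)$ for every variable $y$ (even when $\word = \eps$, since $\ell \neq \eps$) is exactly the hypothesis the paper leaves implicit when it writes ``By \Cref{lemma: identity val}.''
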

\begin{proof}
    For the first statement:
    By the construction of $\val^{\word}$ and $\word \in \hat{\val}(x)$, we have $\ell \in \hat{\val}^{\word}(x)$.
    By \Cref{lemma: ell abstraction} and $\word \not\in \hat{\val}(\term[1])$, we have $\ell \not\in \hat{\val}^{\word}(\term[1])$.
    For the second statement:
    By \Cref{lemma: identity val}.
\end{proof}
\begin{thm}\label{theorem: char and univ variable}
    For all \kl{variables} $x$ and $\KAtermclass_{\set{\com{x}, \com{\id}}}$ \kl{terms} $\term[1]$, the following are equivalent:
    \begin{enumerate}
        \item \label{theorem: char and univ variable 1} $\LANG \models x \le \term[1]$,
        \item \label{theorem: char and univ variable 2} $\set{\val \in \LANG_{\set{\ell}} \mid \mbox{$\forall y \in \vsig, \val(y) \subseteq \set{\eps, \ell}$}} \models x \le \term[1]$,
        \item \label{theorem: char and univ variable 3} $\bigcup_{X} \set{\val^{\word} \mid \val \in \LANG_{X} \mbox{ and } \word \in X^{+}} \models x \le \term[1]$.
    \end{enumerate}
\end{thm}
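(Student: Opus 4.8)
The plan is to establish the cycle $(\ref{theorem: char and univ variable 1}) \Rightarrow (\ref{theorem: char and univ variable 2}) \Rightarrow (\ref{theorem: char and univ variable 3}) \Rightarrow (\ref{theorem: char and univ variable 1})$, exploiting that the three classes of \kl{valuations} are nested. The first thing I would record is the chain of class inclusions. By \Cref{definition: val for word to letter}, every \kl{valuation} $\val^{\word}$ occurring in (\ref{theorem: char and univ variable 3}) lies in $\LANG_{\set{\ell}}$ and satisfies $\val^{\word}(y) \subseteq \set{\eps, \ell}$ for every $y \in \vsig$; hence the class in (\ref{theorem: char and univ variable 3}) is contained in the class in (\ref{theorem: char and univ variable 2}), which is in turn a subclass of $\LANG$. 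Since validity of an \kl{inequation} on a class of \kl{valuations} descends to every subclass, the implications $(\ref{theorem: char and univ variable 1}) \Rightarrow (\ref{theorem: char and univ variable 2})$ and $(\ref{theorem: char and univ variable 2}) \Rightarrow (\ref{theorem: char and univ variable 3})$ are then immediate.

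All the content therefore sits in $(\ref{theorem: char and univ variable 3}) \Rightarrow (\ref{theorem: char and univ variable 1})$, which I would prove by contraposition. Assuming $\LANG \not\models x \le \term[1]$, I would fix $\val \in \LANG_{X}$ together with a \kl{word} $\word \in \hat{\val}(x) \setminus \hat{\val}(\term[1])$, and aim to exhibit a \kl{valuation} from the class of (\ref{theorem: char and univ variable 3}) that refutes $x \le \term[1]$ as well. The natural split is on whether $\word$ is empty. When $\word \neq \eps$, we have $\word \in X^{+}$, so $\val^{\word}$ already belongs to the class of (\ref{theorem: char and univ variable 3}), and the first item of \Cref{corollary: ell abstraction variable} yields $\ell \in \hat{\val}^{\word}(x) \setminus \hat{\val}^{\word}(\term[1])$, which is the desired refutation.

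The subtle point, which I expect to be the main obstacle, is the case $\word = \eps$: then $\eps \notin X^{+}$, so $\val^{\eps}$ is not of the prescribed form, and indeed $X$ itself may be empty, leaving no non-empty \kl{word} over $X$ at all. To repair this I would pass to the one-\kl{letter} alphabet $\set{\ell}$ and define $\val' \in \LANG_{\set{\ell}}$ by $\val'(y) \defeq \set{\eps \mid \eps \in \val(y)}$. Because $\val'(y) \subseteq \set{\eps}$, a direct check gives $(\val')^{\ell} = \val'$ with $\ell \in \set{\ell}^{+}$, so $\val'$ genuinely lies in the class of (\ref{theorem: char and univ variable 3}). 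As $\eps \in \val(y)$ iff $\eps \in \val'(y)$ for every $y$, \Cref{lemma: identity val} transfers the membership of $\eps$ and gives $\eps \in \hat{\val}'(x) \setminus \hat{\val}'(\term[1])$, again a refutation inside the class of (\ref{theorem: char and univ variable 3}). Combining the two cases closes the cycle. Throughout, the only genuine care needed is the alphabet bookkeeping---ensuring the witnessing \kl{word} is non-empty and that the manufactured \kl{valuation} truly sits in the intended class---since the transfer results \Cref{lemma: ell abstraction,corollary: ell abstraction variable,lemma: identity val} already carry out the semantic heavy lifting.
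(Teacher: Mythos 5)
Your proof is correct and takes essentially the same route as the paper: the cycle $(1)\Rightarrow(2)\Rightarrow(3)\Rightarrow(1)$, with the two easy implications by class inclusion and the contraposition of $(3)\Rightarrow(1)$ resting on \Cref{corollary: ell abstraction variable} (non-empty witness) and \Cref{lemma: identity val} (empty witness). Your explicit repair of the $\word = \eps$ case---constructing $\val' \in \LANG_{\set{\ell}}$ with $(\val')^{\ell} = \val'$ so that the refuting valuation genuinely lies in the class of $(3)$ even when $X = \emptyset$---is in fact slightly more careful than the paper's one-line citation of the corollary, which leaves that bookkeeping implicit.
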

\begin{proof}
    (\ref{theorem: char and univ variable 1})$\Rightarrow$(\ref{theorem: char and univ variable 2}):
    Trivial.
    (\ref{theorem: char and univ variable 2})$\Rightarrow$(\ref{theorem: char and univ variable 3}):
    Because $\hat{\val}^{\word}(y) \subseteq \set{\eps, \ell}$ for each $y$.
    (\ref{theorem: char and univ variable 3})$\Rightarrow$(\ref{theorem: char and univ variable 1}):
    The contraposition is shown by \Cref{corollary: ell abstraction variable}.
\end{proof}
\begin{cor}\label{corollary: variable decidable}
    The variable inclusion problem---given a variable $x$ and a $\KAtermclass_{\set{\com{x}, \com{\id}}}$ \kl{term} $\term$, does $\LANG \models x \le \term$?---is decidable and coNP-complete for $\KAtermclass_{\set{\com{x}, \com{\id}}}$ \kl{terms}.
\end{cor}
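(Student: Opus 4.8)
The plan is to prove Corollary~\ref{corollary: variable decidable} by reducing the variable inclusion problem to a membership problem over a single letter, using the characterization established in Theorem~\ref{theorem: char and univ variable}. For the upper bound, I would rely on the equivalence between items~(\ref{theorem: char and univ variable 1}) and~(\ref{theorem: char and univ variable 2}): deciding $\LANG \models x \le \term$ is equivalent to checking that every valuation $\val \in \LANG_{\set{\ell}}$ with $\val(y) \subseteq \set{\eps, \ell}$ for each variable $y$ satisfies $x \le \term$. The crucial observation is that only finitely many variables occur in $\term$ (say $y_1, \dots, y_k$), and for each such variable the valuation is determined by two independent binary choices---whether $\eps \in \val(y_i)$ and whether $\ell \in \val(y_i)$---so there are only $4^k$ relevant valuations, each describable in polynomial space.

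For the coNP membership, I would describe a nondeterministic polynomial-time algorithm for the complement problem ($\LANG \not\models x \le \term$): guess a valuation $\val$ of the restricted form from item~(\ref{theorem: char and univ variable 2}) (i.e.\ guess, for each variable $y$ occurring in $\term$, the two bits recording whether $\eps$ and $\ell$ lie in $\val(y)$), then verify that $x \le \term$ fails under $\val$. The verification step amounts to checking whether one of $\eps, \ell$ lies in $\hat{\val}(x) \setminus \hat{\val}(\term)$; since $\domain{\lang_{\set{\ell}}}$ has only the four elements $\emptyset, \set{\eps}, \set{\ell}, \set{\eps,\ell}$, each operation $\union, \compo, {\bl}^{\kstar}, {\bl}^{-}$ acts on this four-element lattice and $\hat{\val}(\term)$ can be evaluated bottom-up over $\term$ in polynomial time. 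Membership of $\eps$ or $\ell$ can thus be decided in polynomial time, which places the complement in NP and hence the variable inclusion problem in coNP.

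For coNP-hardness, I would reduce from the identity inclusion problem, which is already shown coNP-hard in Corollary~\ref{corollary: identity decidable}. Given a $\KAtermclass_{\set{\com{x}, \com{\id}}}$ \kl{term} $\term$, I observe that $\LANG \models \id \le \term$ can be encoded as an instance of variable inclusion by choosing a fresh variable $z$ not occurring in $\term$ and asking whether $\LANG \models z \le (\term \cap \id) \union z\top$, or more directly by noting that the coNP-hardness construction of Corollary~\ref{corollary: identity decidable} (built from a propositional formula $\varphi$ in disjunctive normal form, mapping $\land \mapsto \compo$, $\lor \mapsto \union$, literals to variables and complemented variables) already produces the required hardness once one guarantees that the witnessing discrepancy is carried by a single letter rather than by $\eps$. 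The cleanest route is simply to adapt the reduction of Corollary~\ref{corollary: identity decidable}: the same propositional encoding yields a term $\term$ such that validity of $\varphi$ is equivalent to $\LANG \models x \le \term$, where the variable $x$ plays the role of the left-hand side, exploiting that item~(\ref{theorem: char and univ variable 2}) lets us restrict to letter-valued valuations.

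The main obstacle I anticipate is the hardness direction: I must ensure that the reduction produces a genuine variable inclusion instance whose failures are witnessed at the letter level captured by Theorem~\ref{theorem: char and univ variable}, rather than an instance that collapses to the $\eps$-only behaviour of the identity inclusion problem. Care is needed so that the fresh left-hand variable $x$ and the constructed term $\term$ interact correctly under the restricted valuations $\val(y) \subseteq \set{\eps, \ell}$, and that the polynomial evaluation of $\hat{\val}(\term)$ over the four-element algebra faithfully mirrors the Boolean semantics of $\varphi$; the upper-bound argument, by contrast, is essentially routine once Theorem~\ref{theorem: char and univ variable} is in hand.
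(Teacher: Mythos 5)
Your coNP upper bound is essentially the paper's argument: both rest on item~(\ref{theorem: char and univ variable 2}) of \Cref{theorem: char and univ variable}, guess a single-letter valuation with $\val(y) \subseteq \set{\eps,\ell}$, and verify failure in polynomial time. One imprecision to fix: the domain of $\lang_{\set{\ell}}$ is $\wp(\set{\ell}^{\kstar})$, which is infinite, not the four-element set $\set{\emptyset, \set{\eps}, \set{\ell}, \set{\eps,\ell}}$ --- e.g.\ $\hat{\val}(y \compo y)$ may contain $\ell\ell$, and $\hat{\val}(y^{\kstar})$ is infinite. What your algorithm actually needs (and what is true) is that the two bits ``$\eps \in \hat{\val}(\term[3])$?'' and ``$\ell \in \hat{\val}(\term[3])$?'' can be computed bottom-up over subterms $\term[3]$: at the leaves $y, \com{y}, \id, \com{\id}, \emp$ they are read off from the guessed valuation, and they are compositional under $\union$, $\compo$, and ${\bl}^{\kstar}$ (e.g.\ $\ell \in \hat{\val}(\term[2] \compo \term[3])$ iff one factor contributes $\ell$ and the other contributes $\eps$). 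Since any counterexample word lies in $\val(x) \subseteq \set{\eps,\ell}$, checking these two bits suffices; with this restatement the upper bound is correct.

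The hardness direction, however, contains a genuine error. Your ``cleanest route'' asserts that the DNF encoding of \Cref{corollary: identity decidable} already satisfies: $\varphi$ is valid iff $\LANG \models x \le \term$. This is false. Take $\varphi = (y \land z) \lor \com{y} \lor \com{z}$, which is valid; its encoding is $\term = (y \compo z) \union \com{y} \union \com{z}$. With the alphabet $\set{\mathtt{a}}$ and $\val(x) = \val(y) = \val(z) = \set{\mathtt{a}}$ we get $\hat{\val}(y \compo z) = \set{\mathtt{a}\mathtt{a}}$ and $\hat{\val}(\com{y}) = \hat{\val}(\com{z}) = \set{\mathtt{a}}^{\kstar} \setminus \set{\mathtt{a}}$, so $\mathtt{a} \in \hat{\val}(x) \setminus \hat{\val}(\term)$ and $\LANG \not\models x \le \term$. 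The root cause is that $\compo$ simulates conjunction only at the empty word ($\eps \in \la[1] \compo \la[2]$ iff $\eps \in \la[1]$ and $\eps \in \la[2]$), not at the letter level; this is exactly why the identity and variable inclusion problems need different treatment, and your stated worry about forcing the ``witnessing discrepancy'' onto a letter is a symptom of this gap rather than a cure. Your alternative reduction $z \le (\term \cap \id) \union z\top$ is also unavailable, since $\cap$ abbreviates $(\term[1]^{-} \union \term[2]^{-})^{-}$ and thus leaves the fragment $\KAtermclass_{\set{\com{x},\com{\id}}}$. The missing idea is the paper's composition trick: with a fresh variable $z$, reduce $\LANG \models \id \le \term$ to $\LANG \models z \le z \compo \term$. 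Direction ($\Rightarrow$) is by congruence ($z = z \compo \id \le z \compo \term$), and direction ($\Leftarrow$) is by substitution: since $z$ does not occur in $\term$, restrict to valuations with $\val(z) = \set{\eps}$, which turns $z \le z \compo \term$ into $\eps \in \hat{\val}(\term)$, i.e.\ $\id \le \term$. This deliberately keeps the counterexample witness at $\eps$ while making the instance syntactically a variable inclusion.
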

\begin{proof}
    (in coNP):
    By (\ref{theorem: char and univ variable 2}) of \Cref{theorem: char and univ variable},
    we can give an algorithm as with \Cref{corollary: identity decidable}.
    (coNP-hard):
    We give a reduction from the validity problem of propositional formulas in disjunctive normal form, as with \Cref{corollary: identity decidable}.
    Given a propositional formula $\varphi$ in disjunctive normal form,
    let $\term$ be the $\KAtermclass_{\set{\com{x}, \com{\id}}}$ \kl{term} such that $\varphi$ is valid iff $\LANG \models \id \le \term$,
    where $\term$ can be given by the translation in \Cref{corollary: identity decidable}.
    By using a fresh \kl{variable} $z$,
    we have the following:
    \[\LANG \models \id \le \term \quad\Leftrightarrow\quad \LANG \models z \le z \compo \term.\]
    For ($\Rightarrow$): By the congruence law.
    For ($\Leftarrow$): By the substitution law.
    Hence, the variable inclusion problem is coNP-hard.
\end{proof}

\begin{rem}\label{remark: composition-free}
    \Cref{corollary: ell abstraction variable} fails for general \kl{terms}.
    E.g., when $\val(x) = \set{a}$, we have:
    \begin{align*}
        a a \in \hat{\val}(x x), &  & \ell \not\in \hat{\val}^{a a}(x x).
    \end{align*}
    (Note that $\hat{\val}^{a a}(x x) = \emptyset$ holds, as $\hat{\val}^{a a}(x) = \emptyset$ by $\val(x) = \set{a}$.)
\end{rem}

\begin{rem}\label{remark: commutativity}
    \Cref{theorem: char and univ variable} fails for general \kl{equations},
    e.g., the \kl{inequation} $xy \le yx$ (see also \Cref{proposition: LANG1}).
\end{rem} 
\subsection{The word inclusion problem}\label{section: word inclusion}
We recall $\tilde{\vsig}_{\com{\id}} = \set{x, \com{x} \mid x \in \vsig} \cup \set{\com{\id}}$.
The \emph{word inclusion problem} is the following problem:
\begin{center}
    Given a \kl{word} $\word$ over $\tilde{\vsig}_{\com{\id}}$ and a $\KAtermclass_{\set{\com{x}, \com{\id}}}$ \kl{term} $\term$, does $\LANG \models \word \le \term$?
\end{center}
We can also solve this problem by generalizing the \kl{valuation} of \Cref{definition: val for word to letter}, as follows.
\begin{defi}[\intro*\kl{words-to-letters valuations}]\label{definition: val for words to letters}
    For a \kl{valuation} $\val \in \LANG_{X}$ and \kl{words} $\word_0, \dots, \word_{n-1}$ over $X$,
    the \kl{valuation} $\val^{\tuple{\word_0, \dots, \word_{n-1}}} \in \LANG_{\set{\ell_0, \dots, \ell_{n-1}}}$ is defined as follows
    where $n \ge 0$ and $\ell_{0}, \dots, \ell_{n-1}$ are pairwise distinct \kl{letters}:\footnote{The \kl{valuation} $\val^{\word}$ (\Cref{definition: val for word to letter}) is the case $n = 1$.
    The \kl{valuation} $\val^{\tuple{}}$ in \Cref{theorem: identity val} is the case $n = 0$.}
    \[\val^{\tuple{\word_0, \dots, \word_{n-1}}}(x) \ \defeq\  \set{\ell_i \dots \ell_{j-1} \mid 0 \le i \le j \le n \  \land \  \word_i \dots \word_{j-1} \in \val(x)}.\]
\end{defi}
Let $\mathrm{Subw}(\word)$ be the set of all subwords of $\word$.
Then note that $\val^{\tuple{\word_0, \dots, \word_{n-1}}}(x) \subseteq \mathrm{Subw}(\ell_{0} \dots \ell_{n-1})$.

By using \kl{words-to-letters valuations}, we can strengthen the decidability result in \Cref{section: variable inclusion} from \kl{variables} to \kl{words}.

\begin{lem}[cf.\ \Cref{lemma: ell abstraction}]\label{lemma: ell abstraction gen}
    Let $\val \in \LANG$ and $\word_0, \dots, \word_{n-1}$ be non-empty \kl{words} where $n \ge 0$.
    For all $\KAtermclass_{\set{\com{x}, \com{\id}}}$ \kl{terms} $\term$ and $0 \le i \le j \le n$, we have:
    \[\ell_{i} \dots \ell_{j-1} \in \hat{\val}^{\tuple{\word_0, \dots, \word_{n-1}}}(\term) \quad \Rightarrow \quad \word_{i} \dots \word_{j-1} \in \hat{\val}(\term).\]
\end{lem}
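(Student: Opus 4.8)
The plan is to mimic the proof of \Cref{lemma: ell abstraction} (the case $n=1$), reducing to words over $\tilde{\vsig}_{\com{\id}}$ via \Cref{lemma: lang val} and then analyzing how a factor $\ell_i \dots \ell_{j-1}$ can be read through such a word. First I would invoke \Cref{lemma: lang val}: since $\hat{\val}(\term) = \hat{\val}(\ljump{\term}_{\tilde{\vsig}_{\com{\id}}})$ and $\hat{\val}$ distributes over unions, it suffices to prove the implication when $\term$ is a single \kl{word} $z_0 \dots z_{m-1}$ over $\tilde{\vsig}_{\com{\id}}$. Indeed, if the statement holds for every such \kl{word}, then $\ell_i \dots \ell_{j-1} \in \hat{\val}^{\tuple{\word_0,\dots,\word_{n-1}}}(\term)$ gives some $z_0 \dots z_{m-1} \in \ljump{\term}_{\tilde{\vsig}_{\com{\id}}}$ with $\ell_i \dots \ell_{j-1} \in \hat{\val}^{\tuple{\word_0,\dots,\word_{n-1}}}(z_0\dots z_{m-1})$, and the word-case conclusion $\word_i \dots \word_{j-1} \in \hat{\val}(z_0 \dots z_{m-1}) \subseteq \hat{\val}(\term)$ finishes it.

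Next I would unfold the membership $\ell_i \dots \ell_{j-1} \in \hat{\val}^{\tuple{\word_0,\dots,\word_{n-1}}}(z_0 \dots z_{m-1})$. By definition of \kl{composition}, there is a factorization of $\ell_i \dots \ell_{j-1}$ into $m$ consecutive blocks, one per letter $z_k$. Since the \kl{letters} $\ell_i, \dots, \ell_{j-1}$ are pairwise distinct and consecutive, such a factorization is determined by a chain of indices $i = p_0 \le p_1 \le \dots \le p_m = j$, where the $k$-th block is $\ell_{p_k} \dots \ell_{p_{k+1}-1}$ and lies in $\hat{\val}^{\tuple{\word_0,\dots,\word_{n-1}}}(z_k)$. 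The goal is to show that the corresponding factorization $\word_i \dots \word_{j-1} = (\word_{p_0} \dots \word_{p_1-1})(\word_{p_1}\dots\word_{p_2-1})\cdots$ witnesses membership in $\hat{\val}(z_0 \dots z_{m-1})$, i.e.\ that each block-word $\word_{p_k} \dots \word_{p_{k+1}-1}$ lies in $\hat{\val}(z_k)$.

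The core is thus a per-letter claim: for $z \in \tilde{\vsig}_{\com{\id}}$ and $0 \le p \le q \le n$, if $\ell_p \dots \ell_{q-1} \in \hat{\val}^{\tuple{\word_0,\dots,\word_{n-1}}}(z)$ then $\word_p \dots \word_{q-1} \in \hat{\val}(z)$. For $z = x$ or $z = \com{x}$ a \kl{variable} (or its complement), this is essentially the defining equation of $\val^{\tuple{\word_0,\dots,\word_{n-1}}}$: the value $\hat{\val}^{\tuple{\word_0,\dots,\word_{n-1}}}(x)$ collects exactly those factors $\ell_p \dots \ell_{q-1}$ with $\word_p \dots \word_{q-1} \in \val(x)$, so the implication is definitional; the complemented case reduces to the same bookkeeping since $\hat{\val}^{\tuple{\cdots}}(\com{x}) = \hat{\val}^{\tuple{\cdots}}(x)^{\compl}$ over the alphabet $\set{\ell_0,\dots,\ell_{n-1}}$ and one uses that each $\word_k$ is non-empty to match lengths correctly. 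For $z = \com{\id}$, one uses non-emptiness of the $\word_k$: if $p < q$ then $\word_p \dots \word_{q-1}$ is a non-empty \kl{word} (a concatenation of at least one non-empty word), hence lies in $\hat{\val}(\com{\id})$; and the case $p = q$ (empty factor) matches $\ell_p \dots \ell_{q-1} = \eps \in \hat{\val}^{\tuple{\cdots}}(\com{\id})$ forcing $p=q$ only when the block is empty, consistent with $\eps \notin \hat{\val}(\com{\id})$.

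The main obstacle I expect is the \kl{letters}-to-blocks alignment: I must ensure the chosen factorization of $\ell_i \dots \ell_{j-1}$ under $z_0 \dots z_{m-1}$ genuinely corresponds, index by index, to a factorization of $\word_i \dots \word_{j-1}$ under the same \kl{word}, and in particular that block boundaries are forced by the distinctness of the $\ell$'s and by the non-emptiness of the $\word_k$'s. The non-emptiness hypothesis is exactly what prevents a block $\ell_p \dots \ell_{q-1}$ with $p<q$ from being matched against $\com{\id}$ or a variable in a way that would collapse to $\eps$ on the $\word$-side; keeping this case analysis honest, rather than routine, is the delicate part, so I would be careful to treat the empty-factor ($p=q$) and non-empty-factor ($p<q$) subcases separately for each of $x$, $\com{x}$, and $\com{\id}$.
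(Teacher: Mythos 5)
Your proposal is correct and follows essentially the same route as the paper's proof: reduce to single words over $\tilde{\vsig}_{\com{\id}}$ via \Cref{lemma: lang val}, split $\ell_i \dots \ell_{j-1}$ along a chain of cut indices, and then argue per letter, with the variable/complemented-variable cases being definitional (using distinctness of the $\ell_k$) and the $\com{\id}$ case using non-emptiness of the $\word_k$ together with $\eps \notin \hat{\val}^{\tuple{\word_0,\dots,\word_{n-1}}}(\com{\id})$. The only blemish is the garbled sentence about the $p=q$ subcase for $\com{\id}$ (that case is simply vacuous, since $\eps \notin \hat{\val}^{\tuple{\word_0,\dots,\word_{n-1}}}(\com{\id})$), which does not affect the correctness of the argument.
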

\begin{proof}
    By \Cref{lemma: lang val}, it suffices to show when $\term$ is a \kl{word} over $\tilde{\vsig}_{\com{\id}}$.
    Let $\term = x_0 \dots x_{m-1}$ where $m \ge 0$ and $x_0, \dots, x_{m-1} \in \tilde{\vsig}_{\com{\id}}$.
    Then there are $i = l_0 \le l_1 \le \dots \le l_{m-1} \le l_{m} = j$ s.t.\ 
    $\ell_{l_{k}} \dots \ell_{l_{k+1} - 1} \in \hat{\val}^{\tuple{\word_0, \dots, \word_{n-1}}}(x_k)$ for each $k \in \range{0, m-1}$.
    We distinguish the following cases:
    \begin{itemize}
        \item Case $x_k = z, \com{z}$ where $z \in \vsig$:
        By the construction of $\val^{\tuple{\word_0, \dots, \word_{n-1}}}$,
        we have that $\ell_{l_{k}} \dots \ell_{l_{k+1} - 1} \in \hat{\val}^{\tuple{\word_0, \dots, \word_{n-1}}}(z)$ iff $\word_{l_{k}} \dots \word_{l_{k+1} - 1} \in \hat{\val}(z)$.
        We also have that $\ell_{l_{k}} \dots \ell_{l_{k+1} - 1} \in \hat{\val}^{\tuple{\word_0, \dots, \word_{n-1}}}(\com{z})$ iff $\word_{l_{k}} \dots \word_{l_{k+1} - 1} \in \hat{\val}(\com{z})$.
        \item Case $x_k = \com{\id}$:
        By $\eps \not\in \hat{\val}^{\tuple{\word_0, \dots, \word_{n-1}}}(\com{\id})$, we have $l_{k} < l_{k+1}$, and thus $\word_{l_{k}} \dots \word_{l_{k+1} - 1}$ is a non-empty \kl{word}.
        Thus, we have $\word_{l_{k}} \dots \word_{l_{k+1} - 1} \in \hat{\val}(\com{\id})$.
    \end{itemize}
    Thus, we have $\word_{l_{k}} \dots \word_{l_{k+1} - 1} \in \hat{\val}(x_k)$.
    Hence, we have $\word_{i} \dots \word_{j - 1} \in \hat{\val}(\term)$.
\end{proof}
Moreover, we have the following.
\begin{lem}\label{lemma: ell abstraction word}
    Let $\val \in \LANG$.
    Let $\word[2] = x_0 \dots x_{n-1}$ be a \kl{word} over $\tilde{\vsig}_{\com{\id}}$ and let $\word \in \hat{\val}(\word[2])$.
    Then there are $0 \le m \le n$ and non-empty \kl{words} $\word_{0}, \dots, \word_{m-1}$ such that
    $\word = \word_{0} \dots \word_{m-1}$ and $\ell_0 \dots \ell_{m-1} \in \hat{\val}^{\tuple{\word_0, \dots, \word_{m-1}}}(\word[2])$.
\end{lem}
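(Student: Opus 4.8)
The plan is to make the hypothesis $\word \in \hat{\val}(\word[2])$ explicit as a factorization of $\word$ and then transport that factorization to the abstracted word $\ell_0 \dots \ell_{m-1}$. Since $\word[2] = x_0 \dots x_{n-1}$ is a \kl{word} regarded as a \kl{term}, the homomorphism property gives $\hat{\val}(\word[2]) = \hat{\val}(x_0) \compo \dots \compo \hat{\val}(x_{n-1})$, so $\word \in \hat{\val}(\word[2])$ yields \kl{words} $u_0, \dots, u_{n-1}$ with $u_k \in \hat{\val}(x_k)$ for each $k \in \range{0, n-1}$ and $\word = u_0 u_1 \dots u_{n-1}$. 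First I would discard the empty factors: letting $k_0 < \dots < k_{m-1}$ enumerate the indices $k$ with $u_k \neq \eps$ and setting $\word_t \defeq u_{k_t}$, I obtain $0 \le m \le n$, non-empty \kl{words} $\word_0, \dots, \word_{m-1}$, and $\word = \word_0 \dots \word_{m-1}$, which is the required factorization of $\word$.

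It then remains to witness $\ell_0 \dots \ell_{m-1} \in \hat{\val}^{\tuple{\word_0, \dots, \word_{m-1}}}(\word[2]) = \hat{\val}^{\tuple{\word_0, \dots, \word_{m-1}}}(x_0) \compo \dots \compo \hat{\val}^{\tuple{\word_0, \dots, \word_{m-1}}}(x_{n-1})$. The natural candidate is the ``letter-ized'' copy of the factorization above: I set $p_k \defeq \eps$ when $u_k = \eps$, and $p_k \defeq \ell_t$ when $k = k_t$ (so that $u_k = \word_t \neq \eps$). Then $p_0 p_1 \dots p_{n-1} = \ell_0 \ell_1 \dots \ell_{m-1}$, since the non-empty $p_k$ are exactly $\ell_0, \dots, \ell_{m-1}$ in order, and it suffices to check $p_k \in \hat{\val}^{\tuple{\word_0, \dots, \word_{m-1}}}(x_k)$ for every $k$. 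I would do this by cases on $x_k$. If $u_k = \word_t \neq \eps$ and $x_k = z \in \vsig$, the definition of the \kl{words-to-letters valuation} gives $\ell_t \in \val^{\tuple{\word_0, \dots, \word_{m-1}}}(z)$ iff $\word_t \in \val(z)$, which holds since $u_k = \word_t \in \hat{\val}(z)$; the case $x_k = \com{z}$ follows by passing to the complement; and the case $x_k = \com{\id}$ is immediate because $p_k = \ell_t$ is a non-empty \kl{word}, hence lies in $\hat{\val}^{\tuple{\word_0, \dots, \word_{m-1}}}(\com{\id})$. If instead $u_k = \eps$, then necessarily $x_k \in \tilde{\vsig}$ (as $\eps \notin \hat{\val}(\com{\id})$), and $\eps \in \hat{\val}(x_k)$ together with \Cref{lemma: identity val} gives $\eps \in \hat{\val}^{\tuple{\word_0, \dots, \word_{m-1}}}(x_k)$.

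The step that I expect to require the most care is the equivalence ``$\ell_t \in \val^{\tuple{\word_0, \dots, \word_{m-1}}}(z)$ iff $\word_t \in \val(z)$'' used in the case analysis, together with its dual for complemented \kl{variables}; the delicate point is that the abstraction must not create spurious memberships. This is exactly where the pairwise distinctness of $\ell_0, \dots, \ell_{m-1}$ enters: a subword $\ell_i \dots \ell_{j-1}$ of $\ell_0 \dots \ell_{m-1}$ equals $\ell_t$ only when $i = t$ and $j = t+1$ (and equals $\eps$ only when $i = j$), so the defining condition $\word_i \dots \word_{j-1} \in \val(z)$ of \Cref{definition: val for words to letters} collapses to the single-factor condition $\word_t \in \val(z)$ (respectively $\eps \in \val(z)$). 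Once this is established, concatenating the $p_k$ yields $\ell_0 \dots \ell_{m-1} \in \hat{\val}^{\tuple{\word_0, \dots, \word_{m-1}}}(\word[2])$, completing the proof; note that no induction on $\word[2]$ is needed, since its word structure is handled directly through the composition factorization.
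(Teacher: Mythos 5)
Your proof is correct and follows essentially the same route as the paper's: factor $\word$ as $u_0\dots u_{n-1}$ with $u_k \in \hat{\val}(x_k)$, drop the empty factors to obtain $\word_0,\dots,\word_{m-1}$, and verify letter-by-letter that the abstracted factorization witnesses $\ell_0\dots\ell_{m-1} \in \hat{\val}^{\tuple{\word_0,\dots,\word_{m-1}}}(\word[2])$, using the construction of the words-to-letters valuation (where pairwise distinctness of the $\ell_t$ rules out spurious memberships) and \Cref{lemma: identity val} for the empty factors. Your explicit $p_k$-notation and case analysis merely spell out what the paper compresses into its index map $f$; there is no substantive difference.
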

\begin{proof}
    By $\word \in \hat{\val}(\word[2])$,
    let $\word = \word'_0 \dots \word'_{n-1}$ be s.t.\ $\word'_k \in \hat{\val}(x_k)$ for each $k$.
    Let $\tuple{\word_0, \dots, \word_{m-1}}$ be the sequence $\tuple{\word'_0, \dots, \word'_{n-1}}$ in which empty \kl{words} are eliminated.
    Let $f$ be the corresponding map such that $\word_{k} = \word'_{f(k)}$.
    By the construction of $\val^{\tuple{\word_0, \dots, \word_{m-1}}}$ and $\word'_{f(k)} \in \hat{\val}(x_{f(k)})$, we have $\ell_k \in \hat{\val}^{\tuple{\word_0, \dots, \word_{m-1}}}(x_{f(k)})$.
    Also, $\eps \in \hat{\val}(x_{k})$ implies $\eps \in \hat{\val}^{\tuple{\word_0, \dots, \word_{m-1}}}(x_{k})$.
    Thus, we have $\ell_{0} \dots \ell_{m-1} \in \hat{\val}^{\tuple{\word_0, \dots, \word_{m-1}}}(\word[2])$.
\end{proof}

\begin{thm}[cf.\ \Cref{theorem: char and univ variable}]\label{theorem: char and univ word}
    Let $\word[2] = x_0 \dots x_{n-1}$ be a \kl{word} over $\tilde{\vsig}_{\com{\id}}$ and let $\term$ be a $\KAtermclass_{\set{\com{x}, \com{\id}}}$ \kl{term}.
    The following are equivalent:
    \begin{enumerate}
        \item \label{theorem: char and univ word 1} $\LANG \models \word[2] \le \term$,

        \item \label{theorem: char and univ word 2} $\bigcup_{m \le n} \set{\val \in \LANG_{\set{\ell_{0}, \dots, \ell_{m-1}}} \mid \forall x,  \val(x) \subseteq \mathrm{Subw}(\ell_{0} \dots \ell_{m-1})} \models \word[2] \le \term$, %
        \item \label{theorem: char and univ word 3} $\bigcup_{X} \bigcup_{m \le n} \set{\val^{\tuple{\word_0, \dots, \word_{m-1}}} \mid \val \in \LANG_{X} \mbox{ and } \word_0, \dots, \word_{m-1} \in X^{+}} \models \word[2] \le \term$.
    \end{enumerate}
\end{thm}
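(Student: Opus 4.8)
The plan is to establish the cycle (\ref{theorem: char and univ word 1})$\Rightarrow$(\ref{theorem: char and univ word 2})$\Rightarrow$(\ref{theorem: char and univ word 3})$\Rightarrow$(\ref{theorem: char and univ word 1}), exactly mirroring the proof of \Cref{theorem: char and univ variable} but feeding in the word-level machinery of \Cref{lemma: ell abstraction gen,lemma: ell abstraction word} in place of the single-letter \Cref{lemma: ell abstraction}. The two downward implications are immediate class inclusions; the only implication carrying content is the last one.

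For (\ref{theorem: char and univ word 1})$\Rightarrow$(\ref{theorem: char and univ word 2}), the class in (\ref{theorem: char and univ word 2}) is a subclass of $\LANG$, since each $\LANG_{\set{\ell_0, \dots, \ell_{m-1}}}$ with $m \le n$ is contained in $\LANG$; so the implication is trivial. For (\ref{theorem: char and univ word 2})$\Rightarrow$(\ref{theorem: char and univ word 3}), I would invoke the bound noted just after \Cref{definition: val for words to letters}, namely $\val^{\tuple{\word_0, \dots, \word_{m-1}}}(x) \subseteq \mathrm{Subw}(\ell_0 \dots \ell_{m-1})$ for every \kl{variable} $x$. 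Hence each words-to-letters \kl{valuation} appearing in (\ref{theorem: char and univ word 3})---for some $X$, some $m \le n$, and some $\word_0, \dots, \word_{m-1} \in X^{+}$---is a \kl{valuation} of $\lang_{\set{\ell_0, \dots, \ell_{m-1}}}$ whose values on \kl{variables} lie in $\mathrm{Subw}(\ell_0 \dots \ell_{m-1})$, i.e.\ precisely one of the \kl{valuations} ranged over in (\ref{theorem: char and univ word 2}). Thus the class of (\ref{theorem: char and univ word 3}) is contained in that of (\ref{theorem: char and univ word 2}), which gives the implication.

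The remaining implication (\ref{theorem: char and univ word 3})$\Rightarrow$(\ref{theorem: char and univ word 1}) I would prove by contraposition. Assuming $\LANG \not\models \word[2] \le \term$, pick $\val \in \LANG_{X}$ and a \kl{word} $\word$ with $\word \in \hat{\val}(\word[2]) \setminus \hat{\val}(\term)$. Applying \Cref{lemma: ell abstraction word} to $\word \in \hat{\val}(\word[2])$ yields some $m \le n$ and non-empty \kl{words} $\word_0, \dots, \word_{m-1}$ with $\word = \word_0 \dots \word_{m-1}$ and $\ell_0 \dots \ell_{m-1} \in \hat{\val}^{\tuple{\word_0, \dots, \word_{m-1}}}(\word[2])$. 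I would then feed these non-empty \kl{words} into \Cref{lemma: ell abstraction gen} with $i = 0$ and $j = m$: since $\word_0 \dots \word_{m-1} = \word \notin \hat{\val}(\term)$, the contrapositive of that lemma gives $\ell_0 \dots \ell_{m-1} \notin \hat{\val}^{\tuple{\word_0, \dots, \word_{m-1}}}(\term)$. Hence $\val^{\tuple{\word_0, \dots, \word_{m-1}}}$---which belongs to the class of (\ref{theorem: char and univ word 3}), as $\word_0, \dots, \word_{m-1} \in X^{+}$ and $m \le n$---refutes $\word[2] \le \term$, so (\ref{theorem: char and univ word 3}) fails.

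The step I expect to carry the weight is this contraposition, but almost all of its difficulty has already been discharged by \Cref{lemma: ell abstraction word,lemma: ell abstraction gen}; the one point needing care is that the decomposition furnished by \Cref{lemma: ell abstraction word} strips out exactly the empty factors, so that the surviving factors $\word_0, \dots, \word_{m-1}$ are all non-empty and $m \le n$. This non-emptiness is precisely the hypothesis required to apply \Cref{lemma: ell abstraction gen}, and in particular what makes its $\com{\id}$ case go through; I would make sure the indices line up so that the full block $i = 0$, $j = m$ recovers the original \kl{word} $\word$.
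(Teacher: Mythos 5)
Your proposal is correct and follows essentially the same route as the paper's proof: the same cycle (1)$\Rightarrow$(2)$\Rightarrow$(3)$\Rightarrow$(1), with (1)$\Rightarrow$(2) trivial by class inclusion, (2)$\Rightarrow$(3) by the subword bound $\val^{\tuple{\word_0, \dots, \word_{m-1}}}(x) \subseteq \mathrm{Subw}(\ell_0 \dots \ell_{m-1})$, and the contraposition for (3)$\Rightarrow$(1) combining \Cref{lemma: ell abstraction word} with \Cref{lemma: ell abstraction gen}. Your added care about the non-emptiness of the factors and the choice $i=0$, $j=m$ is exactly the implicit content of the paper's argument.
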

\begin{proof}
    (\ref{theorem: char and univ word 1})$\Rightarrow$(\ref{theorem: char and univ word 2}):
    Trivial.
    (\ref{theorem: char and univ word 2})$\Rightarrow$(\ref{theorem: char and univ word 3}):
    Because $\hat{\val}^{\tuple{\word_0, \dots, \word_{m-1}}}(x) \subseteq \set{\ell_{i} \dots \ell_{j-1} \mid 0 \le i \le j \le m}$ holds for each $x$.
    (\ref{theorem: char and univ word 3})$\Rightarrow$(\ref{theorem: char and univ word 1}):
    We show the contraposition.
    Let $\word \in \hat{\val}(\word[2]) \setminus \hat{\val}(\term[1])$.
    By \Cref{lemma: ell abstraction word}, there are $0 \le m \le n$ and non-empty \kl{words} $\word_{0}, \dots, \word_{m-1}$ such that
    $\word = \word_{0} \dots \word_{m-1}$ and $\ell_0 \dots \ell_{m-1} \in \hat{\val}^{\tuple{\word_0, \dots, \word_{m-1}}}(\word[2])$.
    By $\word \not\in \hat{\val}(\term[1])$ and \Cref{lemma: ell abstraction gen}, we have $\ell_0 \dots \ell_{m-1} \not\in \hat{\val}^{\tuple{\word_0, \dots, \word_{m-1}}}(\term[1])$.
    Hence, this completes the proof.
\end{proof}

\begin{cor}[cf.\ \Cref{corollary: variable decidable}]\label{corollary: word decidable}
    The word inclusion problem---given a \kl{word} $\word$ and a \kl{term} $\term$, does $\LANG \models \word \le \term$?---is decidable and coNP-complete for $\KAtermclass_{\set{\com{x}, \com{\id}}}$ \kl{terms}.
\end{cor}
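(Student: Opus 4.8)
The plan is to follow the same two-part template used for \Cref{corollary: identity decidable,corollary: variable decidable}, now using \Cref{theorem: char and univ word} as the finite-witness tool. I would treat the complement problem ($\LANG \not\models \word \le \term$?) and show it is in NP, obtaining the coNP upper bound; coNP-hardness will come for free.

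For the upper bound, set $n = \len{\word}$ and $\word = x_0 \dots x_{n-1}$. By the equivalence (\ref{theorem: char and univ word 1})$\Leftrightarrow$(\ref{theorem: char and univ word 2}) of \Cref{theorem: char and univ word}, a refutation is witnessed by some $m \le n$ and some \kl{valuation} $\val \in \LANG_{\set{\ell_0, \dots, \ell_{m-1}}}$ with $\val(x) \subseteq \mathrm{Subw}(\ell_0 \dots \ell_{m-1})$ for every \kl{variable} $x$. Moreover, inspecting the construction behind (\ref{theorem: char and univ word 3})$\Rightarrow$(\ref{theorem: char and univ word 1}) (via \Cref{lemma: ell abstraction word,lemma: ell abstraction gen}), such a refuting \kl{valuation} can always be chosen so that the single fixed \kl{word} $\ell_0 \dots \ell_{m-1}$ already satisfies $\ell_0 \dots \ell_{m-1} \in \hat{\val}(\word) \setminus \hat{\val}(\term)$; this reduces the test from a \kl{language} inclusion to the membership of one \kl{word}. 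The nondeterministic algorithm then (i) guesses $m \le n$; (ii) for each \kl{variable} occurring in $\word$ or $\term$ and each of the $O(m^2)$ contiguous subwords of $\ell_0 \dots \ell_{m-1}$, guesses a membership bit, thereby specifying $\val$ in polynomial space; and (iii) accepts iff $\ell_0 \dots \ell_{m-1} \in \hat{\val}(\word)$ and $\ell_0 \dots \ell_{m-1} \notin \hat{\val}(\term)$.

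The remaining work is to argue step (iii) runs in polynomial time. Both tests reduce to deciding, for the factors $\ell_i \dots \ell_{j-1}$ ($0 \le i \le j \le m$) and the subterms $\term[2]$ of $\term$, whether $\ell_i \dots \ell_{j-1} \in \hat{\val}(\term[2])$, by a standard \textsc{cyk}/transitive-closure dynamic program over the $O(m^2)$ factors and the polynomially many subterms. The atomic cases are immediate under the guessed \kl{valuation}: membership in $\hat{\val}(x)$ and $\hat{\val}(\com{x})$ is read off (resp.\ negated) from the guessed bits, $\ell_i \dots \ell_{j-1} \in \hat{\val}(\id)$ iff $i = j$, and $\ell_i \dots \ell_{j-1} \in \hat{\val}(\com{\id})$ iff $i < j$; the cases $\union, \compo$ combine subfactor answers, and ${\bl}^{\kstar}$ is handled by a transitive closure over factor boundaries. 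The test $\ell_0 \dots \ell_{m-1} \in \hat{\val}(\word)$ is the analogous parsing problem for $\word = x_0 \dots x_{n-1}$, namely splitting $\ell_0 \dots \ell_{m-1}$ into contiguous pieces $u_0 \dots u_{n-1}$ with $u_k \in \hat{\val}(x_k)$, again solvable by dynamic programming. Hence the whole check is polynomial, the complemented problem is in NP, and the word inclusion problem is in coNP. The step I expect to require the most care is justifying that the \emph{single} fixed witness $\ell_0 \dots \ell_{m-1}$ suffices (reusing the witness produced inside the proof of \Cref{theorem: char and univ word} rather than quantifying over all words of $\hat{\val}(\word)$), together with verifying that the ${\bl}^{\kstar}$ case of the membership dynamic program stays polynomial.

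For coNP-hardness, I would simply observe that the variable inclusion problem is the special case of the word inclusion problem in which $\word$ is a \kl{word} of length $1$ over $\tilde{\vsig}_{\com{\id}}$; since that problem is already coNP-hard by \Cref{corollary: variable decidable}, so is the word inclusion problem. Combining the two bounds yields coNP-completeness.
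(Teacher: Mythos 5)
Your proposal is correct and takes essentially the same approach as the paper: the coNP upper bound is obtained from item (2) of \Cref{theorem: char and univ word} by a guess-and-check NP algorithm for the complement (the paper leaves the membership check implicit, while you correctly fill in the polynomial-time dynamic programming and the single-witness refinement, which is indeed justified by \Cref{lemma: ell abstraction word,lemma: ell abstraction gen}). The only cosmetic difference is the hardness source: you specialize $\word$ to a length-one word and invoke \Cref{corollary: variable decidable}, whereas the paper specializes $\word$ to the empty word $\id$ and invokes \Cref{corollary: identity decidable}; both are equally valid.
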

\begin{proof}
    (coNP-hard):
    By \Cref{corollary: identity decidable}, as $\word$ is possibly $\const{I}$.
    (in coNP):
    By (\ref{theorem: char and univ word 2}) of \Cref{theorem: char and univ word},
    we can give an algorithm as with \Cref{corollary: variable decidable}.
\end{proof} 
\subsection{Generalization for terms of bounded length}\label{section: bounded length}
We can generalize the argument in \Cref{section: variable inclusion,section: word inclusion} for more general problems.
For a $\KAtermclass_{\set{\com{x}, \com{\id}}}$ \kl{term} $\term$, we define the \kl{supremum length} $\langlen(\term) \in \nat \cup \set{\omega}$ as follows:
\[\langlen(\term) \ \defeq\  \sup(\set{\|\word\| \mid \word \in \ljump{\term}_{\tilde{\vsig}_{\com{\id}}}} \cup \set{0})\]
where $\omega$ denotes the smallest infinite ordinal.

\begin{lem}\label{lemma: bounded length}
    Let $\term$ be a $\KAtermclass_{\set{\com{x}, \com{\id}}}$ \kl{term}.
    Let $\val \in \LANG$ and let $\word \in \hat{\val}(\term)$.
    Then there are $0 \le m \le \langlen(\term[1])$ and non-empty \kl{words} $\word_{0}, \dots, \word_{m-1}$ s.t.\ $\word = \word_{0} \dots \word_{m-1}$ and $\ell_{0} \dots \ell_{m-1} \in \hat{\val}^{\tuple{\word_{0}, \dots, \word_{m-1}}}(\term)$.
\end{lem}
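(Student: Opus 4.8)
The plan is to reduce the statement to \Cref{lemma: ell abstraction word}, which already handles the case where the \kl{term} is a single \kl{word} over $\tilde{\vsig}_{\com{\id}}$, and then to transfer the conclusion back to the general \kl{term} $\term$ via the alternative semantics of \Cref{lemma: lang val}. The key observation is that $\langlen(\term)$ is, by its very definition, an upper bound on the lengths of the \kl{words} in $\ljump{\term}_{\tilde{\vsig}_{\com{\id}}}$, so the bound $m \le \len{\word[2]}$ supplied by \Cref{lemma: ell abstraction word} automatically upgrades to $m \le \langlen(\term)$.

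First I would decompose $\term$ into \kl{words}. By \Cref{lemma: lang val} we have $\hat{\val}(\term) = \hat{\val}(\ljump{\term}_{\tilde{\vsig}_{\com{\id}}}) = \bigcup_{\word[2] \in \ljump{\term}_{\tilde{\vsig}_{\com{\id}}}} \hat{\val}(\word[2])$, so from $\word \in \hat{\val}(\term)$ I obtain a \kl{word} $\word[2] \in \ljump{\term}_{\tilde{\vsig}_{\com{\id}}}$ with $\word \in \hat{\val}(\word[2])$, and by the definition of $\langlen$ we have $\len{\word[2]} \le \langlen(\term)$. Next I would apply \Cref{lemma: ell abstraction word} to this \kl{word} $\word[2]$ and to $\word \in \hat{\val}(\word[2])$, which yields an integer $m$ with $0 \le m \le \len{\word[2]}$ and non-empty \kl{words} $\word_0, \dots, \word_{m-1}$ such that $\word = \word_0 \dots \word_{m-1}$ and $\ell_0 \dots \ell_{m-1} \in \hat{\val}^{\tuple{\word_0, \dots, \word_{m-1}}}(\word[2])$. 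Combining $m \le \len{\word[2]}$ with $\len{\word[2]} \le \langlen(\term)$ gives the required bound $m \le \langlen(\term)$.

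Finally I would lift the membership from $\word[2]$ back to $\term$. Applying \Cref{lemma: lang val} to the \kl{valuation} $\val^{\tuple{\word_0, \dots, \word_{m-1}}}$ and using $\word[2] \in \ljump{\term}_{\tilde{\vsig}_{\com{\id}}}$, we obtain
\[\ell_0 \dots \ell_{m-1} \in \hat{\val}^{\tuple{\word_0, \dots, \word_{m-1}}}(\word[2]) \subseteq \bigcup_{\word' \in \ljump{\term}_{\tilde{\vsig}_{\com{\id}}}} \hat{\val}^{\tuple{\word_0, \dots, \word_{m-1}}}(\word') = \hat{\val}^{\tuple{\word_0, \dots, \word_{m-1}}}(\term),\]
which is exactly the desired conclusion. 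There is no serious obstacle here; the only point demanding care is the \emph{two-directional} use of \Cref{lemma: lang val}---once to pass from $\term$ down to a witnessing \kl{word} $\word[2]$ under $\val$, and once to pass from $\word[2]$ back up to $\term$ under the freshly constructed \kl{words-to-letters valuation} $\val^{\tuple{\word_0, \dots, \word_{m-1}}}$. This is sound precisely because \Cref{lemma: lang val} holds uniformly for every \kl{valuation} in $\LANG$, so nothing in the argument depends on which valuation is plugged in.
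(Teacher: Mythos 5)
Your proof is correct and takes essentially the same route as the paper: both extract a witnessing word $\word[2] \in \ljump{\term}_{\tilde{\vsig}_{\com{\id}}}$ with $\word \in \hat{\val}(\word[2])$ via \Cref{lemma: lang val}, note $\len{\word[2]} \le \langlen(\term)$ by the definition of $\langlen$, and conclude by \Cref{lemma: ell abstraction word}. The only difference is that you make explicit the final lifting step from $\ell_0 \dots \ell_{m-1} \in \hat{\val}^{\tuple{\word_0, \dots, \word_{m-1}}}(\word[2])$ to $\ell_0 \dots \ell_{m-1} \in \hat{\val}^{\tuple{\word_0, \dots, \word_{m-1}}}(\term)$ (a second application of \Cref{lemma: lang val}), which the paper's terse proof leaves implicit.
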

\begin{proof}
    By \Cref{lemma: lang val},
    there is a \kl{word} $\word[2] \in \ljump{\term}_{\tilde{\vsig}_{\com{\id}}}$ such that $\word \in \hat{\val}(\word[2])$.
    By $\|\word[2]\| \le \langlen(\term[1])$ and \Cref{lemma: ell abstraction word}, this completes the proof.
\end{proof}
Thus, we have the following.
\begin{thm}[cf.\ \Cref{theorem: char and univ word}]\label{theorem: char and univ gen}
    Let $\term[1]$ and $\term[2]$ be $\KAtermclass_{\set{\com{x}, \com{\id}}}$ \kl{terms}.
    The following are equivalent:
    \begin{enumerate}
        \item \label{theorem: char and univ gen 1} $\LANG \models \term[1] \le \term[2]$,
        \item \label{theorem: char and univ gen 2} $\bigcup_{m \le \langlen(\term[1])} \set{\val \in \LANG_{\set{\ell_{0}, \dots, \ell_{m-1}}} \mid \forall x,  \val(x) \subseteq \mathrm{Subw}(\ell_{0} \dots \ell_{m-1})} \models \term[1] \le \term[2]$, %
        \item \label{theorem: char and univ gen 3} $\bigcup_{X} \bigcup_{m \le \langlen(\term[1])} \set{\val^{\tuple{\word_0, \dots, \word_{m-1}}} \mid \val \in \LANG_{X} \mbox{ and } \word_0, \dots, \word_{m-1} \in X^{+}} \models \term[1] \le \term[2]$.
    \end{enumerate}
\end{thm}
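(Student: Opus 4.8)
The plan is to follow the three-step pattern of the proof of \Cref{theorem: char and univ word} essentially verbatim, performing the single substitution of \Cref{lemma: bounded length} for \Cref{lemma: ell abstraction word}; the only new ingredient beyond that lemma is recognizing that $\langlen(\term[1])$ is the correct bound on the number of \kl{letters}. Throughout, I write $\algclass_2$ and $\algclass_3$ for the classes of \kl{valuations} appearing in items (\ref{theorem: char and univ gen 2}) and (\ref{theorem: char and univ gen 3}), respectively.

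For the two easy implications I would argue by class inclusion. First, (\ref{theorem: char and univ gen 1})$\Rightarrow$(\ref{theorem: char and univ gen 2}) is immediate because every \kl{valuation} in $\algclass_2$ lies in $\LANG$. For (\ref{theorem: char and univ gen 2})$\Rightarrow$(\ref{theorem: char and univ gen 3}) I would show $\algclass_3 \subseteq \algclass_2$: by the remark following \Cref{definition: val for words to letters}, each \kl{words-to-letters valuation} $\val^{\tuple{\word_0, \dots, \word_{m-1}}} \in \LANG_{\set{\ell_0, \dots, \ell_{m-1}}}$ satisfies $\val^{\tuple{\word_0, \dots, \word_{m-1}}}(x) \subseteq \mathrm{Subw}(\ell_0 \dots \ell_{m-1})$ for every \kl{variable} $x$, so it meets both defining conditions of $\algclass_2$ at the same index $m \le \langlen(\term[1])$. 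Hence any \kl{inequation} valid on $\algclass_2$ is valid on the smaller class $\algclass_3$.

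The substantive implication is (\ref{theorem: char and univ gen 3})$\Rightarrow$(\ref{theorem: char and univ gen 1}), which I would prove by contraposition. Assuming $\LANG \not\models \term[1] \le \term[2]$, fix $\val \in \LANG$ and a \kl{word} $\word \in \hat{\val}(\term[1]) \setminus \hat{\val}(\term[2])$. Applying \Cref{lemma: bounded length} to $\word \in \hat{\val}(\term[1])$ yields some $0 \le m \le \langlen(\term[1])$ and non-empty \kl{words} $\word_0, \dots, \word_{m-1}$ with $\word = \word_0 \dots \word_{m-1}$ and $\ell_0 \dots \ell_{m-1} \in \hat{\val}^{\tuple{\word_0, \dots, \word_{m-1}}}(\term[1])$. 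Since $\word \notin \hat{\val}(\term[2])$, the contrapositive of \Cref{lemma: ell abstraction gen} (taken with $i = 0$ and $j = m$) gives $\ell_0 \dots \ell_{m-1} \notin \hat{\val}^{\tuple{\word_0, \dots, \word_{m-1}}}(\term[2])$. Thus the \kl{words-to-letters valuation} $\val^{\tuple{\word_0, \dots, \word_{m-1}}} \in \algclass_3$ witnesses $\hat{\val}^{\tuple{\word_0, \dots, \word_{m-1}}}(\term[1]) \not\subseteq \hat{\val}^{\tuple{\word_0, \dots, \word_{m-1}}}(\term[2])$, refuting item (\ref{theorem: char and univ gen 3}).

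Since all the real work is discharged by \Cref{lemma: bounded length} and \Cref{lemma: ell abstraction gen}, I expect no serious obstacle at the level of the theorem itself; the one point requiring care is the bound $m \le \langlen(\term[1])$. Unlike \Cref{theorem: char and univ word}, where $\term[1]$ is a single \kl{word} of fixed length $n$, here $\langlen(\term[1])$ may equal $\omega$ (precisely when $\term[1]$ mentions Kleene-star), so the unions in (\ref{theorem: char and univ gen 2}) and (\ref{theorem: char and univ gen 3}) can range over all of $\nat$. This does not break the argument, since \Cref{lemma: bounded length} still supplies a \emph{finite} decomposition of each fixed \kl{word} $\word$ into at most $\langlen(\term[1])$ non-empty pieces; it does mean, however, that in contrast with the word case the characterization does not immediately deliver a finite search space, finiteness being recovered only under the hypothesis that $\term[1]$ is Kleene-star-free, where $\langlen(\term[1]) < \omega$.
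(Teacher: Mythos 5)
Your proposal is correct and matches the paper's proof, which is stated exactly as you executed it: repeat the three-step argument of \Cref{theorem: char and univ word} (trivial inclusion for (\ref{theorem: char and univ gen 1})$\Rightarrow$(\ref{theorem: char and univ gen 2}), the subword containment $\val^{\tuple{\word_0, \dots, \word_{m-1}}}(x) \subseteq \mathrm{Subw}(\ell_0 \dots \ell_{m-1})$ for (\ref{theorem: char and univ gen 2})$\Rightarrow$(\ref{theorem: char and univ gen 3}), and contraposition via \Cref{lemma: ell abstraction gen} for (\ref{theorem: char and univ gen 3})$\Rightarrow$(\ref{theorem: char and univ gen 1})), substituting \Cref{lemma: bounded length} for \Cref{lemma: ell abstraction word}. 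Your closing remark that $\langlen(\term[1])$ may equal $\omega$ without harming the argument is also a correct (and welcome) clarification, consistent with how the paper later invokes this theorem in \Cref{corollary: word witness}.
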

\begin{proof}
    As with \Cref{theorem: char and univ word}, by using \Cref{lemma: bounded length} instead of \Cref{lemma: ell abstraction word}.
\end{proof}

We say that a \kl{term} $\term$ is \intro*\kl{star-free} if the Kleene-star ($\bl^{\kstar}$) does not occur in $\term$.
By \Cref{theorem: char and univ gen}, we have the following.
\begin{cor}\label{corollary: star-free}
    The following problem is coNP-complete:
    \begin{center}
        Given a \kl{star-free} $\KAtermclass_{\set{\com{x}, \com{\id}}}$ \kl{term} $\term[1]$ and a $\KAtermclass_{\set{\com{x}, \com{\id}}}$ \kl{term} $\term[2]$, does $\LANG \models \term[1] \le \term[2]$?
    \end{center}
\end{cor}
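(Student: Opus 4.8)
The plan is to establish both membership in coNP and coNP-hardness, relying crucially on \Cref{theorem: char and univ gen}.

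For coNP-hardness, I would reduce from the identity inclusion problem, which is coNP-complete by \Cref{corollary: identity decidable}. Given a \kl{term} $\term$ arising from that problem, note that $\id$ is itself \kl{star-free}, so the instance $\LANG \models \id \le \term$ is already of the required form $\term[1] \le \term[2]$ with $\term[1] = \id$ \kl{star-free}. Thus coNP-hardness is immediate; no new construction is needed.

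For membership in coNP, the key observation is that when $\term[1]$ is \kl{star-free}, its \kl{supremum length} $\langlen(\term[1])$ is \emph{finite} (indeed bounded by the length of $\term[1]$ as a syntactic object, since without $\bl^{\kstar}$ the \kl{language} $\ljump{\term[1]}_{\tilde{\vsig}_{\com{\id}}}$ is finite and every \kl{word} in it has bounded length). Let $n \defeq \langlen(\term[1])$. By characterization (\ref{theorem: char and univ gen 2}) of \Cref{theorem: char and univ gen}, we have $\LANG \models \term[1] \le \term[2]$ iff the \kl{inequation} holds for all \kl{valuations} $\val \in \LANG_{\set{\ell_0, \dots, \ell_{m-1}}}$ with $m \le n$ and $\val(x) \subseteq \mathrm{Subw}(\ell_0 \dots \ell_{m-1})$ for every $x$. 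Each such \kl{valuation} is a finite object: the alphabet has at most $n$ \kl{letters}, the relevant subwords $\mathrm{Subw}(\ell_0 \dots \ell_{m-1})$ number at most $O(n^2)$, and only the finitely many \kl{variables} occurring in $\term[1], \term[2]$ matter, so $\val$ can be described in polynomial space. The algorithm for the complement problem then nondeterministically guesses such a \kl{valuation} $\val$ together with a witnessing \kl{word} $\word \in \hat{\val}(\term[1]) \setminus \hat{\val}(\term[2])$ (which by the characterization can be taken to be $\ell_0 \dots \ell_{m-1}$, again of polynomial length), and verifies in polynomial time that $\word \in \hat{\val}(\term[1])$ but $\word \notin \hat{\val}(\term[2])$; this verification is a finite-language membership computation over a bounded alphabet. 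Hence the complement is in NP, so the original problem is in coNP.

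The main obstacle is ensuring that the verification step runs in genuine polynomial time. The subtlety is that $\hat{\val}(\term[2])$ may involve Kleene-star (only $\term[1]$ is required \kl{star-free}), so $\hat{\val}(\term[2])$ can be an infinite \kl{language}; however, deciding membership of a single fixed \kl{word} $\word$ of bounded length in $\hat{\val}(\term[2])$ is still polynomial, since $\hat{\val}(\term[2])$ is a regular \kl{language} over a bounded alphabet and membership testing via the standard automaton (or directly via \Cref{lemma: lang val}, decomposing $\term[2]$ into its \kl{word} \kl{language} and checking whether $\word$ factors appropriately) is efficient for a word of length $\le n$. Care is also needed that the guessed \kl{valuation} $\val$ is evaluated correctly under the complement operations $\com{x}$ and $\com{\id}$ relative to the finite alphabet $\set{\ell_0, \dots, \ell_{m-1}}$, but this is exactly what characterization (\ref{theorem: char and univ gen 2}) licenses, so no difficulty arises beyond bookkeeping.
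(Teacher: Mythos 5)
Your proof is correct and takes essentially the same route as the paper: coNP-hardness by reduction from the identity inclusion problem (\Cref{corollary: identity decidable}) with $\term[1] = \id$, and membership in coNP from the bound $\langlen(\term[1]) \le \|\term[1]\|$ for \kl{star-free} $\term[1]$ combined with characterization (\ref{theorem: char and univ gen 2}) of \Cref{theorem: char and univ gen}. The polynomial-time verification details you spell out (guessing a valuation over at most $\langlen(\term[1])$ letters and checking membership of the witness word, including under Kleene-star in $\term[2]$) are exactly what the paper leaves implicit in its reference to the algorithm of \Cref{corollary: word decidable}.
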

\begin{proof}
    (coNP-hard):
    By \Cref{corollary: identity decidable}, as $\term[1]$ is possibly $\id$.
    (in coNP):
    Because $\term[1]$ is \kl{star-free}, we have $\langlen(\term[1]) \le \|\term[1]\|$.
    By (\ref{theorem: char and univ gen 2}) of \Cref{theorem: char and univ gen}, we can give an algorithm as with \Cref{corollary: word decidable}.
\end{proof}
Moreover, we have the following as a corollary.
\begin{cor}[bounded alphabet property]\label{corollary: bounded alphabet}
    Let $\term[1]$ and $\term[2]$ be $\KAtermclass_{\set{\com{x}, \com{\id}}}$ \kl{terms}.
    Then we have:
    \[\LANG \models \term[1] \le \term[2] \quad\Leftrightarrow\quad \LANG_{\langlen(\term[1])} \models \term[1] \le \term[2].\]
\end{cor}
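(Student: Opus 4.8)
The plan is to read this corollary off the three-way equivalence of \Cref{theorem: char and univ gen}, whose condition~(\ref{theorem: char and univ gen 2}) already restricts attention to valuations over alphabets of bounded cardinality. First I would dispatch the forward direction, which is immediate: since $\LANG_{\langlen(\term[1])} = \bigcup_{X;\, \card X \le \langlen(\term[1])} \LANG_{X}$ is by definition a subclass of $\LANG = \bigcup_{X} \LANG_{X}$, validity of $\term[1] \le \term[2]$ on the larger class $\LANG$ entails validity on the smaller class $\LANG_{\langlen(\term[1])}$, giving $\LANG \models \term[1] \le \term[2] \Rightarrow \LANG_{\langlen(\term[1])} \models \term[1] \le \term[2]$.

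For the backward direction I would exploit that the witnessing valuations in condition~(\ref{theorem: char and univ gen 2}) live over small alphabets. Concretely, for each $m \le \langlen(\term[1])$ the alphabet $\set{\ell_{0}, \dots, \ell_{m-1}}$ has cardinality $m \le \langlen(\term[1])$, so every $\val \in \LANG_{\set{\ell_{0}, \dots, \ell_{m-1}}}$ belongs to $\LANG_{\langlen(\term[1])}$. Hence the entire class appearing in condition~(\ref{theorem: char and univ gen 2}) is contained in $\LANG_{\langlen(\term[1])}$. Assuming $\LANG_{\langlen(\term[1])} \models \term[1] \le \term[2]$, the inequation holds on that subclass in particular, which is exactly condition~(\ref{theorem: char and univ gen 2}); invoking the equivalence of \Cref{theorem: char and univ gen} then yields condition~(\ref{theorem: char and univ gen 1}), namely $\LANG \models \term[1] \le \term[2]$.

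The only point needing care---more a bookkeeping remark than a genuine obstacle---is the case $\langlen(\term[1]) = \omega$, which arises when $\term[1]$ carries a Kleene-star generating arbitrarily long \kl{words} over $\tilde{\vsig}_{\com{\id}}$. Here the index range $m \le \langlen(\term[1])$ in condition~(\ref{theorem: char and univ gen 2}) runs over all finite $m$, so the witnessing alphabets $\set{\ell_{0}, \dots, \ell_{m-1}}$ remain finite and, a fortiori, of cardinality at most $\omega$; thus the containment in $\LANG_{\langlen(\term[1])} = \LANG_{\omega}$ still holds and the argument goes through verbatim. Consequently no work beyond \Cref{theorem: char and univ gen} is required, and the corollary follows.
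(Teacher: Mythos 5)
Your proposal is correct and matches the paper's own proof, which likewise derives the corollary directly from \Cref{theorem: char and univ gen}: the forward direction by class inclusion, and the backward direction by noting that the valuations in condition~(2) all lie in $\LANG_{\langlen(\term[1])}$. Your extra remark on the $\langlen(\term[1]) = \omega$ case is a harmless elaboration of the same argument.
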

\begin{proof}
    By \Cref{theorem: char and univ gen}.
\end{proof}

\subsection{The universality problem}
The \emph{universality problem} w.r.t.\ $\LANG$ is the following problem:
\begin{center}
    Given a $\KAtermclass_{\set{\com{x}, \com{\id}}}$ \kl{term} $\term$, does $\LANG \models \top \le \term$?
\end{center}
Interestingly, the universality problem of $\LANG$ is decidable and coNP-complete.
\begin{cor}\label{corollary: universality decidable}
    The universality problem is coNP-complete for $\KAtermclass_{\set{\com{x}, \com{\id}}}$ \kl{terms}.
\end{cor}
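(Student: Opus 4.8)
The plan is to obtain both the upper and the lower bound by reusing the inequational-theory results already proved, rather than analysing the universality problem directly. For membership in coNP I will view it as a special case of the star-free inequational problem (\Cref{corollary: star-free}); for coNP-hardness I will reduce from the identity inclusion problem (\Cref{corollary: identity decidable}).

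First, for the coNP upper bound, I would use that the universal constant is definable by a \kl{star-free} \kl{term}: $\top = \id \union \com{\id}$ is a $\KAtermclass_{\set{\com{x}, \com{\id}}}$ \kl{term} in which $\bl^{\kstar}$ does not occur (indeed $\langlen(\top) = 1$). Hence the question ``$\LANG \models \top \le \term$'' is literally an instance of the problem of \Cref{corollary: star-free} with \kl{star-free} left-hand side $\top$, and membership in coNP is immediate.

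Next, for coNP-hardness, I would reduce from the identity inclusion problem. Given a $\KAtermclass_{\set{\com{x}, \com{\id}}}$ \kl{term} $\term$, I map it in polynomial time to $\term \union \com{\id}$. For every $\val \in \LANG$ we have $\hat{\val}(\term \union \com{\id}) = \hat{\val}(\term) \cup \hat{\val}(\com{\id})$, and since $\hat{\val}(\com{\id}) = \hat{\val}(\top) \setminus \set{\eps}$ already contains every non-empty \kl{word}, this union equals the whole universe $\hat{\val}(\top)$ exactly when $\eps \in \hat{\val}(\term)$. Therefore
\[\LANG \models \top \le \term \union \com{\id} \quad\Leftrightarrow\quad \LANG \models \id \le \term.\]
Since the right-hand side is coNP-hard by \Cref{corollary: identity decidable}, so is the universality problem; combined with the upper bound this yields coNP-completeness.

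The main point to check --- and the only place the empty \kl{word} needs care --- is the displayed equivalence: the inclusion $\top \le \term \union \com{\id}$ collapses to membership of $\eps$ alone precisely because the complemented identity $\com{\id}$ supplies all the non-empty \kl{words} for free. Once this is verified there is no further obstacle.
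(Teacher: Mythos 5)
Your proposal is correct, and its overall shape matches the paper's: both bounds are obtained by reusing the machinery already developed, with coNP membership coming from the fact that $\top$ is equivalent to a \kl{star-free} \kl{term} of \kl{supremum length} $1$, and coNP-hardness by reduction from the identity inclusion problem. The upper bound is essentially the paper's argument --- the paper uses $\LANG \models \top = x \union \com{x}$ and invokes \Cref{theorem: char and univ gen} directly, while you use $\top = \id \union \com{\id}$ and invoke \Cref{corollary: star-free}; these are interchangeable. The hardness reduction, however, uses a genuinely different gadget: the paper maps $\term$ to $\top \compo \term$ and proves $\LANG \models \id \le \term \Leftrightarrow \LANG \models \top \le \top \compo \term$, which requires the congruence law in one direction and, in the other, the decomposition property that $\LANG \models \id \le \term[2] \compo \term[3]$ iff $\LANG \models \id \le \term[2]$ and $\LANG \models \id \le \term[3]$. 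Your gadget $\term \union \com{\id}$ admits a more elementary, purely pointwise verification: for every $\val \in \LANG_{X}$ one has $\hat{\val}(\term \union \com{\id}) = \hat{\val}(\term) \cup (X^{\kstar} \setminus \set{\eps})$, so $\hat{\val}(\top) \subseteq \hat{\val}(\term \union \com{\id})$ iff $\eps \in \hat{\val}(\term)$, which is exactly $\val \models \id \le \term$; the equivalence thus holds valuation by valuation, with no appeal to algebraic laws. The trade-off is symmetric: the paper's gadget works even in fragments lacking $\com{\id}$ (it needs only $\top$ and composition), whereas yours leans on the constant $\com{\id}$ being available in $\KAtermclass_{\set{\com{x}, \com{\id}}}$ --- which it is here, so your proof stands as given.
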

\begin{proof}
    (in coNP):
    We have that $\LANG \models \top = x \union \com{x}$ and $\langlen(x \union \com{x}) = 1$.
    Thus, by (\ref{theorem: char and univ gen 2}) of \Cref{theorem: char and univ gen}, we can give an algorithm as with \Cref{corollary: word decidable}.
    (coNP-hard):
    We give a reduction from the validity problem of propositional formulas in disjunctive normal form, as with \Cref{corollary: identity decidable,corollary: variable decidable}.
    Given a propositional formula $\varphi$ in disjunctive normal form,
    let $\term$ be the $\KAtermclass_{\set{\com{x}, \com{\id}}}$ \kl{term} such that $\varphi$ is valid iff $\LANG \models \id \le \term$
    where $\term$ is obtained by the translation in \Cref{corollary: identity decidable}.
    Then we have:
    \[\LANG \models \id \le \term \quad\Leftrightarrow\quad \LANG \models \top \le \top \compo \term.\]
    For ($\Rightarrow$):
    By the congruence law.
    For ($\Leftarrow$):
    By $\LANG \models \id \le \top \compo \term$
    and that $\LANG \models \id \le \term[2] \compo \term[3]$ iff $\LANG \models \id \le \term[2]$ and $\LANG \models \id \le \term[3]$ for any $\term[2], \term[3]$.
    Hence, the universality problem is coNP-hard.
\end{proof}

\begin{rem}
    In the standard language equivalence, the universality problem is usually of the form $\ljump{\vsig^{\kstar}} = \ljump{\term}$,
    as $\ljump{\vsig^{\kstar}} = \ljump{\top}$ (when $\vsig$ is finite) and the constant $\top$ is usually not a primitive symbol of regular expressions.
    However, $\LANG \models \vsig^{\kstar} \le \term$ is different from $\LANG \models \top \le \term$, as $\LANG \not \models \vsig^{\kstar} = \top$.
\end{rem}

\begin{rem}\label{remark: universality}
    Under the standard language equivalence, the universality problem---given a term $\term$, does $\ljump{\top} \subseteq \ljump{\term}$? (i.e., $\ljump{\term} = \vsig^*$?)---is PSPACE-hard \cite{meyerEquivalenceProblemRegular1972, Meyer1973, Hunt1976}.
    Hence, for $\KAtermclass_{\set{\com{x}, \com{\id}}}$ \kl{terms}, the universality problem w.r.t.\ languages is strictly easier (cf.\ \Cref{remark: identity inclusion}) than that under the standard language equivalence unless NP = PSPACE.
\end{rem}

\subsection{Words-to-letters valuation property}
As an immediate consequence of \Cref{theorem: char and univ gen},
we have that \kl{words-to-letters valuations} are sufficient for the \kl{equational theory w.r.t.\ languages} for $\KAtermclass_{\set{\com{x}, \com{\id}}}$ \kl{terms}.
\begin{cor}[words-to-letters valuation property]\label{corollary: word witness}
    For all $\KAtermclass_{\set{\com{x}, \com{\id}}}$ \kl{terms} $\term[1], \term[2]$, the following are equivalent:
    \begin{enumerate}
        \item \label{corollary: word witness 1} $\LANG \models \term[1] \le \term[2]$,
        \item \label{corollary: word witness 3} $\bigcup_{X}\bigcup_{m \in \nat} \set{\val^{\tuple{\word_0, \dots, \word_{m-1}}} \mid \val \in \LANG_{X} \mbox{ and } \word_0, \dots, \word_{m-1} \in X^{+}} \models \term[1] \le \term[2]$.
    \end{enumerate}
\end{cor}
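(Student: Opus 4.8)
The plan is to read this off directly from \Cref{theorem: char and univ gen} via a ``sandwich'' on the classes of \kl{valuations} involved. Write $\algclass$ for the class of \kl{valuations} in item~(\ref{corollary: word witness 3}), and write $\algclass'$ for the corresponding class in item~(\ref{theorem: char and univ gen 3}) of \Cref{theorem: char and univ gen}, i.e.\ the one whose outer union is restricted to $m \le \langlen(\term[1])$. Since $m \le \langlen(\term[1])$ selects a subcollection of all $m \in \nat$ (whether $\langlen(\term[1])$ is finite or equal to $\omega$), we get $\algclass' \subseteq \algclass$. Moreover every \kl{words-to-letters valuation} $\val^{\tuple{\word_0, \dots, \word_{m-1}}}$ is a \kl{valuation} of the \kl{language model} over the finite \kl{letter} set $\set{\ell_0, \dots, \ell_{m-1}}$, hence lies in $\LANG$; thus $\algclass \subseteq \LANG$. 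Altogether
\[
  \algclass' \ \subseteq\ \algclass \ \subseteq\ \LANG.
\]

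With this in place I would argue both directions. For (\ref{corollary: word witness 1})$\Rightarrow$(\ref{corollary: word witness 3}): since $\models$ quantifies universally over its class and $\algclass \subseteq \LANG$, the statement $\LANG \models \term[1] \le \term[2]$ restricts to $\algclass \models \term[1] \le \term[2]$. For (\ref{corollary: word witness 3})$\Rightarrow$(\ref{corollary: word witness 1}): using $\algclass' \subseteq \algclass$, the hypothesis $\algclass \models \term[1] \le \term[2]$ restricts to $\algclass' \models \term[1] \le \term[2]$, which is precisely item~(\ref{theorem: char and univ gen 3}) of \Cref{theorem: char and univ gen}; the equivalence (\ref{theorem: char and univ gen 1})$\Leftrightarrow$(\ref{theorem: char and univ gen 3}) there then yields $\LANG \models \term[1] \le \term[2]$, which is item~(\ref{corollary: word witness 1}).

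There is essentially no obstacle here, since all the substantive work was already carried out in \Cref{theorem: char and univ gen}: the only new observation is that enlarging the bound on $m$ from $\langlen(\term[1])$ to an arbitrary natural number grows the class of \kl{valuations} but keeps it inside $\LANG$, so validity is squeezed between the two ends of the sandwich and is therefore unchanged. The one routine point worth confirming is the inclusion $\algclass \subseteq \LANG$, i.e.\ that each $\val^{\tuple{\word_0, \dots, \word_{m-1}}}$ (including the $m=0$ case $\val^{\tuple{}} \in \LANG_{\emptyset}$) is indeed a genuine \kl{valuation} in $\LANG$; this is immediate from \Cref{definition: val for words to letters}.
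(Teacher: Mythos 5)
Your proposal is correct and follows essentially the same route as the paper: the paper's own proof is simply ``By \Cref{theorem: char and univ gen}, as $\langlen(\term[1]) \le \omega$,'' and your sandwich argument ($\algclass' \subseteq \algclass \subseteq \LANG$, with the theorem's equivalence closing the loop) is exactly the detail that terse citation leaves implicit. Nothing is missing; you have merely written out the routine inclusions that the paper takes for granted.
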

\begin{proof}
    By \Cref{theorem: char and univ gen}, as $\langlen(\term[1]) \le \omega$.
\end{proof}

Additionally, \Cref{corollary: word witness} also shows the following property.
\begin{cor}\label{corollary: countably infinite alphabet}
    For all $\KAtermclass_{\set{\com{x}, \com{\id}}}$ \kl{terms} $\term[1], \term[2]$, we have:
    \[\LANG \models \term[1] \le \term[2] \quad\Leftrightarrow\quad \LANG_{\aleph_0} \models \term[1] \le \term[2].\]
\end{cor}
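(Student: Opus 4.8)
The plan is to read the result off directly from the words-to-letters valuation property (\Cref{corollary: word witness}); no new machinery is needed. The forward implication $\Rightarrow$ is immediate, since $\LANG_{\aleph_0} \subseteq \LANG$ and so any \kl{equation} that holds over all of $\LANG$ holds in particular over the subclass $\LANG_{\aleph_0}$. All the content is therefore in the converse.

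For the converse, I would inspect which \kl{valuations} actually occur in condition (\ref{corollary: word witness 3}) of \Cref{corollary: word witness}. Each such witness has the form $\val^{\tuple{\word_0, \dots, \word_{m-1}}}$ and, by \Cref{definition: val for words to letters}, lives over the finite alphabet $\set{\ell_0, \dots, \ell_{m-1}}$ of size $m$; hence it belongs to $\LANG_{m}$, and in particular to $\LANG_{\aleph_0}$ since $m < \aleph_0$. Thus the entire class quantified in (\ref{corollary: word witness 3}) is contained in $\LANG_{\aleph_0}$. Consequently, if $\LANG_{\aleph_0} \models \term[1] \le \term[2]$, then a fortiori every \kl{valuation} in that subclass satisfies $\term[1] \le \term[2]$, i.e.\ condition (\ref{corollary: word witness 3}) holds; by \Cref{corollary: word witness} this is equivalent to condition (\ref{corollary: word witness 1}), namely $\LANG \models \term[1] \le \term[2]$. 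Together with the trivial forward direction this gives the stated equivalence.

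There is essentially no obstacle to confront here: the hard work was already carried out in \Cref{corollary: word witness}, and what remains is only the cardinality bookkeeping that a finite alphabet has cardinality at most $\aleph_0$, which is immediate. It is worth recording \emph{why} the countably infinite bound is the right one. Since $\langlen(\term[1])$ may equal $\omega$ (precisely when $\term[1]$ contains a Kleene-star), no single finite alphabet size suffices for all witnesses simultaneously, yet each individual witness uses only finitely many \kl{letters}; the bound $\aleph_0$ is exactly what collects all finite sizes at once. In fact the same argument shows the slightly stronger statement that $\bigcup_{n \in \nat} \LANG_{n} \models \term[1] \le \term[2]$ already suffices, so finite alphabets of unbounded size are enough and the countably infinite alphabet is a convenient uniform container for them.
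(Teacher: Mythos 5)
Your proposal is correct and matches the paper's own proof, which consists solely of the citation ``By \Cref{corollary: word witness}''; your write-up simply makes explicit the cardinality bookkeeping (each witness $\val^{\tuple{\word_0,\dots,\word_{m-1}}}$ lives over a finite alphabet, hence lies in $\LANG_{\aleph_0}$) that the paper leaves implicit. The additional observation that $\bigcup_{n \in \nat} \LANG_{n}$ already suffices is a valid and harmless strengthening, consistent with the paper's later discussion of the $\LANG_n$ hierarchy.
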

\begin{proof}
    By \Cref{corollary: word witness}.
\end{proof}
We can show this property, moreover, for $\KAtermclass_{\set{-}}$ \kl{terms}, by using the following transformation of \kl{valuations}.
\begin{lem}\label{lemma: bounding alphabet}
    Let $\val \in \LANG_{A}$.
    Let $B \subseteq A$.
    Let $\val_{B} \in \LANG_{B}$ be the \kl{valuation} defined by $\val_{B}(x) = \val(x) \cap B^*$ for each $x \in \vsig$.
    For all $\KAtermclass_{\set{-}}$ \kl{terms} $\term$, we have $\hat{\val}_{B}(\term) = \hat{\val}(\term) \cap B^*$.
\end{lem}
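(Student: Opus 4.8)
The plan is to prove the identity $\hat{\val}_{B}(\term) = \hat{\val}(\term) \cap B^*$ by structural induction on the $\KAtermclass_{\set{-}}$ \kl{term} $\term$, exploiting that $B^*$ is a sublanguage of $A^*$ that is closed both under concatenation and under taking factors. The base cases are immediate: for a \kl{variable} $x$ we have $\hat{\val}_{B}(x) = \val_{B}(x) = \val(x) \cap B^* = \hat{\val}(x) \cap B^*$ by the definition of $\val_{B}$; for $\id$ both sides equal $\set{\eps}$ (using $\eps \in B^*$); and for $\emp$ both sides are $\emptyset$.

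For the three ``positive'' inductive cases I would first record two set-theoretic identities that hold for arbitrary \kl{languages} $\la[1], \la[2] \subseteq A^*$:
\begin{align*}
  (\la[1] \cap B^*) \compo (\la[2] \cap B^*) &= (\la[1] \compo \la[2]) \cap B^*, &
  (\la[1] \cap B^*)^{\kstar} &= \la[1]^{\kstar} \cap B^*.
\end{align*}
Both rest on the single observation that a \kl{word} lies in $B^*$ iff every factor of it lies in $B^*$; hence cutting a word of $B^*$ into pieces keeps all pieces in $B^*$, and concatenating words of $B^*$ stays in $B^*$. Given these, the cases $\term = \term[2] \union \term[3]$, $\term = \term[2] \compo \term[3]$, and $\term = \term[2]^{\kstar}$ follow by applying the induction hypothesis to $\term[2]$ (and $\term[3]$) and then pushing the $\cap B^*$ through the operator; the union case additionally uses that $\cap B^*$ distributes over $\cup$.

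The complement case $\term = \term[2]^{-}$ is the step requiring care, because the complement in $\lang_{B}$ is taken relative to $B^*$ whereas the complement in $\lang_{A}$ is taken relative to $A^*$. Here I would use $B \subseteq A$, hence $B^* \subseteq A^*$. By the induction hypothesis $\hat{\val}_{B}(\term[2]) = \hat{\val}(\term[2]) \cap B^*$, so
\[
  \hat{\val}_{B}(\term[2]^{-}) = B^* \setminus \hat{\val}_{B}(\term[2]) = B^* \setminus (\hat{\val}(\term[2]) \cap B^*) = B^* \setminus \hat{\val}(\term[2]).
\]
On the other hand $\hat{\val}(\term[2]^{-}) \cap B^* = (A^* \setminus \hat{\val}(\term[2])) \cap B^* = B^* \setminus \hat{\val}(\term[2])$, where the final equality is exactly the point at which $B^* \subseteq A^*$ is invoked. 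The two right-hand sides coincide, closing the case.

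The only genuinely delicate point is this interaction of the two ambient universes in the complement case; everything else is routine distribution of $\cap B^*$ through the regular operations. This is also precisely why the lemma holds for the full fragment $\KAtermclass_{\set{-}}$ rather than only for \kl{variable complements} and \kl{constant complements}: the argument never inspects to which subterm a complement is applied, only that complementation relative to the smaller universe agrees with complementation relative to the larger one after restricting to $B^*$.
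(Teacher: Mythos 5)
Your proposal is correct and follows essentially the same route as the paper: structural induction using the identities that $\cap\, B^*$ commutes with union, composition, and complement (the latter exploiting $B^* \subseteq A^*$), with the key observation that factors of a word in $B^*$ remain in $B^*$. The only cosmetic difference is the Kleene-star case, where you state the identity $(\la[1] \cap B^*)^{\kstar} = \la[1]^{\kstar} \cap B^*$ directly, while the paper derives it by expanding $\term[2]^{\kstar}$ as $\bigcup_{n} \hat{\val}(\term[2])^n$ and iterating the union and composition identities.
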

\begin{proof}
    By easy induction on $\term$, using the following equivalences:
    \begin{align*}
        \label{lemma: bounding alphabet cup} (\la[1] \cap B^*) \cup (\la[2] \cap B^*) \quad&=\quad (\la[1] \cup \la[2]) \cap B^*, \tag{\Cref{lemma: bounding alphabet}-($\cup$)}\\ 
        \label{lemma: bounding alphabet cdot} (\la[1] \cap B^*) \compo (\la[2] \cap B^*) \quad&=\quad (\la[1] \compo \la[2]) \cap B^*, \tag{\Cref{lemma: bounding alphabet}-($\compo$)}\\
        \label{lemma: bounding alphabet compl} B^* \setminus (\la[1] \cap B^*) \quad&=\quad (B^* \setminus \la[1]) \cap B^*. \tag{\Cref{lemma: bounding alphabet}-($\bl^{-}$)}
    \end{align*}
    
    Case $\term = x, \com{x}$:
    By definition of $\val_{B}$.

    Case $\term = \emp, \id, \com{\id}$:
    By $\hat{\val}_{B}(\emp) = \emptyset$, $\hat{\val}_{B}(\id) = \set{\eps}$, and $\hat{\val}_{B}(\com{\id}) = B^* \setminus \set{\eps}$.

    Case $\term = \term[2] \union \term[3]$:
    We have:
    \begin{align*}
        \hat{\val}_{B}(\term[2] \union \term[3]) = \hat{\val}_{B}(\term[2]) \cup \hat{\val}_{B}(\term[3]) & = (\hat{\val}(\term[2]) \cap B^*) \cup (\hat{\val}(\term[3]) \cap B^*) \tag{IH}                                                                                                   \\
                                                                                                                 & = \hat{\val}(\term[2] \union \term[3]) \cap B^*. \tag{\ref{lemma: bounding alphabet cup}}
    \end{align*}

    Case $\term = \term[2] \compo \term[3]$:
    We have:
    \begin{align*}
        \hat{\val}_{B}(\term[2] \compo \term[3]) = \hat{\val}_{B}(\term[2]) \compo \hat{\val}_{B}(\term[3]) &= (\hat{\val}(\term[2]) \cap B^*) \compo (\hat{\val}(\term[3]) \cap B^*) \tag{IH}\\
                                                                                                    & = (\hat{\val}(\term[2] \compo \term[3])) \cap B^*. \tag{\ref{lemma: bounding alphabet cdot}}
    \end{align*}

    Case $\term = \term[2]^*$:
    We have:
    \begin{align*}
        \hat{\val}_{B}(\term[2]^*) = \bigcup_{n \in \nat} \hat{\val}_{B}(\term[2])^n &= \bigcup_{n \in \nat} (\hat{\val}(\term[2]) \cap B^*)^n \tag{IH}\\
                                                                               &= (\bigcup_{n \in \nat} \hat{\val}(\term[2])^n) \cap B^* \tag{\ref{lemma: bounding alphabet cup}, \ref{lemma: bounding alphabet cdot}}\\
                                                                               &= \hat{\val}(\term[2]^*) \cap B^*. 
    \end{align*}

    Case $\term = \term[2]^-$:
    We have:
    \begin{align*}
        \hat{\val}_{B}(\term[2]^-) = B^* \setminus \hat{\val}_{B}(\term[2]) &= B^* \setminus (\hat{\val}(\term[2]) \cap B^*) \tag{IH}\\
                                                                      &= (B^* \setminus \hat{\val}(\term[2])) \cap B^* \tag{\ref{lemma: bounding alphabet compl}}\\
                                                                      &= \hat{\val}(\term[2]^-) \cap B^*.
    \end{align*}
    Hence, this completes the proof.
\end{proof}
\begin{cor}[countably infinite alphabet property]\label{corollary: countably infinite alphabet gen}
    For all $\KAtermclass_{\set{-}}$ \kl{terms} $\term[1], \term[2]$, we have:
    \[\LANG \models \term[1] \le \term[2] \quad\Leftrightarrow\quad \LANG_{\aleph_0} \models \term[1] \le \term[2].\]
\end{cor}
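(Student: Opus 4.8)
The plan is to reduce the nontrivial implication to \Cref{lemma: bounding alphabet}: from an arbitrary refuting valuation I will cut the alphabet down to the finitely many letters that actually occur in a single witnessing word, and that lemma guarantees the restriction preserves the refutation, complement included.

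The forward direction $\LANG \models \term[1] \le \term[2] \Rightarrow \LANG_{\aleph_0} \models \term[1] \le \term[2]$ is immediate from $\LANG_{\aleph_0} \subseteq \LANG$, so I would treat only the converse, by contraposition. Assume $\LANG \not\models \term[1] \le \term[2]$. Then there are a set $A$, a valuation $\val \in \LANG_{A}$, and a word $\word \in \hat{\val}(\term[1]) \setminus \hat{\val}(\term[2])$. Since $\word$ is a single finite word, only finitely many letters of $A$ occur in it; let $B \subseteq A$ be exactly that finite set, so that $\word \in B^*$ and $\card B \le \aleph_0$. Passing to the restricted valuation $\val_{B} \in \LANG_{B}$ of \Cref{lemma: bounding alphabet} and applying that lemma to $\term[1]$ and to $\term[2]$ yields $\hat{\val}_{B}(\term[1]) = \hat{\val}(\term[1]) \cap B^*$ and $\hat{\val}_{B}(\term[2]) = \hat{\val}(\term[2]) \cap B^*$. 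From $\word \in \hat{\val}(\term[1])$ together with $\word \in B^*$ I get $\word \in \hat{\val}_{B}(\term[1])$, while $\word \notin \hat{\val}(\term[2])$ gives $\word \notin \hat{\val}(\term[2]) \cap B^* = \hat{\val}_{B}(\term[2])$. Hence $\val_{B}$ refutes $\term[1] \le \term[2]$, and since $\val_{B} \in \LANG_{B} \subseteq \LANG_{\aleph_0}$, this establishes $\LANG_{\aleph_0} \not\models \term[1] \le \term[2]$, completing the contraposition.

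The whole weight of the argument sits in \Cref{lemma: bounding alphabet}, which has already been verified by induction on $\term$; in particular its complement case is precisely where intersecting with $B^*$ makes restricting the alphabet commute with $\bl^{-}$, so for $\KAtermclass_{\set{-}}$ terms there is no genuinely hard step left here. The one point deserving care---and the reason the statement is phrased with $\aleph_0$ rather than mirroring the sharper length bound $\langlen(\term[1])$ of \Cref{corollary: bounded alphabet}---is that a refutation is witnessed by a \emph{single} word, whose finite letter set determines $B$ but is not controlled a priori by the syntactic size of $\term[1]$ and $\term[2]$; extracting a countable alphabet is therefore all one can read off uniformly, even though each individual counterexample in fact already lives over a finite alphabet.
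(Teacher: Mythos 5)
Your proposal is correct and follows essentially the same route as the paper: the paper's proof also handles the forward direction via $\LANG_{\aleph_0} \subseteq \LANG$ and proves the converse by contraposition, taking a witnessing word $a_0 \dots a_{n-1} \in \hat{\val}(\term[1]) \setminus \hat{\val}(\term[2])$, setting $B = \set{a_0, \dots, a_{n-1}}$, and invoking \Cref{lemma: bounding alphabet} to conclude that $\val_{B} \in \LANG_{\aleph_0}$ refutes the inequation. The only differences are presentational (your explicit remark that $B$ is in fact finite, and the closing discussion of why the statement is phrased with $\aleph_0$), not mathematical.
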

\begin{proof}
    ($\Rightarrow$):
    By $\LANG_{\aleph_0} \subseteq \LANG$.
    ($\Leftarrow$):
    We show the contraposition.
    Let $\val \in \LANG$ and let $a_0 \dots a_{n-1} \in \hat{\val}(\term[1]) \setminus \hat{\val}(\term[2])$.
    By \Cref{lemma: bounding alphabet}, we have $a_0 \dots a_{n-1} \in \hat{\val}_{B}(\term[1]) \setminus \hat{\val}_{B}(\term[2])$ where $B = \set{a_0, \dots, a_{n-1}}$.
    By $\val_{B} \in \LANG_{\aleph_0}$, this completes the proof.
\end{proof}

\begin{rem}
    To prove \Cref{corollary: countably infinite alphabet gen},
    it suffices to use ``\intro*\kl{letters-to-letters valuations}'', which are \kl{words-to-letters valuations} $\val^{\tuple{\word_0, \dots, \word_{m-1}}}$ where $\word_{0}, \dots, \word_{m-1}$ are restricted to \kl{letters}.
    Nevertheless, the transformation in \Cref{lemma: bounding alphabet} has better bounds of the number of \kl{letters}.
    For example, when $\word = \mathtt{a} \mathtt{b} \mathtt{a} \mathtt{b} \mathtt{a} \in \hat{\val}(\term[1]) \setminus \hat{\val}(\term[2])$, we have $\val_{\set{\mathtt{a}, \mathtt{b}}} \in \LANG_{2}$ (because the number of \kl{letters} occurring in $\word$ is $2$)
    and we have $\val^{\tuple{\mathtt{a}, \mathtt{b}, \mathtt{a}, \mathtt{b}, \mathtt{a}}} \in \LANG_{5}$ (because the \kl{length} of $\word$ is $5$).
\end{rem}

\section{On the hierarchy of $\LANG_n$}\label{section: hierarchy}
In this section, we consider \kl{equational theories} of $\LANG_{n}$ where $n$ is bounded.
First, even for KA \kl{terms}, the \kl{equational theories} of $\LANG_{0}$ and $\LANG_{1}$ are different.
Recall that the \kl{equational theory} of $\LANG_{0}$ corresponds to \kl[equational theory]{that} of Boolean algebra (\Cref{prop: LANG0 and Boolean algebra}).
\begin{prop}\label{proposition: LANG0}
    We have:
    \begin{itemize}
        \item $\LANG_{0} \models x \le \id$,
        \item $\LANG_{1} \not\models x \le \id$.
    \end{itemize}
\end{prop}
\begin{proof}
    For $\LANG_{0} \models x \le \id$: 
    Because $\hat{\val}(x) \subseteq \set{\eps} = \hat{\val}(\id)$ for all $\val \in \LANG_{0}$.
    For $\LANG_{1} \not\models x \le \id$:
    We have $\ell \in \hat{\val}(x) \setminus \hat{\val}(\id)$ when $\val(x) = \set{\ell}$.
\end{proof}
The \kl{equation} $x \com{x} \le \emp$ is another example: $\LANG_{0} \models x \com{x} \le \emp$ and $\LANG_{1} \not\models x \com{x} \le \emp$.

The \kl{equational theories} of $\LANG_{1}$ and $\LANG_{2}$ are also different, as follows.
\begin{prop}\label{proposition: LANG1}
    When $x, y \in \vsig$ are distinct, we have:
    \begin{itemize}
        \item $\LANG_{1} \models xy \le yx$,
        \item $\LANG_{2} \not\models xy \le yx$.
    \end{itemize}
\end{prop}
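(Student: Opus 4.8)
The plan is to treat the two bullet points separately; each is short, and the whole content lies in a single observation about unary alphabets. For the first claim, $\LANG_{1} \models xy \le yx$, the key point is that over an alphabet with at most one \kl{letter} the free monoid is commutative (it is isomorphic to $(\nat, +)$ via \kl{length}), so \kl{composition} of \kl{languages} over such an alphabet commutes. First I would note that any $\val \in \LANG_{1}$ lies in some $\LANG_{X}$ with $\card X \le 1$. For \kl{words} $\word[1], \word[2] \in X^{\kstar}$ we have $\word[1]\word[2] = \word[2]\word[1]$ since $X$ has at most one \kl{letter}; hence for all \kl{languages} $\la[1], \la[2] \subseteq X^{\kstar}$ we get $\la[1] \compo \la[2] = \la[2] \compo \la[1]$. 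Applying this with $\la[1] = \hat{\val}(x)$ and $\la[2] = \hat{\val}(y)$ yields $\hat{\val}(xy) = \hat{\val}(yx)$, so in particular $\hat{\val}(xy) \subseteq \hat{\val}(yx)$. Since $\val$ was arbitrary, $\LANG_{1} \models xy \le yx$ (indeed $\LANG_{1} \models xy = yx$).

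For the second claim, $\LANG_{2} \not\models xy \le yx$, I would exhibit an explicit refuting \kl{valuation} over a two-letter alphabet. Take $X = \set{a, b}$ and $\val \in \LANG_{X}$ with $\val(x) = \set{a}$ and $\val(y) = \set{b}$. Then $\hat{\val}(xy) = \set{ab}$ and $\hat{\val}(yx) = \set{ba}$, and since $ab \neq ba$ as \kl{words}, we have $ab \in \hat{\val}(xy) \setminus \hat{\val}(yx)$. Hence $\val \not\models xy \le yx$, so $\LANG_{2} \not\models xy \le yx$.

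I expect no serious obstacle: the content is entirely in recognizing that unary concatenation is commutative, which is exactly why the two-letter alphabet is the first place where $xy = yx$ can fail, and the rest is a direct computation together with a one-line counterexample. The only point deserving a touch of care is confirming that the commutativity argument also covers the degenerate case $\card X = 0$, where $X^{\kstar} = \set{\eps}$ and \kl{composition} is trivially commutative, so that the statement genuinely holds for all of $\LANG_{1}$ and not merely for the one-letter models.
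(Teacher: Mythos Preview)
Your proposal is correct and follows essentially the same approach as the paper: the paper likewise appeals to the commutative law for the unary case (stated in one line) and uses the identical counterexample $\val(x)=\set{\mathtt{a}}$, $\val(y)=\set{\mathtt{b}}$ with $\mathtt{a}\mathtt{b}\in\hat{\val}(xy)\setminus\hat{\val}(yx)$ for the two-letter case. Your write-up simply spells out the commutativity argument and the degenerate $\card X=0$ case more explicitly.
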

\begin{proof}
    For $\LANG_{1} \models xy \le yx$: 
    We have $\hat{\val}(x y) = \hat{\val}(y x)$, by the commutative law.
    For $\LANG_{2} \not\models xy \le yx$: 
    When $\val(x) = \set{\mathtt{a}}$ and $\val(y) = \set{\mathtt{b}}$,
    we have $\mathtt{a}\mathtt{b} \in \hat{\val}(xy) \setminus \hat{\val}(yx)$.
\end{proof}
Additionally, we recall that the \kl{equational theories} of $\LANG_{\aleph_0}$ and $\LANG$ are the same (\Cref{corollary: countably infinite alphabet gen}), even for $\KAtermclass_{\set{-}}$ \kl{terms}.

Now, what about the \kl{equational theories} of $\LANG_{n}$ and $\LANG_{n+1}$ for $n \ge 2$?
In this section, we show that this depends on the class of \kl{terms}, as follows.
\begin{itemize}
    \item For $\KAtermclass$ \kl{terms},
    the \kl{equational theory} of $\LANG_{n}$ coincides with \kl[equational theory]{that} of $\LANG_{n+1}$ (\Cref{theorem: KA collapse} in \Cref{section: KA collapse}),
    
    \item For $\KAtermclass_{\set{-}}$ (i.e., $\KAtermclass$ with full complement) \kl{terms},
    the \kl{equational theory} of $\LANG_{n}$ does not coincide with \kl[equational theory]{that} of $\LANG_{n+1}$ (\Cref{theorem: separation gen} in \Cref{section: KA- infinite}).
\end{itemize}
(We leave open for $\KAtermclass_{\set{\com{x}, \com{\id}}}$ \kl{terms}, see also \Cref{remark: open strictness}.)

\subsection{The hierarchy is collapsed for $\KAtermclass$ terms}\label{section: KA collapse}
For $\KAtermclass$ \kl{terms}, it is easy to see that the hierarchy of $\LANG_{n}$ is collapsed, as standard binary encodings work for $\KAtermclass$ \kl{terms}.
\begin{prop}\label{theorem: KA collapse}
    Let $n \in \nat$ where $n \ge 2$.
    For all $\KAtermclass$ \kl{terms} $\term[1]$, $\term[2]$, we have:
    \[\LANG_{n} \models \term[1] \le \term[2] \quad\Leftrightarrow\quad \LANG_{2} \models \term[1] \le \term[2].\]
\end{prop}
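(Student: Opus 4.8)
The plan is to prove the two implications separately. The direction $\Rightarrow$ is immediate: since $n \ge 2$, we have $\LANG_{2} \subseteq \LANG_{n}$, so any \kl{inequation} valid on all of $\LANG_{n}$ is in particular valid on $\LANG_{2}$. The content lies in the converse $\Leftarrow$, which I would prove by contraposition: assuming $\LANG_{n} \not\models \term[1] \le \term[2]$, I will produce a refuting \kl{valuation} over a two-letter alphabet.

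First I would fix a refuting \kl{valuation} $\val \in \LANG_{X}$ with $\card X \le n$ together with a \kl{word} $\word \in \hat{\val}(\term[1]) \setminus \hat{\val}(\term[2])$. If $\card X \le 2$ then $\val \in \LANG_{2}$ already and we are done, so assume $\card X = k \ge 3$ and write $X = \set{a_0, \dots, a_{k-1}}$. I would then fix a fixed-length binary block code: put $Y = \set{\ell_0, \ell_1}$, choose $d \ge 1$ with $2^{d} \ge k$, and let $h \colon X^{\kstar} \to Y^{\kstar}$ be the unique monoid homomorphism sending each $a_i$ to a distinct codeword $c_i \in Y^{d}$, extended to \kl{languages} by $h(\la) \defeq \set{h(\word[2]) \mid \word[2] \in \la}$. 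Because all codewords share the common length $d \ge 1$, the map $h$ is injective on \kl{words}: $h(\word[1]) = h(\word[2])$ forces $\len{\word[1]} = \len{\word[2]}$ and then equality block by block, hence $\word[1] = \word[2]$. Finally I would define $\val' \in \LANG_{Y}$ by $\val'(x) \defeq h(\val(x))$ for each \kl{variable} $x$.

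The key step is the identity $\hat{\val}'(\term) = h(\hat{\val}(\term))$ for every $\KAtermclass$ \kl{term} $\term$. Rather than inducting on $\term$, I would invoke \Cref{lemma: lang val KA}, so that both sides decompose over the complement-free \kl{language} $\ljump{\term} \subseteq \vsig^{\kstar}$. For a single \kl{word} $\word[2] = y_0 \dots y_{m-1} \in \ljump{\term}$ we have $\hat{\val}'(\word[2]) = h(\val(y_0)) \compo \dots \compo h(\val(y_{m-1})) = h(\val(y_0) \compo \dots \compo \val(y_{m-1})) = h(\hat{\val}(\word[2]))$, using only that $h$ distributes over $\compo$; summing over $\word[2] \in \ljump{\term}$ and using that $h$ distributes over arbitrary unions gives $\hat{\val}'(\term) = h(\hat{\val}(\term))$. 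With this in hand, injectivity finishes the argument: from $\word \in \hat{\val}(\term[1])$ we get $h(\word) \in \hat{\val}'(\term[1])$, whereas $\word \notin \hat{\val}(\term[2])$ together with injectivity of $h$ gives $h(\word) \notin h(\hat{\val}(\term[2])) = \hat{\val}'(\term[2])$. Thus $\val' \in \LANG_{2}$ refutes $\term[1] \le \term[2]$, establishing $\LANG_{2} \not\models \term[1] \le \term[2]$.

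The main obstacle---and precisely the reason the statement is restricted to $\KAtermclass$ \kl{terms}---is the identity $\hat{\val}'(\term) = h(\hat{\val}(\term))$. A \kl{language} homomorphism commutes with $\union$, $\compo$, and $\bl^{\kstar}$, but not with complement: in general $h(X^{\kstar} \setminus \la) \ne Y^{\kstar} \setminus h(\la)$, since $Y^{\kstar}$ contains \kl{words} lying outside the image $h(X^{\kstar})$. This is exactly where the encoding argument would break for $\KAtermclass_{\set{-}}$ \kl{terms}, consistent with the separation established in \Cref{section: KA- infinite}. Everything else (injectivity of the fixed-length code and the two distribution laws) is routine.
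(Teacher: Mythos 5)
Your proof is correct and takes essentially the same route as the paper's: both reduce the $\Leftarrow$ direction to an injective monoid homomorphism encoding the $n$-letter alphabet into two letters, whose induced map on languages preserves $\union$, $\compo$, $\bl^{\kstar}$, $\emp$, and $\id$ (but not complement), so that a refuting word transports injectively to a refuting word over two letters. The only cosmetic differences are that you use a fixed-length block code where the paper uses the prefix code $\ell_i \mapsto \mathtt{a}\mathtt{b}^{i}$, and that you justify the homomorphism identity $\hat{\val}'(\term) = h(\hat{\val}(\term))$ via the word-language decomposition of \Cref{lemma: lang val KA} rather than asserting the $\KAtermclass$-homomorphism property directly as in the paper's sketch.
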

\begin{proof}[Proof Sketch]
    ($\Rightarrow$):
    By $\LANG_{2} \subseteq \LANG_{n}$.
    ($\Leftarrow$):
    Let $A = \set{\ell_0, \dots, \ell_{n-1}}$ and $B = \set{\const{a}, \const{b}}$.
    Let $f \colon A^* \to B^*$ be the unique monoid homomorphism extending $\ell_{i} \mapsto \mathtt{a} \mathtt{b}^{i}$ and let $f' \colon \wp(A^*) \to \wp(B^*)$ be the map: $f'(L) \defeq \set{f(\word) \mid \word \in A^*}$.
    Then, as $f'$ is an injective $\KAtermclass$-homomorphism (i.e., $f'$ preserves the operators $\union$, $\compo$, $\bl^{\kstar}$, $\emp$, and $\id$) from $\lang_{A}$ to $\lang_{B}$,
    we can show this proposition.
\end{proof}
Thus, for $\KAtermclass$ \kl{terms}, we have:
\begin{align*}
    & \mathrm{EqT}(\LANG_{0}) \supsetneq \mathrm{EqT}(\LANG_{1})\\
    & \supsetneq \mathrm{EqT}(\LANG_{2}) = \dots = \mathrm{EqT}(\LANG_{n}) = \dots = \mathrm{EqT}(\LANG_{\aleph_0}) = \mathrm{EqT}(\LANG).
\end{align*}
Here, $\mathrm{EqT}(\algclass)$ denotes the \kl{equational theory} of a class $\algclass$ for $\KAtermclass$ \kl{terms}.

\begin{rem}\label{remark: failure of KA collapse}
    We cannot directly extend \Cref{theorem: KA collapse} for $\KAtermclass_{\set{\com{x}}}$, $\KAtermclass_{\set{\com{\id}}}$, and $\KAtermclass$ with top \kl{terms},
    as the map $f'$ does not preserve the operators $\bl^{-}$ or $\top$.
\end{rem}

\subsection{The hierarchy is infinite for $\KAtermclass_{\set{-}}$ terms}\label{section: KA- infinite}
We first show that the \kl{equational theories} of $\LANG_{2}$ and $\LANG_{3}$ are not the same for $\KAtermclass_{\set{-}}$ \kl{terms},
and then we generalize the construction for the \kl{equational theories} of $\LANG_{n}$ and $\LANG_{n+1}$.
\begin{prop}\label{proposition: separation}
    Let $\term[1]$ and $\term[2]$ be the following $\KAtermclass_{\set{-}}$ \kl{terms}:
    \begin{align*}
        \term[1] &\ \defeq\  (\top ((x \union y \union z)^{*})^{-} \top)^{-},\\
        \term[2] &\ \defeq\  (\top ((x \union y)^{*})^{-} \top)^{-} \union (\top ((y \union z)^{*})^{-} \top)^{-} \union (\top ((z \union x)^{*})^{-} \top)^{-}.
    \end{align*}
    Then we have:
    \begin{itemize}
        \item $\LANG_{2} \models \term[1] \le \term[2]$,
        \item $\LANG_{3} \not\models \term[1] \le \term[2]$.
    \end{itemize}
\end{prop}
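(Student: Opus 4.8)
The plan is to first unfold both terms, for an arbitrary valuation, into explicit languages built from a ``factorial interior'' operation, and then treat the two inclusions separately, with the alphabet bound entering only the second. Fix $\val\in\LANG_{X}$ and abbreviate $U\defeq\val(x)\cup\val(y)\cup\val(z)$ together with the pairwise unions $U_{xy}\defeq\val(x)\cup\val(y)$, $U_{yz}\defeq\val(y)\cup\val(z)$, and $U_{zx}\defeq\val(z)\cup\val(x)$. Since $\hat{\val}(\top)=X^{\kstar}$, for any subterm $\term$ the language $\hat{\val}(\top\,\term^{-}\,\top)$ consists of the words having some factor outside $\hat{\val}(\term)$, so $\hat{\val}((\top\,\term^{-}\,\top)^{-})$ consists of the words all of whose factors lie in $\hat{\val}(\term)$. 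Writing $\Phi(\la)$ for the set of words every factor of which lies in a language $\la$ (equivalently, the largest factor-closed sublanguage of $\la$), this gives
\[
  \hat{\val}(\term[1]) = \Phi(U^{\kstar}),
  \qquad
  \hat{\val}(\term[2]) = \Phi(U_{xy}^{\kstar}) \cup \Phi(U_{yz}^{\kstar}) \cup \Phi(U_{zx}^{\kstar}).
\]
I would record this identity as the first step, since everything afterwards is pure combinatorics on $\Phi$.

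For $\LANG_{3}\not\models\term[1]\le\term[2]$ I would exhibit the valuation over $X=\set{\mathtt a,\mathtt b,\mathtt c}$ given by $\val(x)=\set{\mathtt a}$, $\val(y)=\set{\mathtt b}$, $\val(z)=\set{\mathtt c}$. Then $U=X$, so $U^{\kstar}=X^{\kstar}$ and $\hat{\val}(\term[1])=X^{\kstar}$, whereas $U_{xy}^{\kstar}=\set{\mathtt a,\mathtt b}^{\kstar}$ is already factor-closed, giving $\Phi(U_{xy}^{\kstar})=\set{\mathtt a,\mathtt b}^{\kstar}$ and symmetrically for the other two pairs. Hence $\hat{\val}(\term[2])=\set{\mathtt a,\mathtt b}^{\kstar}\cup\set{\mathtt b,\mathtt c}^{\kstar}\cup\set{\mathtt c,\mathtt a}^{\kstar}$, and the word $\mathtt a\mathtt b\mathtt c$, which uses all three letters, witnesses $\hat{\val}(\term[1])\not\subseteq\hat{\val}(\term[2])$.

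The main work is $\LANG_{2}\models\term[1]\le\term[2]$. Fix $X$ with $\card X\le 2$ and any $\word\in\hat{\val}(\term[1])=\Phi(U^{\kstar})$; the goal is to land $\word$ in one of the three pairwise factorial interiors. The pivotal observation is a single-letter argument: each letter $\ell$ occurring in $\word$ is a factor of $\word$, hence lies in $U^{\kstar}$, and a length-one word lies in $U^{\kstar}$ only by being one of the generators, so in fact $\ell\in U$, i.e.\ $\ell$ belongs to at least one of $\val(x),\val(y),\val(z)$. Thus, assigning to each letter occurring in $\word$ an index in $\set{x,y,z}$ whose value contains it, and using $\card X\le 2$ so that at most two distinct letters occur, I obtain a set of at most two indices; any such set is contained in one of the pairs $\set{x,y},\set{y,z},\set{z,x}$, and for the corresponding pairwise union $U'$ every letter occurring in $\word$ lies in $U'$. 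Consequently every factor of $\word$ is a word over those letters and hence lies in $(U')^{\kstar}$, so $\word\in\Phi((U')^{\kstar})\subseteq\hat{\val}(\term[2])$. The degenerate cases $\word=\eps$, $\card X\le 1$, and words using a single letter are all subsumed by this argument.

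The step I expect to be the crux is this single-letter argument together with the pigeonhole that at most two of the three sets always fit inside one pairwise union: this is precisely where the bound $\card X\le 2$ is used, and precisely what fails at three letters, where a single word may force all three indices $x,y,z$ and so escape every pair. By contrast the semantic unfolding of the first paragraph is routine once one reads $(\top\,\term^{-}\,\top)^{-}$ as ``all factors lie in $\hat{\val}(\term)$'', and crucially no analysis of long factors is needed in the second inclusion: the chosen $U'$ already contains every letter of $\word$, which trivialises the factorial interior on the relevant sub-alphabet.
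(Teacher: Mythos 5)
Your proposal is correct and follows essentially the same route as the paper's proof: the same single-letter argument showing every letter of $\word$ lies in $\hat{\val}(x \union y \union z)$, the same pigeonhole placing those letters inside one pairwise union when the alphabet has at most two letters, and the same three-letter singleton valuation for the failure over $\LANG_{3}$. Your $\Phi$ (factor-closure) notation is just a cleaner packaging of the unfolding the paper performs inline.
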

\begin{proof}
    For $\LANG_{2} \models \term[1] \le \term[2]$:
    Let $\val \in \LANG_{A}$ where $\#A = 2$.
    Let $\word\in \hat{\val}(\term[1])$.
    Let $B = \set{a \in A \mid \mbox{$a$ occurs in $\word$}}$.
    For each $a \in B$, if $a \not\in \hat{\val}(x \union y \union z)$, then by $a \not\in \hat{\val}((x \union y \union z)^*)$, we have $\word \in \hat{\val}(\top ((x \union y \union z)^{*})^{-} \top)$, and thus $\word \not\in \hat{\val}(\term[1])$, reaching a contradiction.
    Hence, $B \subseteq \hat{\val}(x \union y \union z)$.
    Because $\# B \le 2$, we have either one of the following:
    \begin{gather*}
        B \subseteq \hat{\val}(x \union y), \qquad 
        B \subseteq \hat{\val}(y \union z), \qquad 
        B \subseteq \hat{\val}(z \union x).
    \end{gather*}
    If $B \subseteq \hat{\val}(x \union y)$, then by $B^* \subseteq \hat{\val}((x \union y)^*)$, any \kl{word} in $\hat{\val}(((x \union y)^*)^{-})$ should contain some \kl{letter} in $A \setminus B$.
    Thus by $\word \not\in \hat{\val}(\top ((x \union y)^*)^{-} \top)$, we have $\word \in \hat{\val}(\term[2])$.
    Similarly for the other cases, we have $\word \in \hat{\val}(\term[2])$.
    Hence, this completes the proof.

    For $\LANG_{3} \not\models \term[1] \le \term[2]$:
    Let $A = \set{\mathtt{a}, \mathtt{b}, \mathtt{c}}$ and let $\val \in \LANG_{A}$ be the \kl{valuation} s.t.\
    $\val(x) = \set{\mathtt{a}}$, $\val(y) = \set{\mathtt{b}}$, and $\val(z) = \set{\mathtt{c}}$.
    Then we have:
    \begin{align*}
        \hat{\val}(\term[1]) &= \set{\mathtt{a}, \mathtt{b}, \mathtt{c}}^*, &
        \hat{\val}(\term[2]) &= \set{\mathtt{a}, \mathtt{b}}^* \cup \set{\mathtt{b}, \mathtt{c}}^* \cup \set{\mathtt{c}, \mathtt{a}}^*.
    \end{align*}
    Hence by $\val \not\models \term[1] \le \term[2]$, this completes the proof.
\end{proof}
We can straightforwardly generalize the argument above for separating the \kl{equational theory} of $\LANG_{n}$ and \kl[equational theory]{that} of $\LANG_{n+1}$, as follows:
\begin{thm}\label{theorem: separation gen}
    Let $n \ge 1$.
    Let $\term[1]$ and $\term[2]$ be the following $\KAtermclass_{\set{-}}$ \kl{terms}:
    \begin{align*}
        \term[1] &\ \defeq\  (\top ((\sum_{i \in \range{0, n}} x_i)^{*})^{-} \top)^{-}, & 
        \term[2] &\ \defeq\  \sum_{j \in \range{0, n}} (\top ((\sum_{i \in \range{0, n} \setminus \set{j}} x_i)^{*})^{-} \top)^{-}.
    \end{align*}
    Then we have:
    \begin{itemize}
        \item $\LANG_{n} \models \term[1] \le \term[2]$,
        \item $\LANG_{n+1} \not\models \term[1] \le \term[2]$.
    \end{itemize}
\end{thm}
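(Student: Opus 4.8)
The plan is to follow the two-sided strategy of \Cref{proposition: separation} almost verbatim, replacing the three-variable bookkeeping by a pigeonhole count. The first step is to record the semantic reading of the two terms. Writing $Y = \sum_{i \in \range{0,n}} x_i$ and $Y_j = \sum_{i \in \range{0,n}\setminus\set{j}} x_i$, and using that $\hat{\val}(\top) = X^{\kstar}$ is the whole universe, a word $\word$ lies in $\hat{\val}(\top (Z^{\kstar})^{-}\top)$ iff some factor (contiguous subword) of $\word$ is not in $\hat{\val}(Z^{\kstar})$; hence $\word \in \hat{\val}((\top (Z^{\kstar})^{-}\top)^{-})$ iff \emph{every} factor of $\word$ lies in $\hat{\val}(Z^{\kstar})$. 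Thus $\hat{\val}(\term[1])$ is the set of words all of whose factors lie in $\hat{\val}(Y^{\kstar})$, and $\hat{\val}(\term[2])$ is the union over $j \in \range{0,n}$ of the sets of words all of whose factors lie in $\hat{\val}(Y_j^{\kstar})$.

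For $\LANG_{n} \models \term[1] \le \term[2]$, I would fix $\val \in \LANG_A$ with $\card A \le n$ and $\word \in \hat{\val}(\term[1])$, and let $B \subseteq A$ be the set of letters occurring in $\word$. Each such letter $a$ is a factor of $\word$, so $a \in \hat{\val}(Y^{\kstar})$; since $a$ is a single letter (and a length-one word in a Kleene star already lies in the base language) this forces $a \in \hat{\val}(Y) = \bigcup_i \hat{\val}(x_i)$. The key step is to find one index $j$ with $B \subseteq \hat{\val}(Y_j)$. Call $a \in B$ \emph{exclusive to $j$} if $a \in \hat{\val}(x_j)$ but $a \notin \hat{\val}(x_i)$ for all $i \neq j$; then $B \not\subseteq \hat{\val}(Y_j)$ exactly when some $a \in B$ is exclusive to $j$. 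Letters exclusive to distinct indices are distinct, so if every $j \in \range{0,n}$ admitted an exclusive letter we would obtain $n+1$ distinct letters inside $B$, contradicting $\card B \le \card A \le n$. Hence some $j$ has no exclusive letter, i.e.\ $B \subseteq \hat{\val}(Y_j)$. From this, monotonicity gives $B^{\kstar} \subseteq \hat{\val}(Y_j^{\kstar})$; since $\word$ and all its factors lie in $B^{\kstar}$, every factor of $\word$ lies in $\hat{\val}(Y_j^{\kstar})$, so $\word$ belongs to the $j$-th disjunct of $\term[2]$.

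For $\LANG_{n+1} \not\models \term[1] \le \term[2]$, I would take $A = \set{a_0, \dots, a_n}$ with $\card A = n+1$ and the valuation $\val \in \LANG_A$ with $\val(x_i) = \set{a_i}$. Then $\hat{\val}(Y^{\kstar}) = A^{\kstar}$, so $\hat{\val}(\term[1]) = A^{\kstar}$, while $\hat{\val}(Y_j^{\kstar}) = (A \setminus \set{a_j})^{\kstar}$, so $\hat{\val}(\term[2]) = \bigcup_{j} (A \setminus \set{a_j})^{\kstar}$, the set of words missing at least one letter. The word $a_0 a_1 \dots a_n$ uses every letter of $A$, so it lies in $\hat{\val}(\term[1]) \setminus \hat{\val}(\term[2])$, witnessing $\val \not\models \term[1] \le \term[2]$.

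The only genuinely new point beyond \Cref{proposition: separation} is the counting argument in the second paragraph; everything else is the same factor-membership reasoning instantiated at $n+1$ variables instead of three. I expect the pigeonhole step to be the main obstacle to state carefully, since it is exactly where the cardinality threshold $n$ versus $n+1$ enters, and it is what forces $\term[2]$ to omit precisely one variable in each disjunct.
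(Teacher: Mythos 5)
Your proof is correct and follows essentially the same route as the paper's: the same set $B$ of letters occurring in $\word$, the same pigeonhole selection of an index $j$ with $B \subseteq \hat{\val}(\sum_{i \in \range{0,n} \setminus \set{j}} x_i)$, and the same counterexample valuation $\val(x_i) = \set{a_i}$ over $n+1$ letters for the failure at $\LANG_{n+1}$. The only differences are presentational: you state the factor characterization of $(\top L \top)^{-}$ up front and spell out the pigeonhole step via exclusive letters, both of which the paper's proof leaves implicit.
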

\begin{proof}
    For $\LANG_{n} \models \term[1] \le \term[2]$:
    Let $\val \in \LANG_{A}$ where $\#A = n$.
    Let $\word\in \hat{\val}(\term[1])$.
    Let $B = \set{a \in A \mid \mbox{$a$ occurs in $\word$}}$.
    For each $a \in B$, if $a \not\in \hat{\val}(\sum_{i \in \range{0, n}} x_i)$, then by $a \not\in \hat{\val}((\sum_{i \in \range{0, n}} x_i)^*)$, we have $\word \in \hat{\val}(\top ((\sum_{i \in \range{0, n}} x_i)^{*})^{-} \top)$, and thus $\word \not\in \hat{\val}(\term[1])$, reaching a contradiction.
    Hence, $B \subseteq \hat{\val}(\sum_{i \in \range{0, n}} x_i)$.
    Because $\# B \le n$, there is some $j \in \range{0, n}$ s.t.\ 
    \begin{gather*}
        B \subseteq \hat{\val}(\sum_{i \in \range{0, n} \setminus \set{j}} x_i).
    \end{gather*}
    Then by $B^* \subseteq \hat{\val}((\sum_{i \in \range{0, n} \setminus \set{j}} x_i)^*)$, any \kl{word} in $\hat{\val}(((\sum_{i \in \range{0, n} \setminus \set{j}} x_i)^*)^{-})$ should contain some \kl{letter} in $A \setminus B$.
    Thus by $\word \not\in \hat{\val}(\top ((\sum_{i \in \range{0, n} \setminus \set{j}} x_i)^*)^{-} \top)$, we have $\word \in \hat{\val}(\term[2])$.
    Hence, this completes the proof of the first statement.

    For $\LANG_{n+1} \not\models \term[1] \le \term[2]$:
    Let $A = \set{\ell_{i} \mid i \in \range{0, n}}$ and let $\val \in \LANG_{A}$ be the \kl{valuation} s.t.\
    $\val(x_i) = \set{\ell_{i}}$ for each $i$.
    Then we have:
    \begin{align*}
        \hat{\val}(\term[1]) &= \set{\ell_{i} \mid i \in \range{0, n}}^*, &
        \hat{\val}(\term[2]) &= \bigcup_{j \in \range{0, n}} \set{\ell_i \mid i \in \range{0, n} \setminus \set{j}}^*.
    \end{align*}
    Hence by $\val \not\models \term[1] \le \term[2]$, this completes the proof of the second statement.
\end{proof}
Summarizing the above, for  $\KAtermclass_{\set{-}}$ \kl{terms}, we have:
\begin{align*}
    & \mathrm{EqT}(\LANG_{0}) \supsetneq \mathrm{EqT}(\LANG_{1}) \supsetneq \mathrm{EqT}(\LANG_{2}) \supsetneq \mathrm{EqT}(\LANG_{3}) \supsetneq \dots \\
    & \supsetneq \mathrm{EqT}(\LANG_{n}) \supsetneq \dots \supsetneq \mathrm{EqT}(\LANG_{\aleph_0}) = \mathrm{EqT}(\LANG).
\end{align*}
Here, $\mathrm{EqT}(\algclass)$ denotes the \kl{equational theory} of a class $\algclass$ for $\KAtermclass_{\set{-}}$ \kl{terms}.

\begin{rem}
    The \kl{equation} used in \Cref{theorem: separation gen} is based on the the following quantifier-free formula:
    \begin{align*}
        \LANG_{n} &\models ((\sum_{i \in \range{0, n}} x_i)^{*} = \top) \rightarrow (\bigvee_{j \in \range{0, n}} (\sum_{i \in \range{0, n} \setminus \set{j}} x_i)^{*} = \top),\\
        \LANG_{n + 1} &\not\models ((\sum_{i \in \range{0, n}} x_i)^{*} = \top) \rightarrow (\bigvee_{j \in \range{0, n}} (\sum_{i \in \range{0, n} \setminus \set{j}} x_i)^{*} = \top).
    \end{align*}
\end{rem}

\begin{rem}[open]\label{remark: open strictness}
    In the above construction, we need full complements.
    We leave open whether the hierarchy above is infinite for $\KAtermclass_{\set{\com{x}, \com{\id}}}$ (resp.\ $\KAtermclass_{\set{\com{x}}}$, $\KAtermclass_{\set{\com{\id}}}$, and $\KAtermclass$ with top) \kl{terms}.
\end{rem}
Note that, for some fragments, the hierarchy is collapsed, e.g., \Cref{corollary: bounded alphabet}, \Cref{theorem: KA collapse}, and \Cref{theorem: completeness word LANG2}.
In the next section, we show that the hierarchy is collapsed for \kl{words} with \kl{variable complements} (\Cref{theorem: completeness word LANG2}).

\section{Completeness theorem of the equational theory for the word fragment}\label{section: completeness word}
In this section, we show a completeness theorem for the \kl{equational theory} of $\LANG_{\alpha}$ for \kl{words} over $\tilde{\vsig}_{\com{\id}}$.
More precisely, we present a sound and complete equational proof system with a recursive set of axioms.
Notice that \kl{words-to-letters valuations} need an unbounded number of \kl{letters}, so they may not be compatible with $\LANG_{n}$ when $n$ is bounded.
In the following, we consider other \kl{valuations}.
Let $\mathcal{E}$ be a set of \kl{equations}.
We define $(=_{\mathcal{E}}) \subseteq \tilde{\vsig}_{\com{\id}}^{*} \times \tilde{\vsig}_{\com{\id}}^{*}$ as the minimal congruence (and equivalence) relation subsuming $\mathcal{E}$, i.e., the minimal relation satisfying the following:
\begin{itemize}
    \item $(=_{\mathcal{E}})$ is an equivalence relation: reflexive, symmetric, and transitive,
    \item $(=_{\mathcal{E}})$ is a congruence relation: if $\word[1] =_{\mathcal{E}} \word[2]$ and $\word[1]' =_{\mathcal{E}} \word[2]'$, then $\word[1] \word[1]' =_{\mathcal{E}} \word[2] \word[2]'$,
    \item if $(\word[1] = \word[2]) \in \mathcal{E}$, then $\word[1] =_{\mathcal{E}} \word[2]$.
\end{itemize}
We write $\mathcal{E} \vdash \word[1] = \word[2]$ if $\word[1] =_{\mathcal{E}} \word[2]$.

\subsection{On $\LANG_{0}$}
For a \kl{word} $\word = x_0 \dots x_{n-1} \in \tilde{\vsig}_{\com{\id}}^*$, we write $\mathsf{Occ}(\word)$ for the set $\set{x_i \mid i \in \range{0, n-1}}$.
\begin{thm}\label{theorem: completeness word LANG0}
    Let $\mathcal{E}_0 \defeq \set{xy = yx, xx = x, z\com{z} = \com{\id}, \com{\id}x = \com{\id} \mid x, y \in \tilde{\vsig}_{\com{\id}}, z \in \vsig}$.
    For all \kl{words} $\word[1], \word[2] \in \tilde{\vsig}_{\com{\id}}^*$, we have:
    \[\LANG_{0} \models \word[1] = \word[2] \quad\Leftrightarrow\quad \mathcal{E}_0 \vdash \word[1] = \word[2].\]
\end{thm}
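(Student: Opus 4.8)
The statement is a soundness-and-completeness result relating the language model $\LANG_0$ (valuations into $\lang_{\emptyset}$, i.e.\ subsets of $\{\eps\}$) with the syntactic congruence $=_{\mathcal{E}_0}$ generated by the axioms $xy=yx$, $xx=x$, $z\com{z}=\com{\id}$, and $\com{\id}x=\com{\id}$. The plan is to prove both directions separately, and for the completeness direction ($\Leftarrow$ is soundness, $\Rightarrow$ is completeness) to find a normal form for words modulo $=_{\mathcal{E}_0}$ that can be read off from the semantic behaviour in $\LANG_0$.

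\emph{Soundness ($\Leftarrow$).} I would first check that each axiom in $\mathcal{E}_0$ holds in $\LANG_0$. By \Cref{prop: LANG0 and Boolean algebra} the $(\sig\setminus\{\bl^*\})$-reduct of $\lang_{\emptyset}$ is the two-valued Boolean algebra, with $\compo$ as conjunction; since $\domain{\lang_{\emptyset}}=\{\emptyset,\{\eps\}\}$, composition is idempotent and commutative on the nose, giving $xy=yx$ and $xx=x$. For $z\com{z}=\com{\id}$, I evaluate both sides on an arbitrary $\val\in\LANG_0$: $\hat\val(z)\compo\hat\val(\com z)=\hat\val(z)\cap(\{\eps\}\setminus\hat\val(z))=\emptyset=\hat\val(\com{\id})$, using $\hat\val(\com{\id})=\emptyset^*\setminus\{\eps\}=\emptyset$ over the empty alphabet. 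And $\com{\id}x=\com{\id}$ holds because $\hat\val(\com{\id})=\emptyset$ and $\emptyset\compo L=\emptyset$. Since $=_{\mathcal{E}_0}$ is the least congruence containing these and composition is interpreted as a congruence in every $\LANG_0$-valuation, $\mathcal{E}_0\vdash\word[1]=\word[2]$ implies $\LANG_0\models\word[1]=\word[2]$.

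\emph{Completeness ($\Rightarrow$): the normal form.} The key step is to classify words modulo $=_{\mathcal{E}_0}$. Using idempotence and commutativity, any word over $\tilde{\vsig}_{\com{\id}}$ rewrites to a product of distinct letters of $\mathsf{Occ}(\word)$ in some fixed order; the axiom $\com{\id}x=\com{\id}$ (plus commutativity) collapses any word containing $\com{\id}$ to a canonical ``annihilated'' form, and $z\com z=\com{\id}$ converts any word containing both $z$ and $\com z$ into one containing $\com{\id}$. So I expect exactly two kinds of normal form: an \emph{annihilating} class (equal to a word containing $\com{\id}$, equivalently one in which some $z$ and $\com z$ both occur, or $\com{\id}$ itself occurs) and a \emph{consistent} class, determined by the set $\mathsf{Occ}(\word)\subseteq\tilde{\vsig}$ which contains no complementary pair. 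I would prove that two consistent words are $=_{\mathcal{E}_0}$-equal iff they have the same occurrence set, and that all annihilating words are mutually $=_{\mathcal{E}_0}$-equal. To conclude, I separate these classes semantically in $\LANG_0$: an annihilating word $\word$ always satisfies $\hat\val(\word)=\emptyset$ for every $\val\in\LANG_0$ (it forces $\com{\id}$, hence $\emptyset$), whereas for a consistent word with occurrence set $S$ I exhibit the valuation $\val_S$ with $\val_S(z)=\{\eps\}$ for $z\in\mathsf{Occ}$-relevant positive letters and $\val_S(z)=\emptyset$ otherwise (chosen so that $\hat\val_S(\word)=\{\eps\}$), distinguishing it from any word with a different occurrence set or from an annihilating word. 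Thus $\LANG_0\models\word[1]=\word[2]$ forces the two words into the same class, giving $\mathcal{E}_0\vdash\word[1]=\word[2]$.

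\emph{Main obstacle.} The delicate point is the bookkeeping in the normal-form lemma, specifically verifying that $=_{\mathcal{E}_0}$ has \emph{exactly} these classes and not finer ones — i.e.\ that the four axioms genuinely identify all words with the same $\LANG_0$-semantics and do not over-identify. In particular I must treat $\com{\id}$ carefully: a word is annihilating precisely when it either contains $\com{\id}$, or contains some matching pair $z,\com z$; I need the reduction $z\com z =_{\mathcal{E}_0} \com{\id}$ followed by $\com{\id}x=_{\mathcal{E}_0}\com{\id}$ to route every such word to a single representative, and I must confirm no consistent word reduces to this representative. Once the occurrence-set invariant is shown to be a complete invariant for consistent words and the single annihilating class is isolated, matching it against the explicit $\LANG_0$-valuations closes the argument.
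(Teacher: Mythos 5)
Your overall strategy is the same as the paper's: the paper also characterizes $\mathcal{E}_0$-provability by ``same occurrence set, or both words annihilating (i.e., containing $\com{\id}$ or a complementary pair $z,\com{z}$)'' and then matches this against the two-valued Boolean algebra semantics of $\LANG_{0}$ via \Cref{prop: LANG0 and Boolean algebra}. Your soundness direction and your normal-form analysis of $=_{\mathcal{E}_0}$ (annihilating words all collapse to $\com{\id}$; consistent words are determined up to $=_{\mathcal{E}_0}$ by their occurrence set) are correct.

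There is, however, a genuine flaw in your semantic separation step. You claim that the single valuation $\val_S$ (assigning $\set{\eps}$ to variables occurring positively in $\word[1]$ and $\emptyset$ to all others) distinguishes a consistent word $\word[1]$ with occurrence set $S$ ``from any word with a different occurrence set.'' This is false: take $\word[1] = \com{z}$ and $\word[2] = \com{z}\,\com{y}$ with $z \neq y$. Both are consistent with different occurrence sets, yet $\val_S$ assigns $\emptyset$ to every variable, hence makes \emph{every} negative literal true, so $\hat{\val}_S(\word[1]) = \hat{\val}_S(\word[2]) = \set{\eps}$; the same failure occurs for the valuation built from $\word[2]$. (Note $\LANG_0 \not\models \com{z} = \com{z}\,\com{y}$, witnessed by $\val(z)=\emptyset$, $\val(y)=\set{\eps}$, so completeness is at stake here.) The distinguishing valuation cannot be built from $\word[1]$ alone; it must depend on a letter in the symmetric difference of the two occurrence sets. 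The repair is standard: if $S \neq T$ with both words consistent, pick WLOG $u \in T \setminus S$ and define $\val$ so that every literal in $S$ is true and $u$ is false, i.e., $\val(x) = \set{\eps}$ if $x \in S$ or $u = \com{x}$, and $\val(x) = \emptyset$ if $\com{x} \in S$ or $u = x$; this is well defined because $S$ contains no complementary pair and $u \notin S$. Then $\hat{\val}(\word[1]) = \set{\eps}$ while $\hat{\val}(\word[2]) = \emptyset$. Your argument for separating consistent words from annihilating ones (via $\hat{\val}_S(\word[1]) = \set{\eps}$ versus the constant value $\emptyset$) is fine as stated; with the above fix, the whole proof goes through and coincides in substance with the paper's.
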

\begin{proof}
    By \Cref{prop: LANG0 and Boolean algebra}, we have that $\LANG_{0} \models \word[1] = \word[2]$ iff $\word[1] = \word[2]$ is valid in Boolean algebra where
    the \kl{composition} ($\compo$) maps to the conjunction, the empty constant ($\id$) mapsto the true constant, and the complement ($\bl^{-}$) maps to the complement.
    Then $\mathcal{E}_0 \vdash \word[1] = \word[2]$ iff $\bigvee \left\{\begin{aligned}
        &\mathsf{Occ}(\word[1]) = \mathsf{Occ}(\word[2]),\\
        &\bigwedge \left\{
            \begin{aligned}
              &  \com{\id} \in \mathsf{Occ}(\word[1]) \lor (\exists z \in \vsig,\ \set{z, \com{z}} \subseteq \mathsf{Occ}(\word[1])),\\
              &  \com{\id} \in \mathsf{Occ}(\word[2]) \lor (\exists z \in \vsig,\ \set{z, \com{z}} \subseteq \mathsf{Occ}(\word[2]))
            \end{aligned}
        \right\}
    \end{aligned}\right\}$ iff $\word[1] = \word[2]$ is valid in Boolean algebra (the below case of the disjunction denotes that both the translated formulas in propositional logic are equivalent to the false constant).
\end{proof}

\subsection{On $\LANG_{1}$}
For a \kl{word} $\word[1] = x_0 \dots x_{n-1} \in \tilde{\vsig}_{\com{\id}}^{*}$ and $X \subseteq \tilde{\vsig}_{\com{\id}}$,
we write $\len{\word[1]}_{X}$ for the number $\#(\set{i \in \range{0, n-1} \mid x_{i} \in X})$.
Particularly, we write $\len{\word[1]}_{x}$ for $\len{\word[1]}_{\set{x}}$.
For a \kl{letter} $a$ and $n \in \nat$, we write $a^{n}$ for the \kl{word} $a \dots a$ of \kl{length} $n$.
\begin{thm}\label{theorem: completeness word LANG1}
    Let $\mathcal{E}_1 \defeq \set{xy = yx \mid x, y \in \tilde{\vsig}_{\com{\id}}}$.
    For all \kl{words} $\word[1], \word[2] \in \tilde{\vsig}_{\com{\id}}^*$, we have:
    \[\LANG_{1} \models \word[1] = \word[2] \quad \Leftrightarrow \quad \mathcal{E}_{1} \vdash \word[1] = \word[2].\]
\end{thm}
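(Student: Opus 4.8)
The plan is to work in the one-letter case: for the alphabet $\set{a}$, identify the monoid $\set{a}^{\kstar}$ with $\nat$ via $a^{n} \mapsto n$, so that a valuation $\val \in \LANG_{\set{a}}$ assigns to each variable a subset of $\nat$, composition $\compo$ becomes the sumset of subsets of $\nat$, $\hat{\val}(\com{x}) = \nat \setminus \val(x)$, and $\hat{\val}(\com{\id}) = \nat \setminus \set{0}$. Since $(=_{\mathcal{E}_1})$ is the least equivalence-and-congruence relation generated by commutativity, $\mathcal{E}_1 \vdash \word[1] = \word[2]$ holds \emph{iff} $\word[1]$ and $\word[2]$ contain the same number of occurrences of each symbol of $\tilde{\vsig}_{\com{\id}}$; writing $\len{\word}_{x}$, $\len{\word}_{\com{x}}$, $\len{\word}_{\com{\id}}$ for these occurrence counts, this says the three families of counts agree. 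The direction $\Leftarrow$ is then immediate: the free monoid $\set{a}^{\kstar}$ is commutative, hence the sumset is commutative and $\LANG_{1} \models xy = yx$; as the equivalence- and congruence-closure rules are sound, $\mathcal{E}_1 \vdash \word[1] = \word[2]$ implies $\LANG_{1} \models \word[1] = \word[2]$.

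For completeness ($\Rightarrow$) I would prove the contrapositive by recovering each occurrence count from the values $\hat{\val}(\word)$ of a finite family of valuations in $\LANG_{\set{a}}$. The key point is that for the valuations below every letter of $\word$ evaluates to a \emph{nonempty} subset of $\nat$, so $\hat{\val}(\word)$ is a sumset of nonempty sets and $\min \hat{\val}(\word)$ is the sum of the minima of the factors, an affine functional of the counts that I can steer with a scale parameter. Fix a target variable $t$ and set $\val(y) = \set{0}$ for every $y \neq t$, so each positive $y$ contributes minimum $0$, each $\com{y}$ contributes $\nat_{\geq 1}$ of minimum $1$, and $\com{\id}$ contributes minimum $1$; write $b \defeq \len{\word}_{\com{\id}} + \sum_{y \neq t} \len{\word}_{\com{y}}$ for the resulting background minimum. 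Then taking $\val(t) = \set{c}$ gives $\min \hat{\val}(\word) = c\,\len{\word}_{t} + b$, since $\com{t} = \nat \setminus \set{c}$ still has minimum $0$; and taking $\val(t) = \set{0,1,\dots,c-1}$ gives $\min \hat{\val}(\word) = c\,\len{\word}_{\com{t}} + b$, since now $t$ has minimum $0$ while $\com{t} = \nat_{\geq c}$ has minimum $c$. In each family, subtracting the $c=1$ value from the $c=2$ value isolates $\len{\word}_{t}$ and $\len{\word}_{\com{t}}$ respectively. Hence if $\LANG_{1} \models \word[1] = \word[2]$, all these sets coincide, so their minima coincide, giving $\len{\word[1]}_{t} = \len{\word[2]}_{t}$ and $\len{\word[1]}_{\com{t}} = \len{\word[2]}_{\com{t}}$ for every $t$. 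Finally the valuation $\val(y) = \set{0}$ for \emph{all} $y$ yields $\min \hat{\val}(\word) = \len{\word}_{\com{\id}} + \sum_{y} \len{\word}_{\com{y}}$, which together with the already-recovered negative counts pins down $\len{\word}_{\com{\id}}$. Thus all occurrence counts agree and $\mathcal{E}_1 \vdash \word[1] = \word[2]$.

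The main obstacle is exactly this completeness half, and it is why the mere commutativity axioms suffice only after a careful choice of valuations. Because complement turns finite sets into cofinite ones, the ``clean'' valuations sending each variable to $\set{0}$ or $\nat_{\geq 1}$ collapse every word to an up-set $\nat_{\geq k}$ and expose only coarse linear combinations of the counts; for instance they cannot separate $x \com{x}$ from $\com{\id}$, both evaluating to $\nat_{\geq 1}$ under every such valuation. The resolution is to introduce a genuine scale parameter $c$ (singletons $\set{c}$ to read the positive count, initial segments $\set{0,\dots,c-1}$ to read the negative count), so that $\min \hat{\val}(\word)$ becomes affine in $c$ with slope equal to the desired count. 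A routine but essential verification running through the argument is that none of the chosen valuations sends a variable to all of $\nat$, so no complemented factor is empty and the ``minimum of a sumset equals the sum of minima'' computation remains valid.
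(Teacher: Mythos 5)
Your proposal is correct and takes essentially the same approach as the paper: both reduce $\mathcal{E}_1$-derivability to equality of occurrence counts of each symbol of $\tilde{\vsig}_{\com{\id}}$, and both recover those counts by comparing $\min\set{n \mid \mathtt{a}^n \in \hat{\val}(\word[1])}$ with $\min\set{n \mid \mathtt{a}^n \in \hat{\val}(\word[2])}$ under one-letter-alphabet valuations whose letter values are nonempty, so that the minimum of the composition is the sum of the per-letter minima (an affine function of the counts), with $\com{\id}$ handled last by a uniform valuation. The only difference is in execution: the paper uses up-set valuations $\set{\mathtt{a}^n \mid n \ge m}$ with one large threshold $m$ dominating the other contributions (plus a sign-flipping WLOG step for complemented variables), whereas you use singletons and initial segments at two scales $c \in \set{1,2}$ and subtract to isolate each count---the same scale-parameter idea, realized differently.
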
  
\begin{proof}
    ($\Leftarrow$):
    Because the commutative law $xy = yx$ holds for all \kl{valuations} in $\LANG_{1}$.
    ($\Rightarrow$):
    It suffices to show that $\forall x \in \tilde{\vsig}_{\com{\id}}, \len{\word[1]}_{x} = \len{\word[2]}_{x}$.
    Assume that $\len{\word[1]}_{x} \neq \len{\word[2]}_{x}$ for some $x \in \tilde{\vsig}$.
    By flipping the sign of $x$, WLOG, we can assume that $x \in \vsig$.
    By swapping $\word[1]$ and $\word[2]$, WLOG, we can assume that $\len{\word[1]}_{x} < \len{\word[2]}_{x}$.
    Let $m \defeq 1 + \len{\word[1]}_{(\vsig \setminus \set{x}) \cup \set{\com{\id}}}$.
    Let $\val \in \LANG_{\set{\mathtt{a}}}$ be the \kl{valuation} defined by:
    \[\val(y) \ \defeq\  \begin{cases}
        \set{\mathtt{a}^{n} \mid n \ge m} & (y = x)   \\
        \set{\mathtt{a}^{n} \mid n \ge 1} & (y \neq x).
    \end{cases}\]
    Then,
    \begin{align*}
        \min\set{n \in \nat \mid \mathtt{a}^{n} \in \hat{\val}(\word[1])} &= m \len{\word[1]}_{x} + \len{\word[1]}_{(\vsig \setminus \set{x}) \cup \set{\com{\id}}} \\
        & < m (\len{\word[1]}_{x} + 1) \tag{By $\len{\word[1]}_{(\vsig \setminus \set{x}) \cup \set{\com{\id}}} < m$}\\
        &\le m \len{\word[2]}_{x} + \len{\word[2]}_{(\vsig \setminus \set{x}) \cup \set{\com{\id}}} \tag{By $\len{\word[1]}_{x} + 1 \le \len{\word[2]}_{x}$}\\
        &= \min\set{n \in \nat \mid \mathtt{a}^{n} \in \hat{\val}(\word[2])}.
    \end{align*}
    Thus $\hat{\val}(\word[1]) \setminus \hat{\val}(\word[2]) \neq \emptyset$, which contradicts $\LANG_{1} \models \word[1] = \word[2]$.
    Hence, $\forall x \in \tilde{\vsig}, \len{\word[1]}_{x} = \len{\word[2]}_{x}$.
    Next, assume that $\len{\word[1]}_{\com{\id}} \neq \len{\word[2]}_{\com{\id}}$.
    WLOG, we can assume that $\len{\word[1]}_{\com{\id}} < \len{\word[2]}_{\com{\id}}$.
    Let $\val \in \LANG_{\set{\mathtt{a}}}$ be the \kl{valuation} defined by: $\val(y) \defeq \set{\mathtt{a}^{n} \mid n \ge 1}$.
    Then we have $\hat{\val}(\word[1]) \setminus \hat{\val}(\word[2]) \neq \emptyset$ by the following:
    \begin{align*}
        \min\set{n \in \nat \mid \mathtt{a}^{n} \in \hat{\val}(\word[1])} &= \len{\word[1]}_{\com{\id}} + \len{\word[1]}_{\vsig}
        < \len{\word[2]}_{\com{\id}} + \len{\word[2]}_{\vsig} %
        = \min\set{n \in \nat \mid \mathtt{a}^{n} \in \hat{\val}(\word[2])}.
    \end{align*}
    Thus $\hat{\val}(\word[1]) \setminus \hat{\val}(\word[2]) \neq \emptyset$, which contradicts $\LANG_{1} \models \word[1] = \word[2]$.
    Hence, we have $\forall x \in \tilde{\vsig}_{\com{\id}}, \len{\word[1]}_{x} = \len{\word[2]}_{x}$.
    Therefore, $\mathcal{E}_{1} \vdash \word[1] = \word[2]$.
\end{proof}

\subsection{On $\LANG_{\alpha}$ where $\alpha \ge 2$}
What about for $\LANG_{2}$?
In the conference version, we have shown that the \kl{equational theory} coincides with the word equivalence \cite[Thm.\ 36]{nakamuraWordstoLettersValuationsLanguage2023} (\Cref{corollary: completeness word LANG2}) if the number of variables is at most one and the complement of the empty constant ($\com{\id}$) does not occur.
However, when $\com{\id}$ may occur, there are some non-trivial \kl{equations}, as follows.
\begin{ex}\label{example: LANG2}
    $\LANG \models \com{\id} z \com{z} \com{\id} = \com{\id} \com{z} z \com{\id}$ holds.
    Let $\val \in \LANG$.
    Note that $\eps \in \hat{\val}(z)$ or $\eps \in \hat{\val}(\com{z})$.
    W.r.t.\ ``$\val \models $'', if $\eps \in \hat{\val}(z)$, then by $\com{\id} z \le \com{\id}$, $z \com{\id} \le \com{\id}$ and $\id \le z$, we have:
    \begin{align*}
        \com{\id} z \com{z} \com{\id} \le \com{\id} \com{z} \com{\id} &\le \com{\id} \com{z} z \com{\id} \le \com{\id} \com{z} \com{\id} \le \com{\id} z \com{z} \com{\id}.
    \end{align*}
    We can show the case when $\eps \in \hat{\val}(\com{z})$ in the same way.
\end{ex}
Nevertheless, we have the following completeness theorem.
\begin{thm}\label{theorem: completeness word LANG2}
    Let $\alpha \ge 2$.    Let $\mathcal{E}_2$ be the set of the following \kl{equations}:
    \[\com{\id} z^{c_0} \com{z}^{d_0} \dots z^{c_{k-1}} \com{z}^{d_{k-1}} \com{\id} = \com{\id} \com{z}^{d_0} z^{c_0} \dots \com{z}^{d_{k-1}} z^{c_{k-1}} \com{\id}\]
    where $z \in \tilde{\vsig}$ and $k, c_0, d_0, \dots, c_{k-1}, d_{k-1}> 0$.
    For all \kl{words} $\word[1], \word[2] \in \tilde{\vsig}_{\com{\id}}^*$, we have:
   \[\LANG_{\alpha} \models \word[1] = \word[2] \quad\Leftrightarrow\quad \mathcal{E}_2 \vdash \word[1] = \word[2].\]
\end{thm}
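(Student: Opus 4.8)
The plan is to prove both inclusions and to derive the stated biconditional for every $\alpha \ge 2$ from the single case $\alpha = 2$. For soundness ($\Leftarrow$), since $=_{\mathcal{E}_2}$ is a congruence and every valuation extends to a homomorphism, it suffices to check that each instance of $\mathcal{E}_2$ holds in all of $\LANG$ (hence in $\LANG_{\alpha} \subseteq \LANG$). Fixing $\val \in \LANG_{X}$, I would use that $\hat{\val}(z) \cup \hat{\val}(\com{z}) = X^{*}$, so exactly one of $\eps \in \hat{\val}(z)$, $\eps \in \hat{\val}(\com{z})$ holds; by the symmetry $z \leftrightarrow \com{z}$ I may assume $\eps \in \hat{\val}(z)$. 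Then $\com{\id} z^{c} = \com{\id} = z^{c} \com{\id}$ (a nonempty prefix/suffix, together with $\eps \in \hat{\val}(z^{c})$) and $\id \le z$, and extending the chain of inclusions of \Cref{example: LANG2}---absorbing the boundary $z$-blocks and reconciling the interior blocks by the same absorptions---yields both $\le$ and $\ge$. Combined with $\LANG_{2} \subseteq \LANG_{\alpha} \subseteq \LANG$, this gives $\mathcal{E}_2 \vdash \word[1] = \word[2] \Rightarrow \LANG_{\alpha} \models \word[1] = \word[2]$, and reduces the remaining direction to completeness for $\LANG_{2}$: from $\LANG_{\alpha} \models \word[1] = \word[2]$ I get $\LANG_{2} \models \word[1] = \word[2]$, and I must deduce $\mathcal{E}_2 \vdash \word[1] = \word[2]$.

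Next I would pin down the equivalence classes of $=_{\mathcal{E}_2}$ by a normal form. Write $\word[1] = \word[3]_0 \com{\id} \word[3]_1 \com{\id} \cdots \com{\id} \word[3]_p$ with each $\word[3]_i \in \tilde{\vsig}^{*}$ free of $\com{\id}$. Since every side of an axiom has $\com{\id}$ at both ends and uses a single variable, an axiom can rewrite only an internal factor $\com{\id}\word[3]_i\com{\id}$ (with $1 \le i \le p-1$), and only when $\word[3]_i$ is a single-variable word that strictly alternates with opposite-signed first and last blocks, i.e.\ is of the form $z^{c_0}\com{z}^{d_0}\cdots z^{c_{k-1}}\com{z}^{d_{k-1}}$ or its block-wise swap; such a segment has exactly two forms, while every other segment (boundary, multi-variable, a pure power, or same-signed ends such as $z\com{z}z$) is rigid. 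Hence $\word[1] =_{\mathcal{E}_2} \word[2]$ iff they have the same number $p$ of $\com{\id}$'s, the same boundary segments, and corresponding internal segments that are equal or block-swap partners; fixing a canonical orientation for each swappable segment yields $\mathrm{NF}(\cdot)$ with $\word[1] =_{\mathcal{E}_2} \word[2] \Leftrightarrow \mathrm{NF}(\word[1]) = \mathrm{NF}(\word[2])$.

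For completeness I argue the contrapositive: assuming $\mathrm{NF}(\word[1]) \ne \mathrm{NF}(\word[2])$, I construct $\val \in \LANG_{2}$ with $\hat{\val}(\word[1]) \ne \hat{\val}(\word[2])$. If some count $\len{\word[1]}_{x}$ for $x \in \tilde{\vsig}_{\com{\id}}$ differs (in particular the $\com{\id}$-count), then $\LANG_{1} \not\models \word[1] = \word[2]$ by \Cref{theorem: completeness word LANG1}, and the distinguishing $\LANG_{1}$-valuation already lies in $\LANG_{2}$. So I may assume all counts agree but the arrangement differs beyond the allowed swaps. The core construction maps the finitely many relevant variables to singleton code words over $\set{\mathtt a, \mathtt b}$, so that each variable's positions are rigidly marked, while the complemented blocks and $\com{\id}$ act as wildcards absorbing the surrounding slack; a probe word aimed at the first position where the normal forms disagree (using $\mathtt b$ as a synchronizing marker, as in the computations distinguishing $\com{\id} xy \com{\id}$ from $\com{\id} yx \com{\id}$ and $\com{\id} z \com{z} z \com{\id}$ from its neighbours) then lies in exactly one of $\hat{\val}(\word[1])$, $\hat{\val}(\word[2])$.

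I expect this last step to be the main obstacle. The difficulty is to design bounded-alphabet valuations that are simultaneously fine enough to separate every pair of distinct normal forms---detecting the order of two different variables, the exact exponent pattern and orientation of a rigid single-variable segment, and discrepancies in the rigid boundary segments---and coarse enough that the wildcard behaviour of the complement blocks and of $\com{\id}$ does not accidentally identify the two words. Controlling the interaction between the $\eps$-membership of each variable (which governs the absorptions exploited in soundness) and the marker-alignment in the probe word is where the careful combinatorial bookkeeping, and the precise role of the single-variable balanced-alternating characterization, will be required.
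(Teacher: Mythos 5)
Your soundness argument and the reduction to $\alpha = 2$ match the paper's proof, and your normal-form description of the congruence $=_{\mathcal{E}_2}$ (the $\com{\id}$-skeleton and the boundary segments are rigid; only maximal single-variable alternating segments enclosed by two $\com{\id}$'s can be block-swapped) is a correct reformulation of what has to be proved. But the completeness direction---showing that $\LANG_{2} \models \word[1] = \word[2]$ forces equality of these normal forms---is the entire substance of the theorem, and your proposal leaves it as an acknowledged sketch. Worse, the one concrete device you offer, mapping each variable to a \emph{singleton} code word so that complemented letters act as wildcards, provably cannot work even in the simplest case your normal forms must distinguish. Take $\word[1] = \com{x}\,\com{y}$ and $\word[2] = \com{y}\,\com{x}$ with $x \neq y$: the normal forms differ, and indeed $\LANG_{2} \not\models \word[1] = \word[2]$ by \Cref{lemma: negative}. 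Yet if $\val \in \LANG_{A}$ has $\val(x) = \set{w_x}$ and $\val(y) = \set{w_y}$ singletons of non-empty words, then $\eps \in \hat{\val}(\com{x}) \cap \hat{\val}(\com{y})$, so
\[
\hat{\val}(\com{x}\,\com{y}) \;=\; \hat{\val}(\com{x}) \compo \hat{\val}(\com{y}) \;\supseteq\; (A^{*} \setminus \set{w_x}) \cup (A^{*} \setminus \set{w_y}),
\]
which equals $A^{*}$ whenever $w_x \neq w_y$ (and when $w_x = w_y$ the two compositions coincide by symmetry); either way $\hat{\val}(\com{x}\,\com{y}) = \hat{\val}(\com{y}\,\com{x})$, so no valuation of this shape separates the pair.

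The missing idea is that the paper's separating valuations use the \emph{opposite} design. To detect the order of negative letters (\Cref{lemma: negative}) it takes valuations with $\eps \in \val(z)$ for \emph{every} variable $z$ (namely $\val(z) = \set{\eps, \mathtt{a}}$ or $\set{\eps, \mathtt{b}}$): then positive letters may vanish while every negative letter is forced to consume at least one letter, so a probe word whose length equals the number of negative occurrences pins each negative letter to exactly one position. To detect adjacency of positive letters and their extremal positions (\Cref{lemma: positive}) it maps every variable to the infinite language of words beginning and ending with $\mathtt{a}$. On top of these two separation lemmas one still needs the combinatorial half of the argument: an induction on prefixes (\Cref{claim: 1'}) decomposing $\word[1]$ and $\word[2]$ into matched single-variable blocks with equal letter counts, and a block-by-block analysis (\Cref{lemma: positive'}) showing that any mismatched block must lie strictly between two $\com{\id}$'s and have exactly the swap form of an axiom in $\mathcal{E}_2$. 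None of this bookkeeping appears in your proposal, and it is precisely where the theorem is actually proved.
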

We show \Cref{theorem: completeness word LANG2} in the following.

\subsubsection{Proof of the soundness (the direction ($\Leftarrow$) in \Cref{theorem: completeness word LANG2})}
For $z \in \tilde{\vsig}, k, c_0, d_0, \dots, c_{k-1}, d_{k-1}> 0$, we prove the following:
\[\LANG \models \com{\id} z^{c_0} \com{z}^{d_0} \dots z^{c_{k-1}} \com{z}^{d_{k-1}} \com{\id} = \com{\id} \com{z}^{d_0} z^{c_0} \dots \com{z}^{d_{k-1}} z^{c_{k-1}} \com{\id}.\]
Let $\val \in \LANG$.
Note that either $\eps \in \hat{\val}(z)$ or $\eps \in \hat{\val}(\com{z})$ holds.
Suppose $\eps \in \hat{\val}(z)$.
Note that, w.r.t.\ ``$\val \models $'', for $c, c', d > 0$, we have:
\begin{align*}
    (\id \union \com{z}) z^{c} \com{z}^{d} & \le (z \union \com{z}) \com{z}^{d} \tag{By $\term[3] \le \top = z \union \com{z}$}\\
    &\le (z^{c'} \union z^{c'} \com{z}) \com{z}^{d} = z^{c'} \com{z}^{d} (\id \union \com{z}). \tag{By $\id \le z$}
\end{align*}
Thus, we have:
\begin{align*}
    &\com{\id} z^{c_0} \com{z}^{d_0} z^{c_1} \com{z}^{d_1} \dots z^{c_{k-1}} \com{z}^{d_{k-1}} \com{\id}\\
    &\le \com{\id} \com{z}^{d_0} (\id \union \com{z}) z^{c_1} \com{z}^{d_1} \dots z^{c_{k-1}} \com{z}^{d_{k-1}} \com{\id} \tag{By $\com{\id} \term[3] \le \com{\id}$ and $\id \le \id \union \com{z}$}\\ 
    &\le \com{\id} \com{z}^{d_0} z^{c_0} \com{z}^{d_1} \dots \com{z}^{d_{k-2}} z^{c_{k-2}} \com{z}^{d_{k-1}} (\id \union \com{z}) \com{\id} \tag{By $(\id \union \com{z}) z^{c} \com{z}^{d} \le z^{c'} \com{z}^{d} (\id \union \com{z})$, iteratively}\\ 
    &\le \com{\id} \com{z}^{d_0} z^{c_0} \com{z}^{d_1} \dots \com{z}^{d_{k-2}} z^{c_{k-2}} \com{z}^{d_{k-1}} z^{c_{k-1}} \com{\id}. \tag{By $\term[3] \com{\id} \le \com{\id}$ and $\id \le z$}
\end{align*}
We can show the converse direction in the same way.
Therefore, we have obtained $\val \models \com{\id} z^{c_0} \com{z}^{d_0} \dots z^{c_{k-1}} \com{z}^{d_{k-1}} \com{\id} = \com{\id} \com{z}^{d_0} z^{c_0} \dots \com{z}^{d_{k-1}} z^{c_{k-1}} \com{\id}$.
We can show the case when $\id \in \hat{\val}(\com{x})$ in the same way.
Hence, this completes the proof.

\subsubsection{Proof of the completeness (the direction ($\Rightarrow$) in \Cref{theorem: completeness word LANG2})}
It suffices to prove that when $\alpha = 2$.
Note that by $\LANG_1 \models \word[1] = \word[2]$, we have $\forall z \in \tilde{\vsig}_{\com{\id}}, \len{\word[1]}_{z} = \len{\word[2]}_{z}$ (\Cref{theorem: completeness word LANG1}).

For each $z \in \tilde{\vsig}_{\com{\id}}$,
we say that $z$ is \intro*\kl{positive} if $z \in \vsig$
and $z$ is \intro*\kl{negative} if $z \in \tilde{\vsig}_{\com{\id}} \setminus \vsig$.
We prepare the following two lemmas:
\begin{lem}\label{lemma: negative}
    If $\LANG_2 \models \word[1] = \word[2]$, then the $i$-th \kl{negative} \kl{letters} occurring in $\word[1]$ and $\word[2]$ are the same \kl{letter}.
\end{lem}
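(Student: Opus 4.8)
The plan is to prove the contrapositive. Note first that the statement of the lemma is equivalent to $\nu(\word[1]) = \nu(\word[2])$, where $\nu(\word)$ denotes the left-to-right subsequence of negative letters of $\word$: indeed, $\LANG_2 \models \word[1] = \word[2]$ implies $\LANG_1 \models \word[1] = \word[2]$, so by \Cref{theorem: completeness word LANG1} we have $\len{\word[1]}_{z} = \len{\word[2]}_{z}$ for every $z \in \tilde{\vsig}_{\com{\id}}$, whence $\nu(\word[1])$ and $\nu(\word[2])$ have the same length and the same per-letter counts. So I assume $\nu(\word[1]) \neq \nu(\word[2])$ and construct a valuation $\val \in \LANG_2$ with $\hat{\val}(\word[1]) \neq \hat{\val}(\word[2])$.

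The valuation encodes each complemented variable by a fixed code word over $\set{\mathtt{a}, \mathtt{b}}$. Fix $\ell$ large enough and assign to each variable $x$ a distinct word $s_x \in \mathtt{b} \cdot \set{\mathtt{a}, \mathtt{b}}^{\ell - 1}$ (so every code has length $\ell$ and begins with $\mathtt{b}$), and set $\val(x) \defeq \set{\mathtt{a}, \mathtt{b}}^{*} \setminus \set{s_x}$. Then $\hat{\val}(\com{x}) = \set{s_x}$ is a singleton, while $\hat{\val}(x) = \set{\mathtt{a}, \mathtt{b}}^{*} \setminus \set{s_x} \ni \eps$ and $\hat{\val}(\com{\id}) = \set{\mathtt{a}, \mathtt{b}}^{+}$. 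Reading $\word[1]$ from left to right, let $\word[3]$ be obtained by replacing each positive letter by $\eps$, each $\com{x}$ by $s_x$, and each $\com{\id}$ by $\mathtt{a}$; thus $\word[3]$ is the image of $\nu(\word[1])$ under the coding $h$ sending $\com{x} \mapsto s_x$ and $\com{\id} \mapsto \mathtt{a}$. Since $\eps \in \hat{\val}(x)$, $s_x \in \hat{\val}(\com{x})$, and $\mathtt{a} \in \hat{\val}(\com{\id})$, we have $\word[3] \in \hat{\val}(\word[1])$.

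It remains to show $\word[3] \notin \hat{\val}(\word[2])$, which is where the only real difficulty lies: the positive letters and $\com{\id}$ are \emph{flexible} (their images are a co-singleton and all of $\set{\mathtt{a},\mathtt{b}}^{+}$, respectively), so a priori they might absorb code material and permit a spurious factorisation of $\word[3]$ along $\word[2]$. I will rule this out by two counting arguments. Suppose $\word[3] \in \hat{\val}(\word[2])$, witnessed by a factorisation along the letters of $\word[2]$. Comparing total length: each $\com{x}$-factor contributes exactly $\ell$, each $\com{\id}$-factor at least $1$, and each positive factor at least $0$; since the per-letter counts of $\word[2]$ agree with those of $\word[1]$, this forced minimum already equals $\len{\word[3]}$, so every positive factor is $\eps$ and every $\com{\id}$-factor is a single letter. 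Comparing the number of occurrences of $\mathtt{b}$: the $\com{x}$-factors contribute the same number of $\mathtt{b}$'s in $\word[2]$ as in $\word[1]$ (equal per-variable counts), and $\word[3]$ has exactly that many $\mathtt{b}$'s, so every $\com{\id}$-factor equals $\mathtt{a}$ rather than $\mathtt{b}$. Hence the factorisation realises $\word[3]$ as $h(\nu(\word[2]))$ with the \emph{same} coding $h$. Finally $h$ is a prefix code---$\mathtt{a}$ is the $\com{\id}$-code and no $s_x$ begins with $\mathtt{a}$, while the $s_x$ are distinct of common length---hence uniquely decodable, so $h(\nu(\word[1])) = \word[3] = h(\nu(\word[2]))$ forces $\nu(\word[1]) = \nu(\word[2])$, a contradiction. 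Thus $\word[3] \in \hat{\val}(\word[1]) \setminus \hat{\val}(\word[2])$. The main obstacle is exactly the taming of the flexible letters, and the two constraints (length, then number of $\mathtt{b}$'s) are what pin them down so that the prefix-code argument applies.
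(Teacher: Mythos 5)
Your proof is correct, but it uses a genuinely different construction from the paper's. The paper also argues by contraposition with a two-letter valuation, but it targets a \emph{single} position $i$ where the negative letters $\com{x}$ (in $\word[1]$) and $\com{y}$ (in $\word[2]$) differ: after arranging WLOG that $y \neq \id$, it sets $\val(y) = \set{\eps, \mathtt{a}}$ and $\val(z) = \set{\eps, \mathtt{b}}$ for $z \neq y$, so that every negative letter's image excludes $\eps$ but contains some letter, while $\mathtt{a} \in \hat{\val}(\com{z})$ exactly for $z \neq y$. A witness of length $c = \len{\word[1]}_{\tilde{\vsig}_{\com{\id}} \setminus \vsig}$ with $\mathtt{a}$ in position $i$ then lies in $\hat{\val}(\word[1])$, and a single length count forces any factorisation along $\word[2]$ to send each negative letter to one letter and each positive letter to $\eps$, yielding the contradiction $\mathtt{a} \in \hat{\val}(\com{y})$ at position $i$. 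Your proof instead builds one \emph{global} valuation, independent of where the words differ: the co-singleton assignment $\val(x) = \set{\mathtt{a},\mathtt{b}}^* \setminus \set{s_x}$ makes each $\hat{\val}(\com{x})$ a singleton code word, and unique decodability of the prefix code recovers the entire negative subsequence at once. This costs you a second counting argument (the $\mathtt{b}$-count, needed because your codes have length $\ell > 1$, whereas the paper's ``codes'' are single letters), but it buys a uniform treatment of $\com{\id}$ (no WLOG $y \neq \id$ step) and proves the stronger statement $\nu(\word[1]) = \nu(\word[2])$ with a single valuation; the co-singleton trick is also a nice reusable device. One small slip: you cannot assign \emph{distinct} codes of a fixed length $\ell$ to all of the countably many variables in $\vsig$; you should say that $\ell$ is chosen large enough for the finitely many variables occurring in $\word[1]$ and $\word[2]$, the remaining variables being irrelevant to $\hat{\val}(\word[1])$ and $\hat{\val}(\word[2])$.
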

\begin{proof}
    We prove the contraposition.
    Let $\com{x}, \com{y} \in \tilde{\vsig}_{\com{\id}} \setminus \vsig$ be the $i$-th ($1$-indexed) \kl{negative} \kl{letters} occurring in $\word[1]$ and $\word[2]$ such that $x \neq y$.
    WLOG, we can assume that $y \neq \id$.
    Let $c \defeq \len{\word[1]}_{\tilde{\vsig}_{\com{\id}} \setminus \vsig}$ (note that $c = \len{\word[2]}_{\tilde{\vsig}_{\com{\id}} \setminus \vsig}$).
    Let $A = \set{\mathtt{a}, \mathtt{b}}$ and let $\val \in \LANG_{A}$ be the \kl{valuation} defined by:
    \begin{align*}
        \val(z) \ \defeq\  \begin{cases}
            \set{\eps, \mathtt{a}} & \mbox{if $z = y$}\\
            \set{\eps, \mathtt{b}} & \mbox{otherwise}.
        \end{cases}
    \end{align*}
    Then there are $\word[1]' \in A^{i-1}$ and $\word[1]'' \in A^{c - i}$ s.t.\  $\word[1]'\mathtt{a}\word[1]'' \in \hat{\val}(\word[1])$,
    by $\eps \in \hat{\val}(z)$ for $z \in \vsig$, $A \cap \hat{\val}(\com{z}) \neq \emptyset$ for $z \in \vsig \cup \set{\id}$ and $\mathtt{a} \in \hat{\val}(\com{x})$.
    Next, assume that $\word[1]'\mathtt{a}\word[1]'' \in \hat{\val}(\word[2])$.
    Because the number of \kl{negative} \kl{letters} occurring in $\word[2]$ ($= c$) is equivalent to the \kl{length} of $\word[1]'\mathtt{a}\word[1]''$, and $\eps \not\in \hat{\val}(z)$ for $z \in \tilde{\vsig}_{\com{\id}} \setminus \vsig$, each \kl{negative} \kl{letter} should map to a \kl{letter}.
    However, because the $i$-th \kl{negative} \kl{letter} occurring in $\word[2]$ is $\com{y}$, we have $\mathtt{a} \not\in \hat{\val}(\com{y})$, thus reaching a contradiction.
    Thus, $\word[1]'\mathtt{a}\word[1]'' \not\in \hat{\val}(\word[2])$, and hence $\LANG_{2} \not \models \word[1] = \word[2]$.
\end{proof}

\begin{lem}\label{lemma: positive}
    If $\LANG_2 \models \word[1] = \word[2]$, then the following hold:
    \begin{itemize}
        \item The $i$-th and $(i+1)$-th \kl{positive} \kl{letters} occurring in $\word[1]$ are adjacent if and only if those in $\word[2]$ are adjacent.
        \item The first \kl{positive} \kl{letter} occurring in $\word[1]$ is the left-most if and only if that in $\word[2]$ is the left-most.
        \item The last \kl{positive} \kl{letter} occurring in $\word[1]$ is the right-most if and only if that in $\word[2]$ is the right-most.
    \end{itemize}
\end{lem}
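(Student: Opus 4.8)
The plan is to prove both directions of each of the three items by contraposition: whenever one of the stated position properties fails, I would construct a \kl{valuation} $\val \in \LANG_{\set{\mathtt{a}, \mathtt{b}}} \subseteq \LANG_2$ that separates $\word[1]$ and $\word[2]$, contradicting the hypothesis $\LANG_2 \models \word[1] = \word[2]$. Throughout I would freely use the two facts already available under that hypothesis: the \kl{letter} multiplicities agree, i.e. $\len{\word[1]}_z = \len{\word[2]}_z$ for every $z \in \tilde{\vsig}_{\com{\id}}$ (\Cref{theorem: completeness word LANG1}), and the subsequences of \kl{negative} \kl{letters} of $\word[1]$ and $\word[2]$ coincide (\Cref{lemma: negative}). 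Hence the only possible discrepancy lies in how the \kl{positive} \kl{letters} interleave with a common \kl{negative} skeleton, which is exactly what the three items pin down.

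The \kl{valuations} I would use all rest on a per-variable dichotomy over the two-\kl{letter} alphabet $\set{\mathtt{a}, \mathtt{b}}$. For a variable $z$, putting $\eps \in \val(z)$ makes every \kl{positive} occurrence of $z$ erasable but forces $\hat{\val}(\com{z}) = \set{\mathtt{a}, \mathtt{b}}^* \setminus \val(z)$ to be non-empty; e.g.\ $\val(z) = \set{\eps, \mathtt{a}}$ additionally forbids $\com{z}$ from producing the single \kl{letter} $\mathtt{a}$. Conversely, taking $\val(z) = \set{\mathtt{a}}$ forces every \kl{positive} $z$ to contribute exactly the single \kl{letter} $\mathtt{a}$ while letting $\com{z}$ erase. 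The design principle is to mark \kl{positive} contributions by $\mathtt{a}$ and \kl{negative} ones by $\mathtt{b}$, and to read off from $\word[1]$ a ``signature'' target \kl{word} $s$ in which each \kl{positive} \kl{letter} produces $\mathtt{a}$ and each \kl{negative} \kl{letter} produces $\mathtt{b}$; by construction $s \in \hat{\val}(\word[1])$. For the two boundary items, choosing $\val(z) = \set{\eps, \mathtt{a}}$ for the relevant variables keeps all \kl{negative} \kl{letters} non-erasable, so that if, say, $\word[1]$ ends in a \kl{positive} \kl{letter} while $\word[2]$ ends in a \kl{negative} one, the terminal $\mathtt{a}$ of $s$ cannot be produced by the forced \kl{negative} tail of $\word[2]$; the first item is handled symmetrically at the left end. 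For the interior adjacency item, a maximal run of mutually adjacent \kl{positive} \kl{letters} shows up in $s$ as a maximal block of $\mathtt{a}$'s delimited by $\mathtt{b}$'s, and a \kl{negative} \kl{letter} inserted between the $i$-th and $(i+1)$-th \kl{positive} \kl{letters} in $\word[2]$ would be forced to split this block, so $\word[2]$ cannot reproduce $s$.

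The hard part will be rigidity of factorizations. Unlike in the erasing case, a \kl{negative} \kl{letter} $\com{z}$ may produce \kl{words} of \kl{length} $\ge 2$ and so ``absorb'' neighbouring symbols, and $\com{\id}$ is never empty and can even produce the single \kl{letter} $\mathtt{a}$; hence I cannot simply compare signatures symbol by symbol. The \kl{valuation} and the target \kl{word} must instead be chosen so that the feature being measured---a terminal or initial symbol, or the \kl{length} of a maximal $\mathtt{a}$-block---is genuinely forced and cannot be recovered by a shifted factorization of $s$ along $\word[2]$.

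I expect this to require a case split according to whether the critical \kl{negative} \kl{letter} is some $\com{z}$ with $z \in \vsig$ or is $\com{\id}$, together with WLOG reductions by exchanging the two \kl{letters} $\mathtt{a} \leftrightarrow \mathtt{b}$ and by swapping $\word[1]$ and $\word[2]$, exactly as in the proof of \Cref{lemma: negative}. A variable occurring both positively and negatively in the same \kl{word} also needs care, since a single choice of $\val(z)$ then simultaneously constrains its \kl{positive} and its \kl{negative} occurrences; I would address this by isolating the variable and the gap under scrutiny and assigning the remaining variables so that their contributions to $s$ are inert.
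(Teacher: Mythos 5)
Your overall strategy---contraposition, valuations over a two-letter alphabet, marking positive letters by $\mathtt{a}$ and negative ones by $\mathtt{b}$, and a signature target word---is the same as the paper's, but there is a genuine gap: the rigidity of factorizations, which you yourself flag as ``the hard part'', is precisely the mathematical content of the proof and is never supplied, and the concrete choices you do make fail on simple instances. For the boundary items, with $\val(z)=\set{\eps,\mathtt{a}}$ the complement $\hat{\val}(\com{z})=\set{\mathtt{a},\mathtt{b}}^*\setminus\set{\eps,\mathtt{a}}$ contains $\mathtt{b}\mathtt{a}$ and $\mathtt{a}\mathtt{a}$, so your claim that ``the terminal $\mathtt{a}$ of $s$ cannot be produced by the forced negative tail of $\word[2]$'' is false: for $\word[1]=\com{z}z$ and $\word[2]=z\com{z}$ (an instance of the third item) your signature word $s=\mathtt{b}\mathtt{a}$ lies in $\hat{\val}(\word[2])$ via $z\mapsto\eps$, $\com{z}\mapsto\mathtt{b}\mathtt{a}$, so no separation is obtained. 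For the adjacency item your direction is backwards: you build $s$ from the word whose positives are adjacent and hope the intervening negative of the other word is ``forced to split the block''; but whenever that negative is $\com{z}$ with $z\in\vsig$ it can erase, since $\eps$ lies in exactly one of $\val(z)$ and $\hat{\val}(\com{z})$, so any valuation that keeps positive letters non-erasable (as rigidity requires) makes variable-negatives erasable. Concretely, for $\word[1]=\com{y}zz\com{z}\com{y}$ and $\word[2]=\com{y}z\com{z}z\com{y}$ the signature $s=\mathtt{b}\mathtt{a}\mathtt{a}\mathtt{b}\mathtt{b}$ of $\word[1]$ belongs to $\hat{\val}(\word[2])$ under both of your valuations (e.g.\ $\com{y}\mapsto\mathtt{b}$, $z\mapsto\mathtt{a}$, $\com{z}\mapsto\eps$, $z\mapsto\mathtt{a}$, $\com{y}\mapsto\mathtt{b}\mathtt{b}$), so again no separation.

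The paper resolves both problems at once with a single uniform valuation plus a counting argument that is missing from your proposal: it sets $\hat{\val}(z)=\ljump{(\mathtt{a}A^*)\cap(A^*\mathtt{a})}$, where $A = \set{\mathtt{a},\mathtt{b}}$, for \emph{every} variable $z$, and chooses the target word to contain exactly $c=\len{\word[1]}_{\vsig}$ occurrences of $\mathtt{a}$. Every positive letter must then produce at least one $\mathtt{a}$, hence exactly one, hence exactly the word $\mathtt{a}$; consequently every negative letter produces a word in $\mathtt{b}^*$ (in $\mathtt{b}^{+}$ for $\com{\id}$). This is the forcing you needed, and it also dissolves your worry about variables occurring both positively and negatively. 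Moreover, the WLOG must run opposite to yours: the target word is always built from the side that has negative material where the other side has none---e.g.\ with $\mathtt{b}^{+}$ inserted between the $i$-th and $(i{+}1)$-th $\mathtt{a}$'s when the positives are \emph{not} adjacent in $\word[1]$ but adjacent in $\word[2]$---so the contradiction is always about the rigid side having to produce extra $\mathtt{b}$'s out of nothing, never about preventing a negative letter from erasing or absorbing. (Your alternative valuation $\val(z)=\set{\mathtt{a}}$ would in fact support the same counting argument, but you neither carry it out nor orient the cases correctly.)
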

\begin{proof}
    We only show the first statement (the remaining two can be shown by using the same \kl{valuation}).
    We prove the contraposition.
    WLOG, we can assume that the $i$-th and $(i+1)$-th \kl{positive} \kl{letters} occurring in $\word[1]$ are not adjacent and those in $\word[2]$ are adjacent.
    Let $c \defeq \len{\word[1]}_{\vsig}$ (note that $c = \len{\word[2]}_{\vsig}$).
    Let $A \defeq \set{\mathtt{a}, \mathtt{b}}$ and let $\val \in \LANG_{A}$ be the \kl{valuation} defined by:
    \begin{align*}
        \hat{\val}(z) &\ \defeq\  \ljump{(\mathtt{a} A^*) \cap (A^* \mathtt{a})}\\
        &\  (=\  \set{c_0 \dots c_{n-1} \in \set{\mathtt{a}, \mathtt{b}}^* \mid n \ge 1, c_0 = \mathtt{a}, c_{n-1} = \mathtt{a}}).
    \end{align*}
    Then there is a \kl{word} $\word[3] \in \ljump{(\mathtt{b}^* \mathtt{a})^{i-1} \mathtt{b}^* \mathtt{a} \mathtt{b}^{+} \mathtt{a} \mathtt{b}^{*} (\mathtt{a} \mathtt{b}^{*})^{c-i-1}} \cap \hat{\val}(\word[1])$,
    by $\mathtt{a} \in \hat{\val}(z)$ for $z \in \vsig$ and $\mathtt{b} \in \hat{\val}(\com{z})$ for $z \in \vsig \cup \set{\id}$.
    Note that the $i$-th and $(i+1)$-th \kl{positive} \kl{letters} occurring in $\word[1]$ are not adjacent; thus we can map the (non-empty) \kl{word} (over \kl{negative} \kl{letters}) between the two \kl{positive} \kl{letters} to some \kl{word} of the form $\mathtt{b}^{+}$.
    Next, assume that $\word[3] \in \hat{\val}(\word[2])$.
    Because the number of \kl{positive} \kl{letters} occurring in $\word[2]$ ($= c$) is equivalent to the number of $\mathtt{a}$ occurring in $\word[3]$,
    each \kl{positive} \kl{letter} should map to $\mathtt{a}$.
    However, because the $i$-th and $(i+1)$-th \kl{positive} \kl{letters} are adjacent, we have $\ljump{(\mathtt{b}^* \mathtt{a})^{i-1} \mathtt{b}^* \mathtt{a} \mathtt{b}^{+} \mathtt{a} \mathtt{b}^{*} (\mathtt{a} \mathtt{b}^{*})^{c-i-1}} \cap \hat{\val}(\word[2]) = \emptyset$.
    Thus, $\word[3] \not\in \hat{\val}(\word[2])$, and hence $\LANG_{2} \not \models \word[1] = \word[2]$.
\end{proof}

Now, we show the completeness theorem, using \Cref{lemma: negative,lemma: positive} with flipping signs.
For a \kl{word} $\word[3]$, we write $\word[3]_{\restriction n}$ for the prefix of $\word[3]$ of length $n$.
First, by \Cref{lemma: negative}, we have the following.
\begin{claim}\label{claim: 1'}
    For each $n$, there are two pairs $\tuple{\word[1]', \word[2]'}$ and $\tuple{\word[1]'', \word[2]''}$ of \kl{words} of the same \kl{length} such that
    \begin{itemize}
        \item $\word[1]_{\restriction n} = \word[1]' \word[1]''$ and $\word[2]_{\restriction n} = \word[2]' \word[2]''$,
        \item $\forall z \in \tilde{\vsig}_{\com{\id}}, \len{\word[1]'}_{z} = \len{\word[2]'}_{z}$,
        \item $\exists z_0 \in \vsig, \word[1]'', \word[2]'' \in \set{z_0, \com{z}_0}^*$.
    \end{itemize}
\end{claim}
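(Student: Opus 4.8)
The plan is to prove the claim by induction on $n$, taking as the inductive invariant exactly the decomposition the claim asserts, and feeding two structural consequences of the hypothesis $\LANG_2 \models \word[1] = \word[2]$ into the step. The first is \Cref{lemma: negative} itself: the negative letters of $\word[1]$ and $\word[2]$, read in order, form the same sequence. The second is its positive counterpart, obtained by sign-flipping: let $\sigma$ be the involution on $\tilde{\vsig}_{\com{\id}}^*$ that swaps $z$ and $\com{z}$ for every $z \in \vsig$ and fixes $\com{\id}$. Replacing a valuation $\val$ by $\val'$ with $\val'(z) \defeq \hat{\val}(\com{z})$ is a bijection on $\LANG_2$ and satisfies $\hat{\val}(\sigma(\word[3])) = \hat{\val'}(\word[3])$, so $\LANG_2 \models \word[1] = \word[2]$ iff $\LANG_2 \models \sigma(\word[1]) = \sigma(\word[2])$; applying \Cref{lemma: negative} to the latter (whose negative letters are the $\sigma$-images of the positive letters of $\word[1], \word[2]$ together with the unchanged $\com{\id}$'s) shows that the positive letters of $\word[1]$ and $\word[2]$, read in order, also form the same sequence. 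Thus $\word[1]$ and $\word[2]$ are two interleavings of one common positive subsequence and one common negative subsequence, with global Parikh equality (as already noted via \Cref{theorem: completeness word LANG1}).

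With these in hand I would run the induction. At stage $n$ the invariant records that $\word[1]'$ and $\word[2]'$ each carry the first $r$ positive and first $t$ negative letters (the counts $r, t$ being common to both words, which is precisely what $\forall z,\ \len{\word[1]'}_z = \len{\word[2]'}_z$ forces once the positive and negative subsequences agree), while the tails $\word[1]'', \word[2]''$ carry the remaining letters, all drawn from a single pair $\{z_0, \com{z}_0\}$. The base case $n = 0$ is trivial with all four factors empty. For the step I append the $(n{+}1)$-th letters $a$ of $\word[1]$ and $b$ of $\word[2]$ and attempt, in order: (i) if $a, b \in \{z_0, \com{z}_0\}$ (with $z_0$ chosen freshly when the tails are empty), extend the tails; (ii) otherwise \emph{flush} the current tails into the matched parts and restart the tails with $a$ and $b$. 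Flushing is legal only when $\len{\word[1]''}_z = \len{\word[2]''}_z$ for all $z$, so the whole step reduces to one key lemma.

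The main obstacle, and the heart of the argument, is to show that whenever case (i) fails the two tails already have equal Parikh image, so that (ii) applies. The point is that an unbalanced tail can arise only while the scan of both words is strictly inside one maximal single-variable block, and that such a block is \emph{synchronized} across $\word[1]$ and $\word[2]$, spanning the same occurrences of $z_0$ and $\com{z}_0$ in both. When the block ends—i.e.\ when an appended letter leaves $\{z_0, \com{z}_0\}$, which is exactly when (i) fails—both tails have then consumed the same multiset of letters, hence have equal Parikh image and can be flushed. I expect to derive this synchronization from the equality of the positive and negative subsequences together with the adjacency and end-point information of \Cref{lemma: positive}, which fixes where single-variable blocks begin and end relative to the common subsequences; the delicate bookkeeping is that a $\com{\id}$, being a negative letter belonging to no pair $\{z_0, \com{z}_0\}$, always forces the tails empty and so never obstructs the invariant. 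As a sanity check I would trace $\com{\id} z \com{z} \com{\id} = \com{\id} \com{z} z \com{\id}$ of \Cref{example: LANG2}, where the tails are unbalanced after reading $z$ versus $\com{z}$ but rebalance to $z\com{z}$ versus $\com{z}z$ exactly before the final $\com{\id}$ forces the flush.
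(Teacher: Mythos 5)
Your induction has the same skeleton as the paper's proof of \Cref{claim: 1'} (the same invariant and the same extend-or-flush dichotomy in the step), but the facts you allow yourself to use inside the step are too weak to close it. You apply \Cref{lemma: negative} only to the pair $(\word[1], \word[2])$ itself and to its image under the single global involution $\sigma$, and you add \Cref{lemma: positive} and Parikh (letter-count) equality from \Cref{theorem: completeness word LANG1}. These four facts do not suffice. Take distinct $z, z' \in \vsig$ and consider $\word[1] = \com{\id}\, z\, \com{z'}\, \com{\id}$ and $\word[2] = \com{\id}\, \com{z'}\, z\, \com{\id}$. All four facts hold for this pair: the Parikh images agree; the sequence of negative letters is $\com{\id}, \com{z'}, \com{\id}$ in both words; the sequence of positive-or-$\com{\id}$ letters is $\com{\id}, z, \com{\id}$ in both words, so your $\sigma$-flipped application of \Cref{lemma: negative} detects nothing; and all three items of \Cref{lemma: positive} hold (each word has a single positive letter, which is neither left-most nor right-most). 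Yet the conclusion of \Cref{claim: 1'} fails at $n = 2$: for the prefixes $\com{\id} z$ and $\com{\id}\com{z'}$ there is no legal decomposition, because their Parikh images differ, a tail containing $\com{\id}$ is never allowed, and a tail pair $\tuple{z, \com{z'}}$ lies in no single $\set{z_0, \com{z}_0}$. Hence no argument using only your four facts can carry out the induction step; in your terms this is exactly the restart situation after a flush, and nothing you established forces the appended letters $a, b$ to satisfy $b \in \set{a, \com{a}}$ --- a requirement your sketch never addresses (your ``key lemma'' concerns only the legality of flushing, not of restarting).

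The missing device is the one the paper uses throughout: apply \Cref{lemma: negative} to pairs obtained by flipping the signs of \emph{individual} variables, chosen per case, rather than only the all-variables flip $\sigma$. If the next letters $x$ of $\word[1]$ and $y$ of $\word[2]$ satisfy $y \notin \set{x, \com{x}}$, flip the sign of $x$'s variable and of $y$'s variable separately so that both letters become negative; since the consumed prefixes have equal Parikh images, they still contain equally many negative letters after these flips, so $x$ and $y$ occupy the same index in the two negative subsequences, and \Cref{lemma: negative} forces them to coincide --- a contradiction. The same per-variable flip (of $z_0$ and of the would-be departing letter) is what shows, in the unbalanced case, that both next letters must stay inside $\set{z_0, \com{z}_0}$, i.e., your synchronization claim. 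Notably, the paper's proof of \Cref{claim: 1'} needs neither \Cref{lemma: positive} nor any block-synchronization analysis; \Cref{lemma: positive} only enters in the subsequent claim about the internal shape of the blocks $\tuple{\word[1]_i, \word[2]_i}$. Your sanity check on \Cref{example: LANG2} passes only because a single variable is involved, where the global flip and the per-variable flip coincide; with two or more variables they diverge, which is exactly what the counterexample above exploits.
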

\begin{proof}[Claim proof]
    By induction on $n$.
    Case $n = 0$: Trivial, by letting $\word[1]' = \word[2]' = \word[1]'' = \word[2]'' = \id$.
    Case $n > 0$:
    Let $\tuple{\word[1]', \word[2]'}$, $\tuple{\word[1]'', \word[2]''}$, and $z_0$ be the ones obtained by IH w.r.t.\ $n-1$.
    Let $x$ and $y$ be s.t.\ $\word[1]_{\restriction n} = \word[1]' \word[1]'' x$ and $\word[2]_{\restriction n} = \word[2]' \word[2]'' y$.
    We distinguish the following cases:
    \begin{itemize}
        \item Case $\len{\word[1]''}_{z_0} = \len{\word[2]''}_{z_0}$ and $\len{\word[1]''}_{\com{z}_0} = \len{\word[2]''}_{\com{z}_0}$:
        If $y \neq x$ and $y \neq \com{x}$,
        then by flipping the sign of $x$ and $y$, WLOG, we can assume that $x = \com{z}$ and $y = \com{z}'$ for some $z, z' \in \vsig \cup \set{\id}$ s.t.\ $z \neq z'$.
        However, this contradicts \Cref{lemma: negative}; note that $x$ and $y$ are the $i$-th \kl{negative} \kl{letter} occurring in $\word[1]$ and $\word[2]$ for some $i$, because $\len{\word[1]' \word[1]''}_{\tilde{\vsig}_{\com{\id}} \setminus \vsig} = \len{\word[2]' \word[2]''}_{\tilde{\vsig}_{\com{\id}} \setminus \vsig}$.
        Hence, $y = x$ or $y = \com{x}$ holds.
        Thus, the pair of $\tuple{\word[1]' \word[1]'', \word[2]' \word[2]''}$ and $\tuple{x, y}$ satisfy the condition.
        \item Otherwise:
        By $\len{\word[1]''} = \len{\word[2]''}$, we have either
        $(\len{\word[1]''}_{z_0} < \len{\word[2]''}_{z_0} \land \len{\word[1]''}_{\com{z}_0} > \len{\word[2]''}_{\com{z}_0})$ or
        $(\len{\word[1]''}_{z_0} > \len{\word[2]''}_{z_0} \land \len{\word[1]''}_{\com{z}_0} < \len{\word[2]''}_{\com{z}_0})$ holds.
        \begin{itemize}
            \item Case $y \not\in \set{z_0, \com{z}_0}$:
            By flipping the sign of $z_0$ and $x$, WLOG, we can assume that $\len{\word[1]''}_{\com{z}_0} < \len{\word[2]''}_{\com{z}_0}$
            and that $x = \com{z}$ for some $z \in (\vsig \cup \set{\id})$ s.t.\ $z \neq z_0$.
            However, this contradicts \Cref{lemma: negative}; note that $x$ and $\com{z}_0$ are the $i$-th \kl{negative} \kl{letter} occurring in $\word[1]$ and $\word[2]$ for some $i$, because
            $\len{\word[1]'}_{\tilde{\vsig}_{\com{\id}} \setminus \vsig} = \len{\word[2]'}_{\tilde{\vsig}_{\com{\id}} \setminus \vsig}$ and
            $\len{\word[1]''}_{\tilde{\vsig}_{\com{\id}} \setminus \vsig} = \len{\word[1]''}_{\com{z}_0} < \len{\word[2]''}_{\com{z}_0} = \len{\word[2]''}_{\tilde{\vsig}_{\com{\id}} \setminus \vsig}$.
            \item Case $x \not\in \set{z_0, \com{z}_0}$: Similarly to the above, we reach a contradiction.
            \item Otherwise:
            Since $x, y \in \set{z_0, \com{z}_0}$,
            the pair of $\tuple{\word[1]', \word[2]'}$ and $\tuple{ \word[1]'' x, \word[2]'' y}$ satisfies the condition.
        \end{itemize}
    \end{itemize}
    Hence, this completes the proof.
\end{proof}
As an immediate consequence of \Cref{claim: 1'}, there are $m \in \nat$, pairs $\tuple{\word[1]_i, \word[2]_i}$ of \kl{words} of the same non-zero length, and $z_i \in \tilde{\vsig}_{\com{\id}}\setminus \vsig$ (where $i \in \range{0, m-1}$) such that
\begin{itemize}
    \item $\word[1] = \word[1]_0 \dots \word[1]_{m-1}$ and $\word[2] = \word[2]_0 \dots \word[2]_{m-1}$,
    \item for each $i < m$, 
    \begin{itemize}
        \item if $z_i = \com{\id}$, then $\word[1]_i = \word[2]_i = \com{\id}^{n}$ for some $n \ge 1$,
        \item otherwise, $\word[1]_i, \word[2]_i \in \set{z_i, \com{z}_i}^{+}$, $\len{\word[1]_i}_{z_i} = \len{\word[2]_i}_{z_i}$, and $\len{\word[1]_i}_{\com{z}_i} = \len{\word[2]_i}_{\com{z}_i}$,
    \end{itemize}
    \item for each $i < m - 1$, we have $z_{i} \neq z_{i+1}$.
\end{itemize}

Moreover, by \Cref{lemma: positive}, each pair $\tuple{\word[1]_i, \word[2]_i}$ is of the following form.
\begin{claim}\label{lemma: positive'}
    For each $i$,
    $\word[1]_{i} = \word[2]_{i}$ holds or the following all hold:
    \begin{itemize}
        \item $z_i \neq \com{\id}$ and $z_{i-1} = z_{i+1} = \com{\id}$,
        \item $i \neq 0$ and $i \neq m-1$,
        \item there are $z \in \set{z_i, \com{z}_i}$, $k > 0$, and $c_{0}, d_{0}, \dots, c_{k-1}, d_{k-1} > 0$ s.t.\ 
        \begin{align*}
            \word[1]_{i} &= z^{c_0} \com{z}^{d_0} \dots z^{c_{k-1}} \com{z}^{d_{k-1}},\\
            \word[2]_{i} &= \com{z}^{d_0} z^{c_0} \dots \com{z}^{d_{k-1}} z^{c_{k-1}}.
        \end{align*}
    \end{itemize}
\end{claim}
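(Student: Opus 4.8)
The plan is to prove \Cref{lemma: positive'} one block at a time, exploiting the \emph{sign-flip symmetry}: replacing a valuation $\val \in \LANG_X$ by $x \mapsto X^* \setminus \val(x)$ turns $\hat{\val}(z)$ into $\hat{\val}(\com{z})$ for $z \in \vsig$ while fixing $\hat{\val}(\com{\id})$, so $\LANG_2 \models \word[1] = \word[2]$ is invariant under exchanging $z \leftrightarrow \com{z}$ throughout $\word[1], \word[2]$. In particular \Cref{lemma: negative,lemma: positive} may be read ``with flipped signs'' to obtain the matching statements for the variable-negative letters $\com{z}$, with $\com{\id}$ always acting as a separator. I would first dispatch the case $z_i = \com{\id}$, where the block structure already gives $\word[1]_i = \word[2]_i = \com{\id}^n$, so the first alternative holds.

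For a variable block, say $z_i = \com{z}$ with $z \in \vsig$, I would pin down the shape of $\langle \word[1]_i, \word[2]_i \rangle$ through run lengths. Since every block contributes the same number of each letter to $\word[1]$ and to $\word[2]$, the positive letters of block $i$ occupy the same interval of the global positive subsequence in both words; applying the adjacency clause of \Cref{lemma: positive} to consecutive positive letters that both lie in block $i$ then shows that the maximal $z$-runs of $\word[1]_i$ and of $\word[2]_i$ have equal lengths in the same order, and the sign-flipped clause gives the same for the maximal $\com{z}$-runs. A word over $\set{z, \com{z}}$ is determined by its ordered $z$-run lengths, its ordered $\com{z}$-run lengths, and its first letter; as the two run-length sequences (hence the numbers of runs of each kind) agree, either the first letters agree and $\word[1]_i = \word[2]_i$, or $\word[1]_i$ and $\word[2]_i$ begin with opposite letters. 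Writing out the alternating decomposition in the second case yields exactly the third bullet, with the common exponents $c_0, d_0, \dots, c_{k-1}, d_{k-1}$ and $z \in \set{z_i, \com{z}_i}$.

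It remains to treat the boundary of a genuinely swapped block, where $\word[1]_i$ and $\word[2]_i$ begin, and end, with letters of opposite polarity. The conditions $i \neq 0$ and $i \neq m-1$ are immediate from the leftmost/rightmost clauses of \Cref{lemma: positive}: were $i = 0$, the two words would disagree on whether their first letter is positive, contradicting ``the first positive letter is leftmost'', and the case $i = m-1$ follows by a reversal symmetry. For $z_{i-1} = z_{i+1} = \com{\id}$ I would argue by contraposition: assuming the left neighbour is a variable block, the adjacency clause of \Cref{lemma: positive} across the block boundary would force that neighbour to end (in $\word[1]$) with a negative letter and to be itself nontrivially swapped; iterating leftward produces a maximal stretch of mutually adjacent swapped variable blocks, which must abut a $\com{\id}$-block or the start of the word, the latter already being excluded.

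The step I expect to be the main obstacle is precisely the exclusion of a stretch of two or more adjacent swapped variable blocks flanked by $\com{\id}$-blocks. A configuration such as $\com{\id}\, z \com{z}\, z' \com{z'}\, \com{\id}$ against $\com{\id}\, \com{z} z\, \com{z'} z'\, \com{\id}$ (with $z \neq z'$) satisfies \Cref{lemma: negative,lemma: positive} and their sign-flips, so these adjacency statements \emph{alone} cannot refute it; ruling it out requires a separating valuation of a new kind, one that assigns \emph{different} languages to $z$ and $z'$ so as to witness that their complemented factors cannot be reordered simultaneously without an intervening $\com{\id}$ (equivalently, a direct proof that $\LANG_2 \not\models \com{\id}\,\word[1]_i \word[1]_{i+1}\,\com{\id} = \com{\id}\,\word[2]_i \word[2]_{i+1}\,\com{\id}$ once both blocks are nontrivially swapped over distinct variables). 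Constructing this valuation—rather than the run-length bookkeeping, which is routine—is where the real difficulty lies.
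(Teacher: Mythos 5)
Your run-length treatment of the block shape, your dispatch of the $z_i = \com{\id}$ case, and your argument that $i \neq 0$ and $i \neq m-1$ are all sound, and in substance they match the paper's proof (the paper extracts the alternating form $z^{c_0}\com{z}^{d_0}\cdots$ iteratively via the adjacency clause rather than via run-length sequences, but the two are equivalent). The genuine gap is the first bullet, $z_{i-1} = z_{i+1} = \com{\id}$, which you leave unproved on the grounds that \Cref{lemma: negative,lemma: positive} ``and their sign-flips'' cannot refute a configuration of two adjacent swapped variable blocks such as $\word[1] = \com{\id}\, z \com{z}\, z' \com{z'}\, \com{\id}$ versus $\word[2] = \com{\id}\, \com{z} z\, \com{z'} z'\, \com{\id}$, so that a separating valuation of a new kind would be needed. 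That claim is false, and it is exactly where your proof stops: you only tested sign flips that respect the symmetry of the configuration. Flip the sign of $z$ \emph{alone}, leaving $z'$ untouched. The words become $\com{\id}\, \com{z} z\, z' \com{z'}\, \com{\id}$ and $\com{\id}\, z \com{z}\, \com{z'} z'\, \com{\id}$; in the first, the two positive letters $z, z'$ are adjacent, while in the second they are separated by $\com{z}\,\com{z'}$. This violates the adjacency clause of \Cref{lemma: positive}, hence $\LANG_2 \not\models \word[1] = \word[2]$ after all.

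This asymmetric single-variable flip is precisely how the paper proves the first bullet: if $z_{i-1} \neq \com{\id}$, flip the sign of $z_{i-1}$ (this does not touch block $i$, since $z_{i-1} \neq z_i$) so that $\word[1]_{i-1}$ ends with a positive letter; then the first letter of $\word[1]_i$ (positive, since the block is swapped and, WLOG, starts with the variable) is adjacent in $\word[1]$ to the preceding positive letter, whereas in $\word[2]$ the corresponding positive letter is immediately preceded by $\com{z}$ and so is not adjacent to any earlier positive occurrence---contradicting \Cref{lemma: positive}. Note that this kills \emph{any} variable block to the left of a swapped block in one step, whether or not that neighbour is itself swapped: no leftward iteration, no analysis of a ``maximal stretch'' of swapped blocks, and no new valuation construction is needed; the same argument on the right side gives $z_{i+1} = \com{\id}$. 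So your proposal is recoverable, but as written it is missing the one idea that closes the case you yourself identify as the crux, and its assessment of that case as requiring machinery beyond \Cref{lemma: negative,lemma: positive} is incorrect.
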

\begin{proof}[Claim proof]
    Let $z = \com{z}_i$.
    If $\com{z} {(= z_i)} = \com{\id}$, then $\word[1]_{i} = \word[2]_{i}$.
    Otherwise, let $\word[1]_{i} = x_{0} \dots x_{n-1}$ and $\word[2]_{i} = y_{0} \dots y_{n-1}$ where $n > 0$ and $x_0, y_0, \dots, x_{n-1}, y_{n-1} \in \set{z, \com{z}}$.
    Note that $\len{\word[1]_{i}}_{z} = \len{\word[2]_{i}}_{z}$ and $\len{\word[1]_{i}}_{\com{z}} = \len{\word[2]_{i}}_{\com{z}}$.
    We distinguish the following cases:
    \begin{itemize}
        \item Case $x_{0} = y_{0}$:
        If $n = 0$, then $\word[1]_{i} = \word[2]_{i}$.
        Otherwise, $n \ge 1$.
        Assume $x_{1} \neq y_{1}$.
        Then $x_{0} = x_{1}$ or $y_{0} = y_{1}$ holds.
        By flipping the sign of $z$ and by swapping $\word[1]_i$ and $\word[2]_i$, WLOG, we can assume that $x_{0} = x_{1} = z$.
        Let $j > 1$ be the minimal number s.t.\ $y_{j} = z$ (such $j$ exists by $\len{\word[1]_{i}}_{z} = \len{\word[2]_{i}}_{z}$).
        Then this contradicts \Cref{lemma: positive}, because $x_0$ and $x_1$ are adjacent, but $y_0$ and $y_1$ are not.
        Thus, $x_{1} = y_{1}$.
        Using the same argument iteratively, we have $x_{j} = y_{j}$ for each $j$.
        Hence, $\word[1]_{i} = \word[2]_{i}$.
        \item Case $x_{0} \neq y_{0}$:
        By flipping the sign of $z$, WLOG, we can assume that $x_{0} = z$.
        Then, because $\len{\word[1]_{i}}_{\com{z}} \ge 1$ and $\len{\word[2]_{i}}_{z} \ge 1$,
        the \kl{words} $\word[1]_i$ and $\word[2]_i$ are of the following form where $c_0, d_0 > 0$:
        \begin{align*}
            \word[1]_i &= z^{c_0} \com{z} \word[1]_i', &\word[2]_i &= \com{z}^{d_0} z \word[2]_i'.
        \end{align*}
        By \Cref{lemma: positive}, moreover, $\word[1]_i$ and $\word[2]_i$ are of the following form:
        \begin{align*}
            \word[1]_i &= z^{c_0} \com{z}^{d_0} \word[1]_i'', & \word[2]_i &= \com{z}^{d_0} z^{c_0} \word[2]_i''.
        \end{align*}
        By applying the same argument iteratively, $\word[1]_{i}$ and $\word[2]_{i}$ of the form in this claim.
        The remaining part shows some additional conditions.
        If $i = 0$ (resp.\ $i = m - 1$), then this contradicts \Cref{lemma: positive}, as $x_0 \neq y_0$ (resp.\ $x_{m-1} \neq y_{m-1}$).
        If $z_{i-1} \neq \com{\id}$, then by flipping the sign of $z_{i-1}$ (note that $z_{i-1} \neq z_{i}$), WLOG, we can assume that the right-most \kl{variable} in $\word[1]_{i-1}$ is \kl{positive}.
        Let $j$ be the number such that the $j$-th \kl{positive} occurrence in $\word[1]$ is the \kl{letter} $x_0$ ($= z$).
        Then the $(j-1)$-th and the $j$-th \kl{positive} occurrences are adjacent in $\word[1]$ but not adjacent in $\word[2]$, and thus this contradicts \Cref{lemma: positive}.
        Hence, $z_{i-1} = \com{\id}$.
        By the same argument, we also have $z_{i+1} = \com{\id}$.
        Hence, this completes the proof.
    \end{itemize}
\end{proof}
By \Cref{lemma: positive'}, if $\word[1]_i \neq \word[2]_i$, then $\word[1]_i$ and $\word[2]_i$ are occurs in $\word[1]$ and $\word[2]$ as follows:
\begin{align*}
    \word[1] &= \dots \com{\id}z^{c_0} \com{z}^{d_0} \dots z^{c_{k-1}} \com{z}^{d_{k-1}}\com{\id} \dots,\\
    \word[2] &= \dots \com{\id}\com{z}^{d_0} z^{c_0} \dots \com{z}^{d_{k-1}} z^{c_{k-1}}\com{\id} \dots.
\end{align*}
where $\word[1]_i = z^{c_0} \com{z}^{d_0} \dots z^{c_{k-1}} \com{z}^{d_{k-1}}$ and $\word[2]_i = \com{z}^{d_0} z^{c_0} \dots \com{z}^{d_{k-1}} z^{c_{k-1}}$.
Hence, $\mathcal{E}_{2} \vdash \word[1] = \word[2]$.

\subsection{Remarks}
By the results in this section, for \kl{words} over $\tilde{\vsig}_{\com{\id}}$, we have:
\begin{align*}
    & \mathrm{EqT}(\LANG_{0}) \supsetneq \mathrm{EqT}(\LANG_{1}) \supsetneq \mathrm{EqT}(\LANG_{2}) = \mathrm{EqT}(\LANG_{3}) = \dots \\
    & = \mathrm{EqT}(\LANG_{n}) = \dots = \mathrm{EqT}(\LANG_{\aleph_0}) = \mathrm{EqT}(\LANG).
\end{align*}
Here, $\mathrm{EqT}(\algclass)$ denotes the \kl{equational theory} of a class $\algclass$ for \kl{words} over $\tilde{\vsig}_{\com{\id}}$.

Additionally, as an immediate consequence of \Cref{theorem: completeness word LANG2}, we have that if $\com{\id}$ does not occur, the \kl{equational theory} coincides with the \kl{word} equivalence.
\begin{cor}\label{corollary: completeness word LANG2}
    Let $\alpha \ge 2$.
    For all \kl{words} $\word[1], \word[2] \in \tilde{\vsig}^*$, we have:
   \[\LANG_{\alpha} \models \word[1] = \word[2] \quad\Leftrightarrow\quad \emptyset \vdash \word[1] = \word[2].\]
\end{cor}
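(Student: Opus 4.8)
The plan is to obtain this as a direct specialization of \Cref{theorem: completeness word LANG2}. Since $\tilde{\vsig}^{*} \subseteq \tilde{\vsig}_{\com{\id}}^{*}$, that theorem already supplies, for all $\word[1], \word[2] \in \tilde{\vsig}^{*}$, the equivalence $\LANG_{\alpha} \models \word[1] = \word[2]$ iff $\mathcal{E}_2 \vdash \word[1] = \word[2]$. Hence it remains only to show that, restricted to \kl{words} over $\tilde{\vsig}$, provability from $\mathcal{E}_2$ coincides with provability from $\emptyset$. One inclusion is free, as $(=_{\emptyset}) \subseteq (=_{\mathcal{E}_2})$; and since $(=_{\emptyset})$ is the minimal congruence subsuming no \kl{equation}, it is exactly syntactic equality, so $\emptyset \vdash \word[1] = \word[2]$ holds iff $\word[1]$ and $\word[2]$ are literally the same \kl{word}. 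Thus the whole task reduces to proving that $\mathcal{E}_2 \vdash \word[1] = \word[2]$ with $\word[1], \word[2] \in \tilde{\vsig}^{*}$ forces $\word[1] = \word[2]$.

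The key observation is that every \kl{equation} in $\mathcal{E}_2$ contains the \kl{letter} $\com{\id}$ on both sides---indeed, both sides begin and end with $\com{\id}$. I would use the presentation of $(=_{\mathcal{E}_2})$ as generated by congruence rewriting: $\word[1] =_{\mathcal{E}_2} \word[2]$ holds iff there is a finite chain $\word[1] = \word[3]_0, \word[3]_1, \dots, \word[3]_k = \word[2]$ in which each consecutive pair differs by replacing a factor equal to one side of some \kl{equation} of $\mathcal{E}_2$ by the other side. Now suppose $\word[1] \in \tilde{\vsig}^{*}$, so $\com{\id}$ does not occur in $\word[1]$. Then no factor of $\word[1]$ equals either side of any \kl{equation} in $\mathcal{E}_2$, since all those sides contain $\com{\id}$; hence no rewriting step applies to $\word[1]$. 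Consequently the first step of any chain starting at $\word[1]$ must be trivial, and by induction along the chain every $\word[3]_i$ equals $\word[1]$. Therefore the $(=_{\mathcal{E}_2})$-class of a $\com{\id}$-free \kl{word} is the singleton $\set{\word[1]}$, which yields $\word[2] = \word[1]$ and completes the argument.

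I expect no genuine obstacle here: all the work is already carried by \Cref{theorem: completeness word LANG2}, and the corollary is merely its restriction to $\com{\id}$-free \kl{words}, where the axioms of $\mathcal{E}_2$ become inapplicable. The single point deserving a line of care is the singleton-class claim, i.e.\ verifying that the reflexive--symmetric--transitive--congruent closure cannot leave $\tilde{\vsig}^{*}$ and later return to a different \kl{word}; but this is immediate once $(=_{\mathcal{E}_2})$ is expressed through single-step factor rewriting, because introducing a $\com{\id}$ would itself require applying an axiom to a \kl{word} having no factor that matches any side of $\mathcal{E}_2$.
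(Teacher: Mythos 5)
Your proposal is correct and matches the paper's own argument: the paper also derives the corollary directly from \Cref{theorem: completeness word LANG2}, observing that every \kl{equation} in $\mathcal{E}_2$ contains $\com{\id}$, so no axiom can be applied to a \kl{word} over $\tilde{\vsig}$ and $(=_{\mathcal{E}_2})$ restricted to such \kl{words} collapses to syntactic equality. Your write-up simply makes explicit (via the single-step factor-rewriting presentation of the congruence) what the paper leaves as a one-line remark.
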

\begin{proof}
    By \Cref{theorem: completeness word LANG2}, as all the \kl{equations} in $\mathcal{E}_2$ contains $\com{\id}$.
\end{proof}
\Cref{corollary: completeness word LANG2} strengthens \cite[Thm.\ 36]{nakamuraWordstoLettersValuationsLanguage2023} from one variable \kl{words} to many variables \kl{words}, which settles an open question given in \cite[p.\ 198]{nakamuraWordstoLettersValuationsLanguage2023}.
\begin{rem}
    Since $\ljump{\word[1]}_{\tilde{\vsig}} = \set{\word[1]}$, \Cref{corollary: completeness word LANG2} implies that for all \kl{words} $\word[1], \word[2]$ over $\tilde{\vsig}$,
    \[\LANG \models \word[1] = \word[2] \quad\Leftrightarrow\quad \ljump{\word[1]}_{_{\tilde{\vsig}}} = \ljump{\word[2]}_{_{\tilde{\vsig}}}.\]
    However, for general terms, the direction $\Rightarrow$ fails.
    For example, when $x \neq y$,
    \begin{align*}
        \LANG & \models x \union \com{x} =  y \union \com{y}, & \ljump{x \union \com{x}}_{_{\tilde{\vsig}}} \neq \ljump{y \union \com{y}}_{_{\tilde{\vsig}}}.
    \end{align*}
    Thus, we need more axioms to characterize the \kl{equational theory}.
\end{rem} 
\section{Conclusion and future work}\label{section: conclusion}
We have introduced \kl{words-to-letters valuations}.
By using them, we have shown the decidability and complexity of the identity/variable/word inclusion problems (\Cref{corollary: identity decidable,corollary: variable decidable,corollary: word decidable}) and the universality problem (\Cref{corollary: universality decidable}) of the \kl{equational theory w.r.t.\ languages} for $\KAtermclass_{\set{\com{x}, \com{\id}}}$ \kl{terms}; in particular, the \kl[equational theory]{inequational theory} $\term[1] \le \term[2]$ is coNP-complete when $\term[1]$ does not contain Kleene-star (\Cref{corollary: star-free}).
We summarize the complexity result in \Cref{table: comparison}. We leave open the (finite) axiomatizability of the \kl{equational theory} of $\LANG$.
\begin{table}[ht]
    \centering
    \begin{tabular}{l||c|c||c}
        \hline
                                                                                                & \multicolumn{2}{c||}{$\LANG \models \term[1] \le \term[2]$} & {$\ljump{\term[1]} \subseteq \ljump{\term[2]}$ where $\vsig$ finite}                                                                                                                          \\
        \cline{2-3}\cline{4-4}
                                                                                                & complexity                                        & $\mathrm{l}(\term)$                                                                              & {complexity}\\
                                                                                                && (\Cref{corollary: bounded alphabet}) & (\hspace{-.05em}\cite{Meyer1973}\hspace{-.05em}\cite[Thm.\ 2.6]{Hunt1976})                                 \\
        \hline
        $\term[1] = \id$                                                                        & \textcolor{black}{coNP}-c (\Cref{corollary: identity decidable})                          & $0$                                                                          & {in \textcolor{black}{P}}      \\
        \hline
        $\term[1] = x$ ($x \in \vsig$)            & \textcolor{black}{coNP}-c (\Cref{corollary: variable decidable})                         & $1$                                                                    & in P        \\
        $\term[1] = \com{x}$ ($x \in \vsig$)            & \textcolor{black}{coNP}-c (\Cref{corollary: variable decidable})                         & $1$                                                                    & {\textcolor{black}{PSPACE}-c}                 \\
        $\term = \top$                                                                      & \textcolor{black}{coNP}-c (\Cref{corollary: universality decidable})                         & $1$                                                                    & {\textcolor{black}{PSPACE}-c} \\
        \hline
        $\term = \word$ ($\word \in \tilde{\vsig}^*$)  & \textcolor{black}{coNP}-c (\Cref{corollary: word decidable})                         & $\le \len{\word}$                  & {\textcolor{black}{PSPACE}-c}                 \\
        $\term$ is $\bl^{*}$-free                 & \textcolor{black}{coNP}-c  (\Cref{corollary: star-free})        & $\le \len{\term}$              & {\textcolor{black}{PSPACE}-c}                 \\
        \hline
        (unrestricted)                                                                     & PSPACE-c \cite{nakamuraFiniteRelationalSemantics2025}                                & $\omega$ & {\textcolor{black}{PSPACE}-c}                 \\
        \hline
    \end{tabular}
    \caption{Comparison between $\LANG$ and the standard language \kl{valuation} $\ljump{\bl}$ for $\KAtermclass_{\set{\com{x}, \com{\id}}}$.}
    \label{table: comparison}
\end{table}

Moreover, we have considered the \kl{equational theories} of $\LANG_{n}$ (where $n$ is bounded) and have shown that the hierarchy is infinite for $\KAtermclass_{\set{-}}$ \kl{terms} (\Cref{theorem: separation gen}).
We leave it open for $\KAtermclass_{\set{\com{x}, \com{\id}}}$ \kl{terms} and its some fragments (\Cref{remark: open strictness}).
Additionally, we have proved the completeness theorem for the \kl{word} fragment of $\KAtermclass_{\set{\com{x}, \com{\id}}}$ \kl{terms} w.r.t.\ languages (\Cref{theorem: completeness word LANG2}); as a corollary, the hierarchy is collapsed for the \kl{word} fragment of $\KAtermclass_{\set{\com{x}, \com{\id}}}$ \kl{terms}.
We also leave open the decidability/complexity and the (finite) axiomatizability of the \kl{equational theory} of $\LANG_{n}$ (cf.\ \Cref{table: comparison}).
 
\section*{Acknowledgements}
This work was supported by JSPS KAKENHI Grant Number JP21K13828 and JST ACT-X Grant Number JPMJAX210B, Japan.
\bibliographystyle{ws-ijfcs}
\bibliography{main}

\end{document}